\documentclass[journal]{IEEEtran}
\usepackage{enumerate}
\usepackage{amssymb,stmaryrd,amsmath,amsfonts,rotating}
\usepackage[noadjust]{cite} \usepackage{color} 
\usepackage[vflt]{floatflt} \usepackage{epic}
\usepackage{color}
\usepackage{verse}
\usepackage[mathcal]{euscript}
\usepackage{mathrsfs}
\usepackage{dsfont}
\usepackage{mathtools}

\DeclarePairedDelimiter\floor{\lfloor}{\rfloor}

\newcommand{\ent}{\ensuremath{{\tt{h}}}}

\newcommand{\Kchannel}{{{\rm K}^{\Ldens{c}}}}
\newcommand{\Kcheck}{{{\rm K}^p}}
\newcommand{\Kvar}{{{\rm K}^v}}
\newcommand{\Kalt}{\lambda\Kvar}
\newcommand{\Kgen}{{\rm K}}
\newcommand{\Dd}{{\Ldens{D}}}
\RequirePackage{bbm}
\definecolor{darkgreen}{rgb}{0.14,0.5,0.14}


\newtheorem{theorem}{Theorem}
\newcommand{\btheo}{\begin{theorem}}
\newcommand{\etheo}{\end{theorem}}
\newcommand{\bproof}{\begin{proof}}
\newcommand{\eproof}{\end{proof}}
\newtheorem{definition}[theorem]{Definition}
\newcommand{\bdefi}{\begin{definition}}
\newcommand{\edefi}{\end{definition}}
\newtheorem{fact}[theorem]{Fact}
\newcommand{\bprop}{\begin{fact}}
\newcommand{\eprop}{\end{fact}}
\newtheorem{corollary}[theorem]{Corollary}
\newcommand{\bcor}{\begin{corollary}}
\newcommand{\ecor}{\end{corollary}}
\newtheorem{example}[theorem]{Example}
\newcommand{\bex}{\begin{example}}
\newcommand{\eex}{\end{example}}
\newtheorem{lemma}[theorem]{Lemma}
\newcommand{\blemma}{\begin{lemma}}
\newcommand{\elemma}{\end{lemma}}
\newtheorem{remark}[theorem]{Remark}
\newcommand{\bremark}{\begin{remark}}
\newcommand{\eremark}{\end{remark}}
\newtheorem{conj}[theorem]{Conjecture}
\newcommand{\bconj}{\begin{conj}}
\newcommand{\econj}{\end{conj}}




\newcommand{\naturals}{\ensuremath{\mathbb{N}}}

\newcommand{\expectation}{\ensuremath{\mathbb{E}}}

\newcommand{\prob}{\ensuremath{\mathbb{P}}}

\def\0{{\tt 0}} 
\def\1{{\tt 1}} 
\def\?{{\tt *}} 
 %



\newcommand{\dee}{{\text d}}
\newcommand{\BPsmall}{\ensuremath{\text{\tiny BP}}} 
 %
 %



\newcommand{\qed}{{\hfill \footnotesize $\blacksquare$}}
\renewcommand{\mid}{\,|\,}

\newcommand{\dens}[1]{\mathsf{#1}}

\newcommand{\Ldens}[1]{\dens{#1}}

\newcommand{\Ddens}[1]{\mathfrak{{#1}}}

\newcommand{\BMS}{\ensuremath{\text{BMS}}}


\newcommand{\absDdist}[1]{\absd{\mathfrak{\MakeUppercase{#1}}}}
\newcommand{\absDdens}[1]{\absd{\Ddens{#1}}}


\newcommand{\absd}[1]{|#1|}

\newcommand{\dr}{d_r}
\newcommand{\dl}{d_l}



\DeclareMathOperator{\perr}{\mathfrak{E}}

\newcommand{\ind}{\mathbbm{1}}
\newcommand{\indicator}[1]{\ind_{\{ #1 \}}}

\DeclareMathOperator{\batta}{\mathfrak{B}}      
\newcommand{\entropy}{\text{H}}

\newcommand{\Lip}{\text{Lip}}




\newcommand{\vconv}{\circledast}
\newcommand{\cconv}{\boxast}

\newcommand{\SatLdens}[1]{\floor{\Ldens{#1}}}
\newcommand{\altSatLdens}[1]{\big\lfloor\!\!\lfloor \Ldens{#1}\rfloor\!\!\big\rfloor}
\newcommand{\sym}[1]{{#1_{\text{\tiny sym}}}}
\newcommand{\SatabsDdist}[2]{\absd{\lfloor\mathfrak{\MakeUppercase{#1}}\rfloor_{#2}}}
\newdimen\arrayruleHwidth
\setlength{\arrayruleHwidth}{1.5pt}
\makeatletter
\def\Hline{\noalign{\ifnum0=`}\fi\hrule \@height \arrayruleHwidth
   \futurelet \@tempa\@xhline}
	\makeatother
\allowdisplaybreaks
\begin{document}
\title{Analysis of Saturated Belief Propagation Decoding of Low-Density Parity-Check Codes}
\author{\IEEEauthorblockN{Shrinivas Kudekar, Tom Richardson and Aravind Iyengar \\ }
\IEEEauthorblockA{Qualcomm, New Jersey, USA\\ Email: \{skudekar,tomr,ariyengar\}@qti.qualcomm.com} \\
}

\maketitle
\thispagestyle{plain} 
\pagestyle{plain}
\begin{abstract}
We consider the effect of log-likelihood ratio saturation on belief propagation decoder low-density parity-check codes. 
Saturation is commonly done in practice and is known to have a significant effect on error floor performance.
Our focus is on threshold analysis and stability of density evolution.

We analyze the decoder for standard low-density parity-check code ensembles and show that 
belief propagation decoding generally degrades gracefully with saturation.
Stability of density evolution is, on the other hand, rather strongly effected by saturation and
the asymptotic qualitative effect of saturation is similar to reduction by one of variable node degree.

We also show under what conditions the block threshold
for the saturated belief propagation corresponds with the bit threshold.
\end{abstract}

\section{Introduction}  Standard belief propagation (BP) decoding of binary low-density parity-check (LDPC) codes involves passing messages typically representing log-likelihood ratios (LLRs) 
which can take any value in $\overline{\mathbb{R}} \triangleq
\mathbb{R}\cup \{\pm\infty\}$ \cite{RiU08}. The asymptotic analysis developed for BP decoding of LDPC codes inherently assumes that the messages have unbounded magnitude. In practice, however,
decoders typically use uniformly quantized and bound LLRs. 
Density evolution can be applied directly to such decoders but analysis
is often difficult and there are few general results.
Hence, it is of interest to understand the effect of saturation of LLR magnitudes
as a perturbation of full belief propagation. We call such a saturated decoder as a {\em saturating} belief propagation decoder (SatBP). Note that the decoder is strictly speaking not a BP decoder, but we adhere to the BP
nomenclature as we view SatBP as a perturbation of BP. 

In the design of capacity-achieving codes it is helpful to understand
how practical decoder concessions, like saturation, affect performance. For this purpose, we will analyze the SatBP decoder in the asymptotic limit of the blocklength going to infinity. 
In particular, if LLRs are saturated at magnitude $\Kgen$ then how much degradation
from the BP threshold should be expected.  
Naturally, one expects that as $\Kgen\to +\infty$, that one can reliably transmit
arbitrarily close to the BP threshold \cite{RiU08}.
We will see that this is not entirely correct and that, in particular, saturation can
undermine the stability of the perfect decoding fixed point if, for example, the fraction of degree two variable nodes in an irregular ensemble is non-zero. 
Our analysis shows that when the minimum variable node degree is at least three then there exists a large but finite saturation value $\Kgen$ such that the SatBP decoder can achieve arbitrarily small bit error rate whenever the full BP decoder can achieve arbitrarily small bit error rate.
Furthermore, a more careful stability analysis shows that in fact one can achieve reliability in terms of the block error rate.

\subsection{Related Work}
The papers \cite{6290251,6120169,5485006, 6567866} consider the effect of saturation on error floor performance.  It is observed in these works that saturation can limit the ability of decoding to escape trapping set behavior, thereby worsening error floor performance.
In \cite{6284049,6685976} some decoder variations are given that help reduce error floors.
Here we see an explicit effort to ameliorate the effect of saturation.
A related but distinct direction was taken in \cite{VNC14}.  There the authors made modifications to discrete node update rules so as to reduce error floor failure events. 
They fine tune finite state message update rules to optimize performance on a particular graph structure.
There have been other works that examine the effects of practical concessions.
In \cite{6134417} the authors consider the effect of quantization in LDPC coded flash memories.
In \cite{5624882} and\cite{6363268}  the effects of saturation and quantization are
modeled as noise terms.
Finally, in \cite{5205826} an analysis is done to evaluate the effect on capacity on quantization of
channel outputs. Although we take a different approach in this paper by focusing on asymptotic behavior, the fundamental conclusion is similar to the error floor results in \cite{6290251,6120169,5485006, 6567866}: saturation can dramatically effect the stability of the decoder. 

The paper is organized as follows. In the next section we will briefly review the standard asymptotic analysis of the BP decoder using density evolution (DE). 
Then in sections \ref{sec:SatBP} and \ref{sec:nonsymSatBP} we will introduce the SatBP decoder and perform perturbation analysis on the SatBP decoder using the Wasserstein metric \cite{KRU12}. 
In section \ref{sec:stabilityanalysis} we will use stability analysis to examine block thresholds for
SatBP.  We will see that in many cases the block threshold will correspond with the bit threshold,
but the conditions required are more stringent than in the non-saturated decoder case.

\section{BP decoding,  Density Evolution and the Wasserstein Distance}\label{sec:BPDEWass} In this
section we briefly review the BP decoder and the DE analysis \cite{RiU01} in the case of transmission over a
general BMS channel using standard LDPC code ensemble. Most of the material presented here can be
found in \cite{RiU08}.

We assume transmission over a BMS channel. Let $X (=\pm 1)$ denote the input and let $Y$
be the output. Further, let $p(Y=y \mid X=x)$ denote the {\em transition
probability} describing the channel. We generally characterize
a BMS channel
by its so-called $L$-distribution, $\Ldens{c}$. 
More precisely, $\Ldens{c}$ is the distribution of
\begin{align*} \ln \frac{p(Y \mid X= +1)}{p(Y \mid X=-1)} \end{align*}
conditioned that $X=+1.$  Generally, we may assume that 
\[
Y=\ln \frac{p(Y \mid X= +1)}{p(Y \mid X=-1)}\,.
\]
The symmetry of the channel is $p(Y=y \mid X=x) = p(Y=-y \mid X=-x)$
and the resulting densities $\Ldens{c}$ are symmetric, \cite{RiU08},
which means $e^{-\frac{1}{2}x}\Ldens{c}(x)$ is an even function of $x.$

Given $Z$ distributed according to $\Ldens{c}$, we write $\Ddens{c}$
to denote the distribution of $\tanh(Z/2),$ and
$\absDdens{c}$
to denote the distribution of $|\tanh(Z/2)|.$
We refer to these as the $\Dd$ and $|D|$ distributions respectively.
We use $\absDdist{c}$ to denote the corresponding cumulative $|D|$ distribution,
see \cite[Section~4.1.4]{RiU08}. 
Under symmetry, the distribution of $|Z|$ determines the distribution of $Z.$

For threshold analysis of LDPC ensembles we typically consider a
parameterized {\em family} of channels. We write $\{ \BMS(\sigma)\}$ to
denote the family parameterized by the scalar $\sigma$. Often it
will be more convenient to denote this family by $\{\Ldens{c}_\sigma\}$,
i.e., to use the family of $L$-densities which characterize the
channel family. 
One natural candidate for the parameter $\sigma$ is the 
entropy of the channel denoted by $\ent$. Thus, we also consider the characterization
of the family given by BMS($\ent$). 

\subsection{Degradation, Symmetric Densities and Functionals of Densities}
Let $p_{Z \mid X}(z\mid x)$ denote the transition probability
associated to a BMS channel $\Ldens{c}'$ and let $p_{Y \mid X}(y
\mid x)$ denote the transition probability of another BMS channel
$\Ldens{c}$.  We then say that $\Ldens{c}'$ is {\em degraded} with
respect to $\Ldens{c}$ if there exists a channel $p_{Z \mid Y}(z\mid
y)$ so that \begin{align*} p_{Z \mid X}(z \mid x) = \sum_{y} p_{Y \mid X}(y\mid
x) p_{Z \mid Y}(z\mid y).  \end{align*} We will use the notation
$\Ldens{c} \prec \Ldens{c}'$ to denote that $\Ldens{c}'$ is degraded
with respect to $\Ldens{c}$ (as a mnemonic think of $\Ldens{c}$ as the erasure
probability of a BEC and replace $\prec$ with $\leq$).

A useful characterization of degradation, see \cite{HaL69}, \cite[Theorem
4.74]{RiU08}, is that $\Ldens{c} \prec \Ldens{c}'$ is equivalent
to 
\begin{align}\label{equ:degradation} \int_0^1 f(x) \absDdens{c}(x)
\,\dee x \leq  \int_0^1 f(x) \absDdens{c'}(x) \,\dee x 
\end{align}
for all $f(x)$ that are non-increasing and concave on $[0,1]$.
In
particular, this characterization implies that $F(\Ldens{a})\leq
F(\Ldens{b})$ for $\Ldens{a} \prec \Ldens{b}$ if $F(\cdot)$ is
either the Battacharyya or the entropy functional.  This is true
since both are linear functionals of the distributions and their
respective kernels in the $|D|$-domain are decreasing and concave, see \cite{RiU08}.
An alternative characterization \cite{RiU08} of degradation in terms of the cumulative distribution
functions $\absDdist{c}(x)$ and $\absDdist{c'}(x)$ is that for all
$z \in [0, 1]$, 
\begin{align}\label{equ:degradationcdfs} \int_z^1
\absDdist{c}(x) \dee x \leq  \int_z^1 \absDdist{c'}(x) \,\dee x.
\end{align}

A BMS channel family $\{\BMS(\ent)\}_{\underline{\ent}}^{\overline{\ent}}$ is said to be {\em ordered} (by
degradation) if $\ent_1 \leq \ent_2$ implies $\Ldens{c}_{\ent_1}
\prec \Ldens{c}_{\ent_2}$.
(The reverse order, $\ent_1 \geq \ent_2,$ is also allowed but we
generally stick to the stated convention.)

\begin{definition}[Symmetric Densities]
Let $A$ denote an $L$-distribution in $\overline{\mathbb{R}} = \mathbb{R} \cup \{\pm\infty\}$.
Then $A$ is {\em symmetric} if it satisfies the following condition for every bounded, continuous function $f : \mathbb{R} \mapsto \mathbb{R}$,
\begin{equation} \label{eqn:lsym}
\int f(x)\mathrm{d}A(x) = \int e^{-x}f(-x)\mathrm{d}A(x).
\end{equation}

We say that an $L$-density $\Ldens{a}$ is {\em symmetric} if
$\Ldens{a}(-y) = \Ldens{a}(y) e^{-y}$. We recall that all densities
which stem from BMS channels are symmetric, see \cite[Sections
4.1.4, 4.1.8 and 4.1.9]{RiU08}. \qed 
\end{definition}

Functionals of densities often used in analysis are the Battacharyya, the entropy, and
the error probability functional.
For a density $\Ldens{a}$, these are denoted by $\batta(\Ldens{a})$,
$\entropy(\Ldens{a})$, and $\perr(\Ldens{a})$, respectively and are defined by
\begin{align*}
\batta(\Ldens{a}) & =\expectation( e^{-y/2}), \;\;
\entropy(\Ldens{a})  = \expectation( \log_2(1\!+\!e^{-y}))  \\
\perr(\Ldens{a}) & = \prob{\{y<0\}} + \frac{1}{2} \prob \{y=0\}.
\end{align*}
where $y$ is distributed according to $\Ldens{a}.$
Note that these definitions are valid even if $\Ldens{a}$ is not symmetric, although they
lose some of their original meaning.  We will apply these definitions to saturated densities
that are not necessarily symmetric. It is not hard to see that $\perr(\Ldens{a}) \leq \batta(\Ldens{a})$ for any density $\Ldens{a}$ not necessarily symmetric. Hence in the paper the main functional of interest is the Battarcharyya parameter.

%

\subsection{BP Decoder, DE analysis and the Wasserstein metric}

The definition of the standard BP decoder can be found in \cite{RiU08}. The
asymptotic performance of the BP decoder is given by the DE
technique \cite{RiU01, RiU08}.  Throughout the paper we will consider
standard  LDPC code ensembles as specified by their degree distributions \cite{RiU08}.
The analysis can be applied to more sophisticated structures, but we restrict to this case
for simplicity of presentation.
Thus we let $\lambda(\cdot)$ and $\rho(\cdot)$ represent the variable node and check node degree profile respectively. The ensemble is then denoted by $(\lambda, \rho)$. 

\begin{definition}[DE for BP Decoder cf. \cite{RiU08}]
For $\ell \geq 1$, the DE equation for a
$(\lambda, \rho)$ ensemble is given by
$$
\Ldens{x}_{\ell} = \Ldens{c} \vconv \lambda(\rho(\Ldens{x}_{\ell-1})).
$$
Here, $\Ldens{c}$ is the $L$-density of the BMS channel over which
transmission takes place and $\Ldens{x}_{\ell}$ is the density emitted by
variable nodes in the $\ell$-th round of density evolution. 
Initially we have $\Ldens{x}_{0}=\Delta_0$, the delta function at $0$. The operators
$\vconv$ and $\cconv$ correspond to the convolution of densities at
variable and check nodes, respectively, see \cite[Section 4.1.4]{RiU08}. The notation $\rho(\Ldens{x}_{\ell-1})$ represents the weighted check node convolution of the density $\Ldens{x}_{\ell-1}$. E.g., if $\rho(x) = x^{\dr-1}$, then $\rho(\Ldens{x}_{\ell-1}) = \Ldens{x}^{\cconv \dr-1}_{\ell-1}$.
\qed
\end{definition}
{\em Discussion}: For $(\dl, \dr)-$regular codes, the DE equation is given by $
\Ldens{x}_{\ell} = \Ldens{c} \vconv (\Ldens{x}^{\cconv \dr-1}_{\ell-1})^{\vconv \dl-1}.
$ The DE analysis is simplified when we consider the class of
symmetric message-passing decoders. The definition of symmetric message-passing
decoders can be found in \cite{RiU08}. Note that this definition of symmetry
pertains to the actual messages in the decoder and not to the densities which
appear in the DE analysis. We will see later that the saturated
decoder is a symmetric message-passing decoder and hence its DE analysis is
simplified by restricting to the use of the all zero (actually we use $+1$ for 'zero') codeword.

\begin{definition}[BP Threshold]
Consider an ordered and complete channel family $\{\Ldens{c}_\ent\}$.
Let $\Ldens{x}_{\ell}(\ent)$ denote the distribution in the $\ell$-th round
of DE when the channel is $\Ldens{c}_\ent$.
Then the {\em BP threshold} of the $(\lambda, \rho)$ ensemble 
is typically defined as
\begin{align*}
\ent^{\BPsmall}(\lambda, \rho, \{\Ldens{c}_\ent\}) & = \sup\{\ent: \Ldens{x}_{\ell}(\ent) \stackrel{\ell \to \infty}{\rightarrow} \Delta_{+\infty}\}.
\end{align*}
Here $\Delta_{+\infty}$ is the delta function at infinity representing the perfect decoding density.
An equivalent definition is
\begin{align*}
\ent^{\BPsmall}(\lambda, \rho, \{\Ldens{c}_\ent\}) & = \sup\{\ent: \perr(\Ldens{x}_{\ell}(\ent)) \stackrel{\ell \to \infty}{\rightarrow} 0\}.
\end{align*}
The later form is more convenient for our purposes and it is the one we shall adopt.
\qed
\end{definition}

We will also say that for a given channel $\Ldens{c}$, the BP decoder is {\em successful} if and only if $\perr(\Ldens{x}_{\ell}(\ent)) \stackrel{\ell \to \infty}{\rightarrow} 0$ or $\batta(\Ldens{x}_{\ell}(\ent)) \stackrel{\ell \to \infty}{\rightarrow} 0$. In other words, for any given $\epsilon > 0$, there exists $\ell$ such that $\batta(\Ldens{x}_{\ell}(\ent)) < \epsilon$.

In the sequel we will use the Wasserstein metric to measure distance between distributions.
We recall the definition of the Wasserstein metric below. For more properties of the Wasserstein metric see \cite{KRU11Wasserstein}.
\begin{definition}[Wasserstein Metric -- \protect{\cite[Chapter 6]{Villani09}}]\label{def:wasserstein}
Let $\absDdens{a}$ and $\absDdens{b}$ denote two $|D|$-distributions.
The Wasserstein  metric,
denoted by $d(\absDdens{a}, \absDdens{b})$, is defined as
\begin{align}\label{eq:blmetric}
d(\absDdens{a}, \absDdens{b})=\!\!\!\!\!\!\sup_{f(x) \in \Lip(1)[0, 1]} \!\Big\vert \int_{0}^{1} \!\!f(x)(\absDdens{a}(x)\!-\! \absDdens{b}(x)) \,\dee x \Big\vert,
\end{align}
where $\Lip(1)[0, 1]$ denotes the class of Lipschitz continuous functions on $[0, 1]$
with Lipschitz constant $1$.

In \cite{KRU12b} it is shown that the Wasserstein distance is equivalent to the
$L_1$ norm of the difference between the $|D|$-distributions.
\qed
\end{definition}

\section{Saturated Belief Propagation Decoding}\label{sec:SatBP}

In this section we introduce the saturated BP decoder. More precisely, 
we consider decoding with BP update rules at the nodes but the outgoing messages are restricted to the domain
$[-\Kgen ,\Kgen]$ for some $\Kgen > 0$ by saturation. 

\subsection{Saturated Decoder}
\begin{definition}[Saturation]
We define the \emph{saturation} operation at $\pm \Kgen$ for some $\Kgen \in \mathbb{R}^+$, denoted $\lfloor\cdot\rfloor_\Kgen$, by
\begin{equation} \label{eqn:clip}
\lfloor x \rfloor_\Kgen = \min(\Kgen, |x|)\cdot\mathrm{sgn}(x),
\end{equation}
where
\begin{align*}
\mathrm{sgn}(x) = \begin{cases}
-1, &x < 0\\
1, &x \geq 0
\end{cases}.
\end{align*}
\end{definition}

\begin{definition}[Saturated BP Decoder]\label{def:clippedBPdecoder}
Consider the standard $(\dl, \dr)$-regular ensemble. 
The saturated BP decoder is defined by the following rules. 
Let $\phi^{(\ell)}(\mu_1,\dots,\mu_{\dr-1})$ and $\psi^{(\ell)}(\mu_1, \dots,
\mu_{\dl-1})$ denote the outgoing message from the
check node and the variable node side respectively. Abusing the notation above, 
$\mu_1,\dots, \mu_{.}$ denotes the incoming messages on both the check node and
the variable node side. Then,
\begin{align*}
\phi^{(\ell)}(\mu_1,\dots,\mu_{\dr-1}) = \left\lfloor
2\tanh^{-1}\left(\prod_{i=1}^{\dr-1}\tanh(\mu_i/2)\right)\right\rfloor_\Kgen, \\
\psi^{(\ell)}(\mu_1,\dots,\mu_{\dl-1}) = \left\lfloor \mu_0 + \sum_{i=1}^{\dl-1}
\mu_i \right\rfloor_\Kgen,
\end{align*}
where $\mu_0$ is the message coming from the channel. Also, we set
$\phi^{(0)}(\mu_1,\dots,\mu_{\dr-1}) = 0$.
\end{definition}

\begin{lemma}[SatBP Decoder is symmetric]\label{lem:symmClipped}
The SatBP decoder given in Definition~\ref{def:clippedBPdecoder} is a symmetric message-passing decoder.
\end{lemma}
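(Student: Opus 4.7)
The plan is to verify the two defining conditions (see Richardson--Urbanke, \cite[Chapter~4]{RiU08}) of a symmetric message-passing decoder: namely, that the check-node update map satisfies
\[
\phi^{(\ell)}(b_1\mu_1,\dots,b_{\dr-1}\mu_{\dr-1}) \;=\; \Big(\prod_{i=1}^{\dr-1} b_i\Big)\,\phi^{(\ell)}(\mu_1,\dots,\mu_{\dr-1})
\]
for every choice of signs $b_i\in\{\pm1\}$, and that the variable-node update map satisfies
\[
\psi^{(\ell)}(b\mu_0, b\mu_1,\dots,b\mu_{\dl-1}) \;=\; b\,\psi^{(\ell)}(\mu_0,\mu_1,\dots,\mu_{\dl-1})
\]
for every $b\in\{\pm1\}$. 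Since the unsaturated BP check- and variable-node maps $2\tanh^{-1}(\prod_i \tanh(\mu_i/2))$ and $\mu_0+\sum_i \mu_i$ are classical examples satisfying these identities, the entire question reduces to showing that the outer saturation $\lfloor\cdot\rfloor_\Kgen$ preserves them.

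The key lemma, which I would prove first as a one-line observation, is that the saturation operator is an \emph{odd} function: $\lfloor -x\rfloor_\Kgen = -\lfloor x\rfloor_\Kgen$ for all $x\in\mathbb{R}$. This requires a small sanity check because $\mathrm{sgn}(0)$ is defined to be $+1$ rather than $0$, so oddness is not immediate from the definition. However, at $x=0$ both sides evaluate to $0$, and for $x\neq 0$ we have $|-x|=|x|$ and $\mathrm{sgn}(-x)=-\mathrm{sgn}(x)$, so oddness holds in every case.

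With this in hand, the two conditions follow by short algebraic chases. For the check-node rule, I would use that $\tanh$ is odd, so $\tanh(b_i\mu_i/2)=b_i\tanh(\mu_i/2)$; the signs factor out of the product as $\prod_i b_i\in\{\pm1\}$, then pull through the odd function $2\tanh^{-1}$, and finally through the odd saturation. For the variable-node rule I would simply observe that $b\mu_0+\sum_i b\mu_i = b(\mu_0+\sum_i\mu_i)$ and apply oddness of $\lfloor\cdot\rfloor_\Kgen$ once.

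There is no real obstacle here; the argument is entirely routine once the oddness of $\lfloor\cdot\rfloor_\Kgen$ is noted. The only item worth flagging is the aforementioned convention $\mathrm{sgn}(0)=+1$, which might superficially suggest an asymmetry at the origin but is neutralized by the multiplication with $|x|=0$ in the definition of saturation.
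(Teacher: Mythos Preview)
Your proposal is correct and follows essentially the same approach as the paper's proof: both arguments reduce the verification of the two symmetry conditions to the oddness of the saturation map $\lfloor\cdot\rfloor_\Kgen$, then use the standard oddness of $\tanh$, $\tanh^{-1}$, and the sum to push the sign factors through. Your treatment is in fact slightly more careful in explicitly checking the $x=0$ case given the convention $\mathrm{sgn}(0)=+1$.
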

\begin{proof}
From Definition 4.83 in \cite{RiU08}  it is not hard to see that variable-node symmetry is satisfied for $\ell=0$. In general, variable node
symmetry is the following condition (for $\ell\geq 1$) on the message update function
\begin{align*}
 \psi^{(\ell)}(-\mu_0, -\mu_1, \dots, -\mu_{\dl-1}) 
 = -\psi^{(\ell)}(\mu_0, \mu_1, \dots, \mu_{\dl-1}). 
\end{align*}
Since $\lfloor x \rfloor_\Kgen = -\lfloor -x \rfloor_\Kgen$ we see that variable node
symmetry is preserved by saturation.
Let $b_1\in \{\pm 1\},\dots,b_{\dr-1} \in \{\pm 1\}$, then by Definition 4.83 in \cite{RiU08}, for the check node symmetry we have
\begin{align*}
& \phi^{(\ell)}(b_1\mu_1, \dots, b_{\dr-1}\mu_{\dr-1}) \\
& =  \min\Big(2\tanh^{-1}\!\!\Big(\prod_{i=1}^{\dr-1} \tanh(\vert\mu_i\vert/2)\Big), K\Big)\mathrm{sgn}\Big(\!\prod_{i=1}^{\dr-1} b_i \mu_i\Big) \\  
& = \min\Big(2\tanh^{-1}\!\!\Big(\prod_{i=1}^{\dr-1} \tanh(\vert\mu_i\vert/2)\Big), K\Big)\mathrm{sgn}\Big(\!\prod_{i=1}^{\dr-1} \mu_i\!\Big)\!\! \prod_{i=1}^{\dr-1}\!\! b_i \\
& = \phi^{(\ell)}(\mu_1, \dots, \mu_{\dr-1})  \Big(\prod_{i=1}^{\dr-1} b_i\Big).
\end{align*}
and we see again that symmetry is preserved by saturation.
\end{proof}

{\em Discussion:} The symmetry of the message-passing decoder together with symmetry of the channel allows us to use the all-zero codeword assumption. This along with the concentration results (see
Theorem 4.94 in \cite{RiU08}) allows to write down the density evolution of the
SatBP decoder in the usual way.
Note that if messages entering a check node are saturated in magnitude at $\Kgen$ then outgoing messages are automatically saturated at $\Kgen.$ This holds not just for BP but for many message passing algorithms such as the min-sum algorithm. Our analysis has two parts: bounding the effect of saturation over finitely many iterations and stability analysis.  For the bounding analysis we focus on BP although the technique can be easily extended to other decoders.  In the stability analysis we explicitly relax the assumptions to cover a variety of check node updates.

Given $X \sim \Ldens{a}$, let $\SatLdens{a}_\Kgen$ denote the distribution of 
$\lfloor X \rfloor_\Kgen.$  Note that the saturation operation can be viewed as a channel taking
$X$ to $\lfloor X \rfloor_\Kgen.$  We have immediately
\[
\Ldens{a} \prec \SatLdens{a}_\Kgen\,.
\]
In general $\SatLdens{a}_\Kgen$ will not be symmetric even if $\Ldens{a}$ is symmetric
since we will not typically have $\SatLdens{a}_\Kgen(-\Kgen)= e^{-\Kgen} \SatLdens{a}_\Kgen(\Kgen).$
If $\Ldens{a}$ is symmetric then we will have
\begin{equation}
\SatLdens{a}_\Kgen(-\Kgen)  \le  e^{-\Kgen} \SatLdens{a}_\Kgen(\Kgen).
\end{equation}

Although using lemma~\ref{lem:symmClipped} one can write down the DE recursion
for the SatBP decoder, we know that in general the densities will not be
symmetric. Two of the most useful properties of DE for BP are that it preserves both symmetry of 
densities and ordering by degradation.  These properties are sacrificed by saturation, but can be recovered
with a slight variation.  There are two alternatives for this.  One is to place the saturated probability mass at
$\pm z$ instead at $\pm \Kgen$ where $z$ is chosen according to the actual LLR conditioned on magnitude $\Kgen.$
The second alternative is to slightly degrade the density by moving some probability mass from
$\Kgen$ to $-\Kgen.$  This can be interpreted operationally as flipping the sign of a message with magnitude
$\Kgen$ with some probability $\gamma.$  The flipping rate $\gamma$ is chosen so that the resulting 
probability that the sign of the message is incorrect is $e^{-\Kgen}/(1+e^{-\Kgen}).$  In general $\gamma$
is upper bounded by this value and for large $\Kgen$ this is a small perturbation.  Of the two approaches
the second is inferior in that it degrades the channel more than the first. On the other hand,
the second approach preserves ordering by degradation while the first does not. We shall adopt the 
second approach.

Let us introduce the notation $D(p,z)$ to denote the density
\[
D(p,z) = p \Delta_{-z} + (1-p) \Delta_{z}\,.
\]
Here $\Delta_z$ ($\Delta_{-z}$ ) is the delta function at $z$ ($-z$).
We  will sometimes denote $p \Delta_{-z}$ as $D_-(p,z)$ and
$(1-p) \Delta_{z}$ as $D_+(p,z).$
When $(p,z)$ is clear from context we may drop it from the notation.
Using this notation we have for symmetric $\Ldens{a},$
\begin{align}\label{eq:satDensrepresentation}
\SatLdens{a}_\Kgen = \gamma D(q,z)(x) + \Ldens{a}(x)\mathds{1}_{\{|x| < \Kgen\}}
\end{align}
where $\gamma = \prob_{\Ldens{a}} \{ |x| \ge \Kgen \}$ and $\gamma q = \prob_{\Ldens{a}} \{ x \le -\Kgen \}.$
\begin{lemma}[Symmetric SatBP] \label{lem:symclipprop}
Given a symmetric density $\Ldens{a}$ we define
\[
\SatLdens{a}_\sym{\Kgen} = \gamma D(p,z)(x) + \Ldens{a}(x)\mathds{1}_{\{|x| < \Kgen\}}
\]
where $p = e^{-\Kgen}/(1+e^{-\Kgen})$ and $\gamma =  \prob_{\Ldens{a}} \{ |x| \ge \Kgen \}.$
Then,
\begin{enumerate}[(i)]
\item $\SatLdens{a}_\sym{\Kgen}$ is a symmetric $L$-density.
\item $\SatLdens{a}_\Kgen \prec \SatLdens{a}_\sym{\Kgen}$.
\end{enumerate}
\end{lemma}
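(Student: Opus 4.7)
The plan is to handle the two claims separately: (i) by a direct check of the symmetry integral, and (ii) by explicitly exhibiting a degrading channel that maps $\SatLdens{a}_\Kgen$ to $\SatLdens{a}_\sym{\Kgen}$.

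For (i), I would decompose $\SatLdens{a}_\sym{\Kgen}$ into its continuous part $\Ldens{a}(x)\mathds{1}_{\{|x|<\Kgen\}}$ on $(-\Kgen,\Kgen)$ and its two atoms at $\pm\Kgen$, then verify the integral identity (\ref{eqn:lsym}) piece by piece. The indicator $\mathds{1}_{\{|x|<\Kgen\}}$ is invariant under $x\mapsto -x$, so the continuous part inherits the relation $\Ldens{a}(-y)=e^{-y}\Ldens{a}(y)$ on its support and is therefore symmetric. For the atomic part, the mass at $+\Kgen$ is $\gamma(1-p)$ and at $-\Kgen$ is $\gamma p$; the choice $p=e^{-\Kgen}/(1+e^{-\Kgen})$ makes their ratio exactly $e^{-\Kgen}$, which is precisely the discrete symmetry condition ``mass at $-\Kgen$ equals $e^{-\Kgen}$ times mass at $+\Kgen$'' for a pair of atoms at $\{-\Kgen,\Kgen\}$. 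Adding the two contributions yields $\int f(x)\mathrm{d}\SatLdens{a}_\sym{\Kgen}(x)=\int e^{-x}f(-x)\mathrm{d}\SatLdens{a}_\sym{\Kgen}(x)$ for every bounded continuous $f$.

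For (ii), I would write $\SatLdens{a}_\Kgen=\gamma D(q,\Kgen)+\Ldens{a}(x)\mathds{1}_{\{|x|<\Kgen\}}$ as in (\ref{eq:satDensrepresentation}), so that the two densities in question differ only in how they split the atomic mass $\gamma$ between $+\Kgen$ and $-\Kgen$. I would then consider the channel that acts as the identity for $|x|<\Kgen$ and, for $x=\pm\Kgen$, flips the sign with probability
\[
\alpha=\frac{p-q}{1-2q}.
\]
A direct computation shows that the resulting masses at $-\Kgen$ and $+\Kgen$ are $\gamma p$ and $\gamma(1-p)$ respectively, so passing $\SatLdens{a}_\Kgen$ through this channel produces exactly $\SatLdens{a}_\sym{\Kgen}$, establishing $\SatLdens{a}_\Kgen \prec \SatLdens{a}_\sym{\Kgen}$.

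The only nontrivial point is verifying that $\alpha\in[0,1]$, i.e.\ $0\le q\le p$; this is essentially the main obstacle. It follows from the symmetry of $\Ldens{a}$: using $\Ldens{a}(-y)=e^{-y}\Ldens{a}(y)$ one obtains
\[
\gamma q = \prob_{\Ldens{a}}\{x\le-\Kgen\} = \int_{\Kgen}^{\infty}e^{-y}\Ldens{a}(y)\,\mathrm{d}y \le e^{-\Kgen}\prob_{\Ldens{a}}\{x\ge\Kgen\} = e^{-\Kgen}\gamma(1-q),
\]
which rearranges to $q\le e^{-\Kgen}/(1+e^{-\Kgen})=p<1/2$; this is exactly the displayed inequality preceding the lemma statement. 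Thus $1-2q>0$ and $\alpha\in[0,1]$, and the channel construction goes through. The trivial case $\gamma=0$ is handled by noting that then $\SatLdens{a}_\Kgen=\SatLdens{a}_\sym{\Kgen}=\Ldens{a}$.
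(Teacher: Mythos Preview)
Your proposal is correct and follows essentially the same approach as the paper: the paper dispatches (i) as ``immediate'' and proves (ii) by the same sign-flipping channel, determining the flip probability $\lambda$ via $p=\lambda(1-q)+(1-\lambda)q$, which solves to your $\alpha=(p-q)/(1-2q)$. You simply fill in more detail than the paper does, in particular the explicit verification that $q\le p$ from symmetry of $\Ldens{a}$ (the paper merely asserts $p\ge q$) and the check that the resulting flip probability lies in $[0,1]$.
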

\begin{proof}
Part (i) is immediate. To prove part (ii) we note that comparing with the non-symmetrized case we see that 
\( p \ge q \,.\)  Thus, $\SatLdens{a}_\sym{\Kgen}$ can be realized by taking messages with distribution
$\SatLdens{a}_\Kgen$ and flipping the sign of a message with magnitude $\Kgen$ by a quantity $\lambda$
with $\lambda$ determined by
\[
p = \frac{e^{-\Kgen}}{1+e^{-\Kgen}} = \lambda (1-q) + (1-\lambda) q\,.
\]
\end{proof}
As a consequence of Lemma \ref{lem:symclipprop}, we will term the operation used
to obtain $\SatLdens{a}_\sym{\Kgen}$ from $\Ldens{a}$ as \emph{symmetric-saturation}. 

We summarize all the claims above in the following.
\begin{corollary}[Degradation Order]\label{lem:degradationOrder}
For symmetric $\Ldens{a}$ we have  
\[
\Ldens{a} \prec \SatLdens{a}_\Kgen \prec \SatLdens{a}_\sym{\Kgen}.
\]
\end{corollary}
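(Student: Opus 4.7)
The proof plan is essentially to chain together two degradation statements that have already been established in the excerpt, invoking the transitivity of the $\prec$ relation. There is no real obstacle here; the work has been done in the preceding discussion and in Lemma~\ref{lem:symclipprop}.

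First, I would establish the left inequality $\Ldens{a} \prec \SatLdens{a}_\Kgen$. This is the statement already recorded immediately after the definition of $\SatLdens{a}_\Kgen$: the saturation map $x \mapsto \lfloor x \rfloor_\Kgen$ is itself a (deterministic) channel from $\overline{\mathbb{R}}$ to $[-\Kgen,\Kgen]$, so if $X \sim \Ldens{a}$ is the output of a BMS channel, then $\lfloor X \rfloor_\Kgen \sim \SatLdens{a}_\Kgen$ is the output of the composition of that channel with saturation. By the definition of degradation, composing with an additional channel produces a degraded density, which gives $\Ldens{a} \prec \SatLdens{a}_\Kgen$.

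Next, I would invoke Lemma~\ref{lem:symclipprop}(ii) for the right inequality $\SatLdens{a}_\Kgen \prec \SatLdens{a}_\sym{\Kgen}$. There it is shown explicitly that $\SatLdens{a}_\sym{\Kgen}$ arises from $\SatLdens{a}_\Kgen$ by independently flipping the sign of any message of magnitude $\Kgen$ with a carefully chosen probability $\lambda$ (determined by $p = \lambda(1-q)+(1-\lambda)q$, which is solvable precisely because $p \ge q$). This sign-flip is again a channel applied to the messages, so it realizes $\SatLdens{a}_\sym{\Kgen}$ as a degradation of $\SatLdens{a}_\Kgen$.

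Finally, I would combine the two inequalities by transitivity of $\prec$. Transitivity is immediate from the channel-composition definition: if $\Ldens{a}'$ is obtained from $\Ldens{a}$ by composing with a channel and $\Ldens{a}''$ is obtained from $\Ldens{a}'$ by composing with another channel, then $\Ldens{a}''$ is obtained from $\Ldens{a}$ by composing with the concatenation, so $\Ldens{a} \prec \Ldens{a}''$. Applying this to the two steps above yields $\Ldens{a} \prec \SatLdens{a}_\Kgen \prec \SatLdens{a}_\sym{\Kgen}$, which is the claim. Since both constituent facts are either explicitly asserted in the excerpt or proved as part of Lemma~\ref{lem:symclipprop}, the corollary really is just a bookkeeping statement gathering them in one place.
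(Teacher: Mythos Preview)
Your proposal is correct and matches the paper's approach exactly: the corollary is stated there as a summary of the two already-established facts (the degradation $\Ldens{a} \prec \SatLdens{a}_\Kgen$ noted after the saturation definition, and Lemma~\ref{lem:symclipprop}(ii)), with no additional argument given. Your explicit mention of transitivity is the only thing you add, and it is of course immediate.
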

It is fairly intuitive that as $\Kgen$ becomes larger, the density $\SatLdens{a}_\sym{\Kgen}$ should become close to the density $\Ldens{a}$. This is the content of the next lemma which uses the Wasserstein distance between distributions.
\begin{lemma}\label{lem:DistSymClip}
Let $\Ldens{a}$ be a symmetric $L$-density. 
Then,
\begin{align*}
d(\Ldens{a}, \SatLdens{a}_\sym{\Kgen}) \leq 1 - \tanh({\Kgen}/2),
\end{align*}
where $d(\cdot, \cdot)$ is the Wasserstein distance defined previously. 
\end{lemma}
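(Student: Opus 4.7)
The plan is to move the computation into the $|D|$-domain, where the Wasserstein distance lives by definition, and then exploit the fact that symmetric-saturation acts on $|D|$-distributions simply by collapsing mass above a threshold.

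Set $t=\tanh(\Kgen/2)$. First I would describe the $|D|$-distribution of $\SatLdens{a}_\sym{\Kgen}$ explicitly. Since $|\tanh(\cdot/2)|$ is even, the sign-flip built into the symmetrization (the choice of $p$ versus $q$) plays no role in the $|D|$-domain; all that matters is that mass in $\Ldens{a}$ on $\{|x|\ge \Kgen\}$ is transported to $\pm\Kgen$. Consequently, if $Y=|\tanh(X/2)|$ has distribution $\absDdens{a}$, then the $|D|$-distribution of $\SatLdens{a}_\sym{\Kgen}$ is the pushforward of $\absDdens{a}$ under $y\mapsto \min(y,t)$, i.e.\ the distribution of $\min(Y,t)$. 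This step is conceptually the most delicate since it requires checking that the atom of size $\gamma=\prob_{\Ldens{a}}\{|x|\ge\Kgen\}$ placed by the symmetrization at $\pm\Kgen$ contributes a single atom of mass $\gamma$ at $t$ in the $|D|$-domain, regardless of how that mass is split across $\{+\Kgen,-\Kgen\}$.

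Next I would invoke the dual (Kantorovich--Rubinstein) form of the Wasserstein distance from Definition~\ref{def:wasserstein}. For any $f\in\Lip(1)[0,1]$,
\begin{align*}
\left|\int_0^1 f(x)(\absDdens{a}(x)-\absDdens{a}_\sym{\Kgen}(x))\,\dee x\right|
&=\bigl|\expectation[f(Y)-f(\min(Y,t))]\bigr| \\
&\le \expectation\bigl[\,|Y-\min(Y,t)|\,\bigr] \\
&=\expectation\bigl[(Y-t)\ind_{\{Y>t\}}\bigr],
\end{align*}
where the inequality uses the Lipschitz property of $f$ with constant $1$. Since $Y\in[0,1]$, the integrand is bounded by $(1-t)\ind_{\{Y>t\}}\le 1-t$, so the right-hand side is at most $1-t=1-\tanh(\Kgen/2)$. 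Taking the supremum over $f$ yields the claim.

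The only real obstacle is the bookkeeping in the first step, namely verifying that the symmetric-saturated density really does push $\absDdens{a}$ forward through $y\mapsto\min(y,t)$; after that the Lipschitz bound is immediate and pointwise. I would keep the presentation short by writing $\SatLdens{a}_\sym{\Kgen}$ as in \eqref{eq:satDensrepresentation} with the symmetric choice of $p$, noting that the $|D|$ transform forgets the sign and therefore forgets $p$ entirely.
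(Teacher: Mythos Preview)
Your proof is correct and the high-level approach matches the paper's: both observe that in the $|D|$-domain the symmetric saturation simply truncates at $t=\tanh(\Kgen/2)$, so the two $|D|$-distributions agree on $[0,t)$ and differ only on $[t,1]$. The difference is in the closing step. The paper invokes the fact (cited from \cite{KRU12b}) that the Wasserstein distance equals the $L_1$ norm of the difference of cumulative $|D|$-distributions, writes
\[
\SatabsDdist{a}{\sym{\Kgen}}(z)=\indicator{z<t}\,\absDdist{a}(z)+\indicator{z\ge t},
\]
and bounds the $L_1$ difference by the length $1-t$ of the interval on which the CDFs can differ. You instead work directly from the Lipschitz dual form in Definition~\ref{def:wasserstein}, using the coupling $(Y,\min(Y,t))$ and the pointwise bound $|f(Y)-f(\min(Y,t))|\le |Y-\min(Y,t)|\le 1-t$. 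Your route is self-contained (no appeal to the $L_1$--CDF identity) and makes the transport picture explicit; the paper's route is a one-line application of an equivalent characterization. Either way the bookkeeping you flag---that the sign-flip in the symmetrization is invisible in the $|D|$-domain---is exactly what the paper uses implicitly when it writes down $\SatabsDdist{a}{\sym{\Kgen}}$.
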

\begin{proof}
For any $0 \le z < \Kgen$ we have 
\(
\prob_{\Ldens{a}} \{ x \le z \}
=
\prob_{\SatLdens{a}_\Kgen} \{ x \le z \}
=
\prob_{\SatLdens{a}_\sym{\Kgen}} \{ x \le z \}
\)
and for any $z \ge \Kgen$ we have 
\(
1
=
\prob_{\SatLdens{a}_\Kgen} \{ x \le z \}
=
\prob_{\SatLdens{a}_\sym{\Kgen}} \{ x \le z \}\,.
\)
Since $\tanh(x/2)$ is increasing and $\tanh(-x/2)=-\tanh(x/2)$ we have
\[
\SatabsDdist{a}{\sym{\Kgen}} (z) =
\indicator{z<\tanh(\Kgen/2)} \absDdist{a} (z)
+
\indicator{z\ge\tanh(\Kgen/2)}\,.
\]
By \cite{KRU12b}, we have that the Wasserstein distance is equivalent to the
$L_1$ norm of the difference between the $|D|$-distributions.
Clearly, the distance is
bounded by $1 - \tanh (\Kgen/2)$.
%
\end{proof}

Let $T(\cdot)$ denote a DE iteration for the full BP decoder, i.e.,
\[
T(\Ldens{c}, \Ldens{x}) = \Ldens{c} \vconv \lambda(\rho(\Ldens{x})).
\]

\begin{definition}[DE for Sym. and Non-Sym. Saturation]
Consider a BMS channel with $L$-density $\Ldens{c}$. Let $\Delta_0$ denote the perfectly noisy channel.
Let $\Ldens{x}^{(0)} = \Delta_0$. Then the DE for symmetric SatBP decoder is defined as, 
\begin{align*}
\Ldens{x}^{(\ell)} = \lfloor\Ldens{c}\vconv \lambda(\rho(\Ldens{x}^{(\ell-1)}))\rfloor_\sym{\Kgen}.
\end{align*}
The DE for non-symmetric SatBP decoder is defined as,
\begin{align*}
\Ldens{x}^{(\ell)} = \lfloor\Ldens{c}\vconv \lambda(\rho(\Ldens{x}^{(\ell-1)}))\rfloor_{\Kgen}.
\end{align*}
Finally, we use the notation
$S_\sym{\Kgen}(\Ldens{c}, \Ldens{x}) = \left\lfloor T(\Ldens{c}, \Ldens{x})\right\rfloor_\sym{\Kgen}$ and 
$S_\Kgen(\Ldens{c}, \Ldens{x}) = \left\lfloor T(\Ldens{c}, \Ldens{x})\right\rfloor_{\Kgen}$. \qed
\end{definition}
Now imagine that we run both the full DE and symmetric saturated DE starting with the density $\Delta_0$. In the next lemma we show that at every iteration the order of degradation between the full DE and symmetric saturated DE is preserved. We will use the notation $T^{(\ell)}(\Ldens{c}, \Delta_0)$ to denote the $\ell$ iteration of the full DE. More precisely, $T^{(\ell)}(\Ldens{c}, \Delta_0) = T(\Ldens{c},T^{(\ell-1)}(\Ldens{c}, \Delta_0))$. As a shorthand, we will use $T^{(\ell)}(\Ldens{c}, \Delta_0) = T(T^{(\ell-1)}(\Ldens{c}, \Delta_0))$. We similarly define $S_\sym{\Kgen}^{(\ell)}(\Ldens{c}, \Delta_0)$ and $S_{\Kgen}^{(\ell)}(\Ldens{c}, \Delta_0)$.
\begin{lemma}[Degradation Order under DE]\label{lem:DEvsSymClipping}
For any $\ell \ge 0$ we have 
\[
T^{(\ell)}(\Ldens{c}, \Delta_0)  \prec S_\sym{\Kgen}^{(\ell)}(\Ldens{c}, \Delta_0).
\]
\end{lemma}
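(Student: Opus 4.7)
The plan is to prove the statement by induction on $\ell$, using monotonicity of the BP density-evolution operator $T$ under degradation together with the fact, established in Corollary~\ref{lem:degradationOrder}, that symmetric-saturation degrades a symmetric density.

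For the base case $\ell=0$, both sides equal $\Delta_0$, so the statement is trivial. For the inductive step, assume $T^{(\ell)}(\Ldens{c},\Delta_0)\prec S_\sym{\Kgen}^{(\ell)}(\Ldens{c},\Delta_0)$. Unpacking the definitions,
\begin{align*}
T^{(\ell+1)}(\Ldens{c},\Delta_0) &= T(\Ldens{c},\,T^{(\ell)}(\Ldens{c},\Delta_0)),\\
S_\sym{\Kgen}^{(\ell+1)}(\Ldens{c},\Delta_0) &= \bigl\lfloor T(\Ldens{c},\,S_\sym{\Kgen}^{(\ell)}(\Ldens{c},\Delta_0))\bigr\rfloor_\sym{\Kgen}.
\end{align*}
The operator $T(\Ldens{c},\cdot)=\Ldens{c}\vconv\lambda(\rho(\cdot))$ is monotone with respect to degradation, because both the variable-node convolution $\vconv$ and the check-node convolution $\cconv$ preserve $\prec$ (see \cite[Sec.~4.1.9]{RiU08}), and positive linear combinations of degraded densities remain degraded. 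Applying this to the inductive hypothesis yields
\[
T(\Ldens{c},\,T^{(\ell)}(\Ldens{c},\Delta_0))\;\prec\; T(\Ldens{c},\,S_\sym{\Kgen}^{(\ell)}(\Ldens{c},\Delta_0)).
\]

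Next, I would invoke Corollary~\ref{lem:degradationOrder} on $\Ldens{a}:=T(\Ldens{c},S_\sym{\Kgen}^{(\ell)}(\Ldens{c},\Delta_0))$ to obtain
\[
T(\Ldens{c},\,S_\sym{\Kgen}^{(\ell)}(\Ldens{c},\Delta_0))\;\prec\;\bigl\lfloor T(\Ldens{c},\,S_\sym{\Kgen}^{(\ell)}(\Ldens{c},\Delta_0))\bigr\rfloor_\sym{\Kgen} = S_\sym{\Kgen}^{(\ell+1)}(\Ldens{c},\Delta_0),
\]
and then chain the two degradation relations by transitivity of $\prec$ to conclude the inductive step.

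The only subtlety, and the point I would be most careful about, is checking the hypotheses needed to apply Corollary~\ref{lem:degradationOrder}: that corollary requires the density being saturated to be symmetric. I would verify by a parallel induction that $S_\sym{\Kgen}^{(\ell)}(\Ldens{c},\Delta_0)$ is symmetric for every $\ell$. This is immediate: $\Delta_0$ is symmetric, the BP operation $T(\Ldens{c},\cdot)$ preserves symmetry when $\Ldens{c}$ is symmetric (this is the standard symmetry preservation of density evolution), and part~(i) of Lemma~\ref{lem:symclipprop} guarantees that symmetric-saturation of a symmetric density yields a symmetric density. With symmetry maintained at every iteration, Corollary~\ref{lem:degradationOrder} applies at each step and the induction closes. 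No step here looks like a genuine obstacle; the argument is essentially bookkeeping once one recognizes the two ingredients, monotonicity of $T$ and the degradation bound of Corollary~\ref{lem:degradationOrder}.
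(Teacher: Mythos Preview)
Your proof is correct and follows essentially the same induction argument as the paper: apply monotonicity of $T(\Ldens{c},\cdot)$ under degradation to the inductive hypothesis, then invoke Corollary~\ref{lem:degradationOrder} to pass to the symmetrically saturated density. If anything, you are slightly more careful than the paper in explicitly verifying that symmetry is preserved along the $S_\sym{\Kgen}^{(\ell)}$ sequence so that Corollary~\ref{lem:degradationOrder} legitimately applies at each step.
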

\begin{proof}
Let $\Ldens{x}^{(\ell)}$ denote the DE for usual BP decoder and
 $\Ldens{z}^{(\ell)}$ denote the DE for the symmetric saturation operation. Since
$\Ldens{x}^{(0)} = \Ldens{z}^{(0)} = \Delta_0$, we have that $\Ldens{x}^{(1)} =
\Ldens{c}$ and $\Ldens{z}^{(1)} = \SatLdens{c}_\sym{\Kgen}$. From
corollary~\ref{lem:degradationOrder} we get that $\Ldens{x}^{(1)} \prec \Ldens{z}^{(1)}$. 
Now, since DE preserves the order of degradation, we get
$$
\Ldens{x}^{(2)} = T(\Ldens{c}, \Ldens{x}^{(1)})   \prec T(\Ldens{c}, \Ldens{z}^{(1)}) \stackrel{\text{Lem.}\,\ref{lem:degradationOrder}}{\prec} \Big\lfloor T(\Ldens{c}, \Ldens{z}^{(1)})\Big\rfloor_\sym{\Kgen} = \Ldens{z}^{(2)}.$$
Continuing, for all $\ell$ we get $T^{(\ell)}(\Ldens{c}, \Delta_0) \prec S_\sym{\Kgen}^{(\ell)}(\Ldens{c}, \Delta_0)$.
\end{proof}
We now estimate the distance between the densities appearing in the DE of standard
BP and the DE of the symmetric saturation operation. For this we again use the
Wasserstein distance defined previously (for symmetric densities). 
\begin{lemma}[Distance Between Symmetric SatBP and BP]\label{lem:distsymclippedandBP}
Consider $\ell$ iterations of the DE for the standard BP and the symmetric saturation operation.
Then
\begin{align*}
d(T^{(\ell)}&(\Ldens{c}, \Delta_0), S_\sym{\Kgen}^{(\ell)}(\Ldens{c}, \Delta_0)) \leq 2e^{-\Kgen + \ell\cdot\ln (2(\dl-1)(\dr-1))}.
\end{align*}
\end{lemma}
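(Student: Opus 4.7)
The plan is to prove the bound by induction on $\ell$, using the triangle inequality for the Wasserstein metric combined with two ingredients: the per-iteration saturation error (controlled by Lemma~\ref{lem:DistSymClip}) and the Lipschitz continuity of the BP operator $T$ with respect to the Wasserstein metric (established in \cite{KRU12}).

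Let $e_\ell \coloneqq d(T^{(\ell)}(\Ldens{c}, \Delta_0), S_\sym{\Kgen}^{(\ell)}(\Ldens{c}, \Delta_0))$. The base case $\ell = 0$ is trivial since both densities equal $\Delta_0$. For the inductive step, insert $T(\Ldens{c}, S_\sym{\Kgen}^{(\ell-1)}(\Ldens{c}, \Delta_0))$ and apply the triangle inequality:
\begin{align*}
e_\ell
&\leq d\bigl(T(\Ldens{c}, T^{(\ell-1)}(\Ldens{c}, \Delta_0)),\, T(\Ldens{c}, S_\sym{\Kgen}^{(\ell-1)}(\Ldens{c}, \Delta_0))\bigr) \\
&\quad + d\bigl(T(\Ldens{c}, S_\sym{\Kgen}^{(\ell-1)}(\Ldens{c}, \Delta_0)),\, \lfloor T(\Ldens{c}, S_\sym{\Kgen}^{(\ell-1)}(\Ldens{c}, \Delta_0))\rfloor_\sym{\Kgen}\bigr).
\end{align*}
The second term is bounded by $1 - \tanh(\Kgen/2) \leq 2e^{-\Kgen}$ using Lemma~\ref{lem:DistSymClip} (the inequality follows from $1 - \tanh(\Kgen/2) = 2/(1+e^{\Kgen}) \le 2 e^{-\Kgen}$). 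The first term is bounded using the Wasserstein Lipschitz property of $T(\Ldens{c}, \cdot)$: since the check-node convolution $\cconv$ (applied $\dr-1$ times) is Lipschitz with constant $\dr-1$ and the variable-node convolution $\vconv$ (applied $\dl-1$ times, together with the fixed $\Ldens{c}$) is Lipschitz with constant $2(\dl-1)$, their composition has Lipschitz constant $L \coloneqq 2(\dl-1)(\dr-1)$ in the Wasserstein distance. Thus the first term is at most $L \cdot e_{\ell-1}$.

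Combining these, we obtain the scalar recursion $e_\ell \leq L\, e_{\ell-1} + 2e^{-\Kgen}$ with $e_0 = 0$. Unrolling gives
\[
e_\ell \le 2e^{-\Kgen}\sum_{i=0}^{\ell-1} L^i = 2 e^{-\Kgen}\,\frac{L^\ell - 1}{L-1} \le 2 L^\ell e^{-\Kgen} = 2e^{-\Kgen + \ell \ln L},
\]
where the last inequality uses $L \ge 2$ (so $(L^\ell-1)/(L-1) \le L^\ell$). This is precisely the claimed bound.

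The main obstacle — and the step where I would be most careful — is verifying that the Lipschitz constants from \cite{KRU12} combine exactly into $2(\dl-1)(\dr-1)$ for one full DE iteration. In particular, one must check that tensoring with the fixed channel density $\Ldens{c}$ does not add an extra Lipschitz factor beyond the $2(\dl-1)$ coming from the variable-node side, and that the $|D|$-domain bookkeeping (Wasserstein is defined on $|D|$-distributions) is consistent with the statement, since saturation is performed in the $L$-domain. The rest of the argument (telescoping the recursion, invoking Lemma~\ref{lem:DistSymClip}) is routine once the Lipschitz constant is in hand.
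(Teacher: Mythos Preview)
Your proof is correct and follows essentially the same route as the paper: both use the triangle inequality to separate a Lipschitz term (bounded via the Wasserstein regularity result for one DE step, Lemma~13(viii) in \cite{KRU12b}, which gives exactly the constant $2(\dl-1)(\dr-1)$ you worried about) from a saturation term (bounded by Lemma~\ref{lem:DistSymClip}), then unroll the resulting scalar recursion and use $1-\tanh(\Kgen/2)\le 2e^{-\Kgen}$. The paper carries an iteration-dependent constant $\alpha_\ell\le 2(\dl-1)(\dr-1)$ before bounding the geometric sum, whereas you apply the uniform bound immediately; this is a cosmetic difference only.
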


\begin{proof}
Let $T(\cdot)$ and $S_\sym{\Kgen}(\cdot)$ be defined as in lemma~\ref{lem:DEvsSymClipping}
and consider the Wasserstein distance between them. We have,
\begin{align}
d&(T^{(\ell)}(\Ldens{c}, \Delta_0), S_\sym{\Kgen}^{(\ell)}(\Ldens{c}, \Delta_0)) \notag \\
&= d(T(T^{(\ell - 1)}(\Ldens{c}, \Delta_0)), S_\sym{\Kgen}(S_\sym{\Kgen}^{(\ell - 1)}(\Ldens{c}, \Delta_0))) \notag \\
&\stackrel{\text{Trian. ineq.}}{\leq} d(T(T^{(\ell - 1)}(\Ldens{c}, \Delta_0)), T(S_\sym{\Kgen}^{(\ell - 1)}(\Ldens{c}, \Delta_0))) \notag \\
&\qquad+ d(T(S_\sym{\Kgen}^{(\ell - 1)}(\Ldens{c}, \Delta_0)), S_\sym{\Kgen}(S_\sym{\Kgen}^{(\ell - 1)}(\Ldens{c}, \Delta_0))) \notag \\
&\stackrel{\text{(viii), Lem. 13 in \cite{KRU12b}}}{\leq} \alpha_{\ell} d(T^{(\ell - 1)}(\Ldens{c}, \Delta_0), S_\sym{\Kgen}^{(\ell - 1)}(\Ldens{c}, \Delta_0)) \notag \\
&\qquad+ d(T(S_\sym{\Kgen}^{(\ell - 1)}(\Ldens{c}, \Delta_0)), S_\sym{\Kgen}(S_\sym{\Kgen}^{(\ell - 1)}(\Ldens{c}, \Delta_0))) \notag \\
&\stackrel{(a)}{\leq} \alpha_{\ell} d(T^{(\ell - 1)}(\Ldens{c}, \Delta_0), S_\sym{\Kgen}^{(\ell - 1)}(\Ldens{c}, \Delta_0)) + \left(1 - \tanh\Big(\frac{\Kgen}{2}\Big)\right), \notag
\end{align}
where
\begin{align*}
\alpha_{\ell} & = 2 (\dl-1)  
 \sum_{j\!=\!1}^{\dr\!-\!1} (1\!-\!\batta^2(\Ldens{a}))^{\frac{\dr\!-\!1\!-\!j}2}(1\!-\!\batta^2(\Ldens{b}))^{\frac{j\!-\!1}2},
\\&
\le 2 (\dl-1) (\dr-1)
\end{align*}
where $\Ldens{a} = T^{(\ell - 1)}(\Ldens{c}, \Delta_0)$ and $\Ldens{b} = S_\sym{\Kgen}^{(\ell - 1)}(\Ldens{c}, \Delta_0)$ and $\dl$ and $\dr$ correspond to the average variable node and check node degrees. 
Also, the inequality $(a)$ is obtained by using lemma~\ref{lem:DistSymClip}. 

Continuing with the above inequality we get,
\begin{align}
d(&T^{(\ell)}(\Ldens{c}, \Delta_0), S_\sym{\Kgen}^{(\ell)}(\Ldens{c}, \Delta_0)) \notag \\
&\leq  (1\! - \!\tanh\Big(\frac{\Kgen}{2}\Big))(1 \!+\! \alpha_{\ell} \!+\! \alpha_{\ell}\alpha_{\ell-1} \!+\! \dots \!+\! \alpha_{\ell}\alpha_{\ell-1}\cdots \alpha_2), \notag
\end{align}
From the bound on $\alpha_{\ell}$ we obtain $(1 \!+\! \alpha_{\ell} \!+\! \alpha_{\ell}\alpha_{\ell-1} \!+\! \dots \!+\! \alpha_{\ell}\alpha_{\ell-1}\cdots \alpha_2) \leq  (2(\dl-1)(\dr-1))^{\ell}$.

Combining with $1 - \tanh(\Kgen/2) \leq 2e^{-\Kgen}$  we get,
\begin{align*}
d(&T^{(\ell)}(\Ldens{c}, \Delta_0), S_\sym{\Kgen}^{(\ell)}(\Ldens{c}, \Delta_0)) \leq  2e^{-\Kgen + \ell\cdot\ln (2(\dl-1)(\dr-1))}.
\end{align*}
\end{proof}

The above gives us a bound on the $\batta(S_\sym{\Kgen}^{(\ell)}(\Ldens{c}, \Delta_0))$. Using (ix) Lemma~13 
in \cite{KRU12b} we get,
\begin{align}
\batta(S_\sym{\Kgen}^{(\ell)}&(\Ldens{c}, \Delta_0)) \leq  \batta(T^{(\ell)}(\Ldens{c}, \Delta_0)) \nonumber \\ 
& + 2\sqrt{d(T^{(\ell)}(\Ldens{c}, \Delta_0), S_\sym{\Kgen}^{(\ell)}(\Ldens{c}, \Delta_0))} \nonumber \\
& \leq \batta(T^{(\ell)}(\Ldens{c}, \Delta_0)) + 2\sqrt{2}e^{\frac{-\Kgen + \ell\cdot\ln (2(\dl-1)(\dr-1))}2}.\label{eqn:SKsymTdiff}
\end{align}

{\em Discussion}: 
In the sequel, we will denote $\Kgen$ by $\Kvar$ to distinguish between the saturation levels appearing at variable nodes, check nodes and the channel. To summarize, we show that for $\Kvar$ large enough, for every iteration the Battacharyya parameter of the symmetric saturated DE remains close to the Battacharyya of the full DE. In the next section we will relate the symmetric saturated DE to the non-symmetric saturated DE to show that the Battacharyya parameter for the SatBP decoder can also be made small by choosing $\Kvar$ large enough.

\section{Convergence of Nonsymmetrized Saturated DE}\label{sec:nonsymSatBP}
The results of the previous section show that, when transmitting below the threshold of the full BP decoder and using sufficiently many iterations, the Battacharrya parameter of the densities in the symmetric SatBP decoder can be small by choosing $\Kvar$ large enough. More precisely, consider transmission over a general BMS channel $\Ldens{c}$ such that we are transmitting below the BP threshold of the channel family. Let us assume transmission using $(\lambda,\rho)$ ensemble with average variable node and check node degree given by $\dl$ and $\dr$ respectively. Then, given an $\epsilon >0$, there exists $\ell_0(\Ldens{c}, \epsilon)\in \naturals$ such that for all $\ell \geq \ell_0$, $\batta(T^{(\ell)}(\Ldens{c}, \Delta_0)) \leq \epsilon/2$. 
Then, by choosing $\Kvar$ large enough, specifically $\Kvar > \Kgen_0 \triangleq l_0(\Ldens{c},\epsilon) \ln (2(\dl-1)(\dr-1)) + 2\ln \frac{4\sqrt{2}}{\epsilon}$, we have that $\batta(S_\sym{\Kgen}^{(\ell)}(\Ldens{c}, \Delta_0))  \leq \epsilon$.

\subsection{Non-symmetrized SatBP Decoder}
We now show that the Battacharrya parameter for the non-symmetric SatBP decoder can also be made small by choosing $\Kvar$ large enough. We first consider a fixed computation tree and then average over the tree ensemble. 

We begin with an operational description of symmetrization.
Consider a fixed tree $\sf{T}$ of depth $\ell.$
Let $Y$ denote the vector of received LLR values associated to the variable nodes
under the all-zero codeword assumption.
In addition, for each variable node we assume an independent random variable uniformly
distributed on $[0,1].$  We denote the vector of these variables by $Z = \{Z_v\},$ where $v$ is the index for the variable nodes.
Now, the node operations correspond to BP except that outgoing messages from 
the variable nodes are magnitude saturated at $\Kvar.$
The independent random variables are used for the flipping operation.
The flipping probability for each node is determined by density evolution.
If the outgoing message has magnitude $\Kvar$ then its sign is flipped if $Z_v < \lambda_v$
where $\lambda_v$ is the appropriate flipping probability.

Let the received LLR magnitude of a variable node $v$ be $x$. 
The probability with which we flip the bit is such that the final error probability is equal to $\frac{e^{-\Kvar}}{1+e^{-\Kvar}}$. For received LLR magnitude of $x$, the probability that it is received correctly is $\frac1{1+e^{-x}}$. As a consequence we get,
$$
\frac{e^{-\Kvar}}{1+e^{-\Kvar}} = \lambda_v \frac{1}{1 + e^{-x}} + (1 - \lambda_v) \frac{e^{-x}}{1 + e^{-x}},
$$
where $\lambda_v$ is the flipping probability of variable node $v$ and $x \geq \Kvar$.
Solving we get $\lambda_v = \frac{e^{-\Kvar}}{1 + e^{-\Kvar}} \frac{1 - e^{-x + \Kvar}}{1 - e^{-x}} \leq \frac{e^{-\Kvar}}{1 + e^{-\Kvar}}.$ Thus the probability that a variable node, with a received LLR magnitude greater than $\Kvar$, is not flipped is at least $\frac1{1 + e^{-\Kvar}} \geq 1 - e^{-\Kvar}$.

Let us denote the outgoing message at the variable node by $x$. From the above we see that 
the distribution of the outgoing message $x$ is $S_\sym{\Kgen}^{(\ell)}(\Ldens{c}, \Delta_0)).$
Let us consider the conditional distribution $p(x\mid Y,Z).$
We obtain $S_\sym{\Kgen}^{(\ell)}(\Ldens{c}, \Delta_0))$ by averaging over $Y$, $Z$ and the code ensemble.
Let $A_\Kvar$ denote the event that $Z_v \ge 1 - e^{-\Kvar}$ for each $v.$  This is clearly independent of the received values.
Assuming a fixed computation tree $T$ (i.e., we suppress dependence on $T$ in the notation) we have
\[
p(x\mid Y ) = p(x\mid Y,A_\Kvar) p(A_\Kvar) + p(x\mid Y,\bar{A}_\Kvar)(1-p(A_\Kvar)),
\]
where $\bar{A}_\Kvar$ denotes the complement event and, by independence, we can 
averaging over $Y$ to obtain
\[
p(x) = p(x\mid A_\Kvar) p(A_\Kvar) + p(x\mid \bar{A}_\Kvar)(1-p(A_\Kvar))
\]
hence
\[
p(x\mid A_\Kvar) = \frac{p(x) -  p(x\mid \bar{A}_\Kvar)(1-p(A_\Kvar))}{p(A_\Kvar)}
\]
Now $p(x\mid A_\Kvar)$ is the distribution of the non-symmetric SatBP decoder.
Intuitively one expects $p(x\mid \bar{A}_\Kvar)$ to be inferior (higher probability of error,
larger Battacharyya parameter) to $p(z\mid {A}_\Kvar),$ but this appears difficult to prove.
We have, however, $p(A_\Kvar) \ge (1-e^{-\Kvar})^{|V(\sf{T})|} \ge  1-e^{-\Kvar}{|V(\sf{T})|}$ where ${|V(\sf{T})|}$ is the number of variable nodes in the tree.

The above analysis is summarized in the following lemma.
\begin{lemma}[SatBP Decoder versus Symmetrized SatBP]\label{lem:clippeddecodervssymclipping}
For any $0 < \epsilon < 1$ and $\ell \in \naturals$, there exists a $\Kvar$ large enough
such that
$$
\batta(S^{(\ell)}_\Kgen(\Ldens{c}, \Delta_0)) \leq \frac1{1-\epsilon}\batta(S^{(\ell)}_\sym{\Kgen}(\Ldens{c}, \Delta_0)).
$$
\end{lemma}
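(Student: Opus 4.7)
The plan is to use the operational coupling already set up in the preceding paragraphs (where symmetric saturation is realized from non-symmetric saturation by independently flipping the sign of each saturated variable-node message) together with the linearity of the Battacharyya functional on mixtures. Fix a depth-$\ell$ computation tree $\mathsf{T}$. By construction the per-node flipping probability satisfies $\lambda_v \le e^{-\Kvar}/(1+e^{-\Kvar}) \le e^{-\Kvar}$, so the probability of ``no flip at $v$'' is at least $1-e^{-\Kvar}$. Let $A_\Kvar$ denote the event that none of the $|V(\mathsf{T})|$ independent flips occurs; then Bernoulli's inequality gives
\[
p(A_\Kvar) \;\ge\; (1-e^{-\Kvar})^{|V(\mathsf{T})|} \;\ge\; 1 - e^{-\Kvar}\,|V(\mathsf{T})|.
\]
Conditioned on $A_\Kvar$, the coupling output coincides exactly with the non-symmetric SatBP density on $\mathsf{T}$; I denote it $p_{\mathrm{ns}}(\cdot) = p(\cdot\mid A_\Kvar)$, while the unconditional coupling output is the symmetric SatBP density $p_{\sym}$ on $\mathsf{T}$.

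Decomposing according to $A_\Kvar$,
\[
p_{\sym}(x) \;=\; p(A_\Kvar)\,p_{\mathrm{ns}}(x) + \bigl(1-p(A_\Kvar)\bigr)\,p(x\mid \bar A_\Kvar),
\]
and since $\batta$ is linear in the density and non-negative on the second term,
\[
\batta(p_{\sym}) \;\ge\; p(A_\Kvar)\,\batta(p_{\mathrm{ns}}),
\]
so $\batta(p_{\mathrm{ns}}) \le \batta(p_{\sym})/p(A_\Kvar)$. For a standard $(\lambda,\rho)$ ensemble with bounded degrees, $|V(\mathsf{T})|$ is uniformly bounded by some constant $M(\ell,\lambda,\rho)$ that depends only on $\ell$ and the degree polynomials, so choosing $\Kvar$ large enough that $e^{-\Kvar}M(\ell,\lambda,\rho) \le \epsilon$ yields $p(A_\Kvar) \ge 1-\epsilon$ uniformly over the ensemble. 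Averaging the per-tree bound over the tree ensemble is again just a convex combination of densities, so linearity of $\batta$ preserves the inequality and gives
\[
\batta(S^{(\ell)}_\Kgen(\Ldens{c},\Delta_0)) \;\le\; \frac{1}{1-\epsilon}\,\batta(S^{(\ell)}_\sym{\Kgen}(\Ldens{c},\Delta_0)),
\]
as required.

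The main obstacle, as the discussion preceding the lemma already flags, is the conditional density $p(\cdot\mid \bar A_\Kvar)$: one naturally expects it to be \emph{worse} (larger Battacharyya) than $p_{\mathrm{ns}}$, because bad flips tend to produce worse messages, but this monotonicity seems hard to establish rigorously through the recursive tree. The sketch above sidesteps the issue by merely discarding the $\bar A_\Kvar$ contribution via $\batta \ge 0$; the price is the harmless factor $1/(1-\epsilon)$ appearing in the statement. A secondary but easier subtlety is passing from the per-tree inequality to the ensemble average: because $\batta$ is linear in the density and the ensemble is a convex combination, a single $\Kvar$ chosen from the worst-case $|V(\mathsf{T})|$ at depth $\ell$ is enough to make the bound hold after averaging.
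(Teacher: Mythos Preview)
Your proposal is correct and follows essentially the same route as the paper's proof: both exploit the operational coupling (symmetric saturation $=$ non-symmetric saturation plus independent sign flips), lower-bound $p(A_{\Kvar})$ by $(1-e^{-\Kvar})^{|V(\mathsf{T})|}\ge 1-e^{-\Kvar}|V(\mathsf{T})|$, discard the non-negative $\bar A_{\Kvar}$ contribution, pick $\Kvar$ large enough using the uniform bound on $|V(\mathsf{T})|$ at depth $\ell$, and then average over the tree ensemble. The only cosmetic difference is that the paper states the bound pointwise on the density and then integrates against $e^{-x/2}$, whereas you invoke linearity of $\batta$ directly on the mixture; these are the same step.
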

\begin{proof}
From the above analysis we have that for a fixed tree $\sf{T}$ of depth $\ell$,
\begin{align*}
p(x\mid A_\Kvar) & = \frac{p(x) -  p(x\mid \bar{A}_\Kvar)(1-p(A_\Kvar))}{p(A_\Kvar)} \\
& \leq \frac{p(x)}{p(A_\Kvar)} \leq \frac{p(x)}{1 - e^{-K}{|V(\sf{T})|}}. 
\end{align*}
where $p(x\mid A_\Kvar)$ is the distribution of the non-symmetric SatBP decoder.  
For any fixed number of iterations, the total maximum number of variable nodes in a computation tree is fixed. Hence we can take $\Kvar$ large enough so that
$e^{-\Kvar}{|V(\sf{T})|} < \epsilon$ for all $T.$  Note that the required $\Kvar$ grows linearly in the number of iterations.
Averaging over the tree ensemble and multiplying by the kernel $e^{-x/2}$, we get the desired result.
\end{proof}
{\em Discussion:} Let us summarize. From the above analysis we have that for any $0 < \epsilon < 1/2$, there exists $\Kvar > 0$, large enough such that the Battacharyya parameter of the SatBP decoder is upper bounded by $\epsilon$. Note that the value of $\Kvar$ depends on the number of iterations of the full BP required to get its Battacharyya parameter to be at the most $\epsilon/2$. So given a channel $\Ldens{c}$ such that the BP decoder is successful when transmitting over $\Ldens{c}$, the number of such iterations required is fixed. Call it $\ell_0(\Ldens{c}, \epsilon)$. Then, from the above analysis we have that  for $\Kvar \geq \Kgen_0  \triangleq l_0(\Ldens{c},\epsilon) \ln (2(\dl-1)(\dr-1)) + 2\ln \frac{8\sqrt{2}}{\epsilon}$,  $\batta(S_{\Kgen}^{(\ell)}(\Ldens{c}, \Delta_0))  \leq \epsilon$. 
Note that we can make the Battacharyya as small as desired by increasing the number of iterations and consequently increasing $\Kvar$. But then the saturation value  $\Kvar$ becomes infinite. Hence to make the Battacharyya arbitrarily small we now need to show that once the Battacharyya parameter is made small enough, by choosing $\Kvar$ large but fixed, then the subsequent iterations of the SatBP decoder will drive the Battacharyya parameter down to zero. This is the content of the stability analysis done in the next section. We will see that in order to make the Battacharyya parameter arbitrarily small, it is sufficient to bring it close to the stability region. By choosing $\epsilon$ according to equation \eqref{eq:nearstab3} and arguments following it, we can choose $\Kvar$ large enough so that we are  guaranteed to be in the stability region. Furthermore, we have that $\Kgen_0$, defined above, now depends only on the channel $\Ldens{c}$ and the degree distribution. 

\section{Stability Analysis of the SatBP Decoder}\label{sec:stabilityanalysis}
An important part of the asymptotic analysis of LDPC codes involves the analysis of the convergence of DE to a zero error state.  In this section we analyze the stability of the SatBP.
We begin with some necessary conditions.

For stability of the zero error condition there must exist a positive invariant set
of zero error distributions, i.e.,  a subset $\mathcal{S}$ of distributions
so that $\perr{(\Ldens{s})} =0$ for all $\Ldens{s} \in \mathcal{S}$ and
$S_{\Kgen}(\Ldens{c}, \Ldens{s}) \in \mathcal{S}.$
Existence of $\mathcal{S}$ follows easily from the compactness of the 
space of densities and continuity of DE.

\begin{lemma}\label{lem:necessarysupport}
Assume the channel $\Ldens{c}$ has support at $-L,$ $L>0.$
In an irregular ensemble with minimum variable degree $d_l$
the support of all densities in $\mathcal{S}$ must lie in 
$[L/(\dl-2),\infty].$
\end{lemma}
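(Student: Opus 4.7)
I would lower-bound the worst essential infimum of support observed in $\mathcal{S}$. Set
\[
b^{\ast}\;:=\;\inf_{\Ldens{s}\in\mathcal{S}}\inf\operatorname{supp}(\Ldens{s}),
\]
which is non-negative since $\perr(\Ldens{s})=0$ puts every support of every $\Ldens{s}\in\mathcal{S}$ in $(0,\infty]$. It is enough to show $b^{\ast}\ge L/(\dl-2)$. I would fix any $\Ldens{s}\in\mathcal{S}$ with $b:=\inf\operatorname{supp}(\Ldens{s})$ arbitrarily close to $b^{\ast}$; positive invariance then gives $S_{\Kgen}(\Ldens{c},\Ldens{s})\in\mathcal{S}$, so its essinf is also $\ge b^{\ast}$.

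The needed inequality comes from the degree-$\dl$ component of the output mixture, which appears with positive edge fraction $\lambda_{\dl}>0$ by the minimum-degree assumption. That component is the law of
\[
\Bigl\lfloor \mu_{0}+\sum_{i=1}^{\dl-1}\nu_{i}\Bigr\rfloor_{\Kgen},
\]
with $\mu_{0}\sim\Ldens{c}$ independent of the i.i.d.\ check-to-variable messages $\nu_{i}\sim\rho(\Ldens{s})$. Two elementary facts then combine: the essinf of a mixture equals the minimum of the component essinfs, and $\lfloor\cdot\rfloor_{\Kgen}$ is a monotone truncation that leaves essinfs lying in $(0,\Kgen)$ unchanged. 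Hence invariance forces the pre-saturation sum's essinf to be $\ge b^{\ast}$, and by additivity of essinfs under independent sums,
\[
\inf\operatorname{supp}(\mu_{0})\,+\,(\dl-1)\,\inf\operatorname{supp}(\nu_{1})\;\ge\;b^{\ast}.
\]

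It remains to plug in two structural bounds. First, $-L\in\operatorname{supp}(\Ldens{c})$ forces $\inf\operatorname{supp}(\mu_{0})\le -L$. Second, the check-node update obeys $|2\tanh^{-1}(\prod_{i}\tanh(\mu_{i}/2))|\le\min_{i}|\mu_{i}|$, and this bound survives the outgoing saturation at $\Kgen$; since the inputs to each check node are drawn from the positive-supported $\Ldens{s}$ with $\inf\operatorname{supp}=b$, this yields $\inf\operatorname{supp}(\nu_{1})\le b$. Substituting gives $(\dl-1)b\ge L+b^{\ast}$, and letting $b\downarrow b^{\ast}$ delivers $(\dl-2)b^{\ast}\ge L$, which is the claim. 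The main obstacle in my view is purely bookkeeping: one must check that $\lfloor\cdot\rfloor_{\Kgen}$, applied at both check and variable nodes, disturbs neither additivity of essinfs under independent sums nor the check-node majorization $|\text{out}|\le\min_{i}|\text{in}_{i}|$; both survive because saturation is a monotone non-expansion. The denominator $\dl-2$, rather than the naive $\dl-1$ one would get from simply requiring positivity of the output, is exactly what the invariance of $\mathcal{S}$ buys us, by contributing the extra $+b^{\ast}$ on the right-hand side.
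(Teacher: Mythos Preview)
Your argument is correct but takes a different route from the paper. The paper argues by forward iteration: assuming some $\Ldens{a}^{(0)}\in\mathcal{S}$ has a point of support at $z_0<L/(\dl-2)$, it shows that the iterate $\Ldens{a}^{(\ell)}$ has a point of support at $z_\ell\le z_0-\ell\delta$ with $\delta=L-(\dl-2)z_0>0$, so eventually $z_\ell<0$, contradicting $\perr=0$. You instead run an extremal one-step argument: define the global infimum $b^\ast$ over $\mathcal{S}$, apply invariance once, and extract the inequality $(\dl-1)b\ge L+b^\ast$ which in the limit $b\downarrow b^\ast$ yields $(\dl-2)b^\ast\ge L$ directly.

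Your approach is cleaner in that it avoids the induction and makes the role of invariance transparent as a fixed-point constraint; it also dispatches the cases $\dl\le 2$ automatically (the inequality becomes a contradiction, forcing $\mathcal{S}=\emptyset$, which the paper handles separately). One point that deserves an extra sentence in your write-up: the implication ``post-saturation essinf $\ge b^\ast$ forces pre-saturation essinf $\ge b^\ast$'' needs the observation that $b^\ast\ge 0>-\Kgen$, so the case where the pre-saturation sum has essinf below $-\Kgen$ (which would saturate to $-\Kgen$) is already excluded, and the case where it exceeds $\Kgen$ trivially gives $y>\Kgen\ge b^\ast$. You allude to this with ``leaves essinfs lying in $(0,\Kgen)$ unchanged,'' but the boundary cases should be named explicitly.
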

\begin{IEEEproof}
It is obvious that $\mathcal{S} = \emptyset$ in an irregular ensemble with $d_l=1,$
so we assume $d_l \ge 2.$
We use $\Ldens{a}^{(\ell)}$ and $\Ldens{b}^{(\ell)}$ to denote the density of the message coming out of the variable nodes and check nodes respectively in the density evolution process. 
We claim that if $\Ldens{a}^{(\ell)}$ has support on $(-\infty,z_\ell]$ with $z_\ell > 0$ then
$\Ldens{a}^{(\ell+1)}$ has support on $(-\infty,z_{\ell+1}]$ with $z_{\ell+1} = z_\ell - (L-(\dl-2) z_\ell).$
To see the claim note that $\Ldens{b}^{(\ell)}$ also has support on $(-\infty,z_\ell]$
and it follows that $\Ldens{a}^{(\ell+1)}$ has support on $(-\infty,z_{\ell+1}]$
where  $z_{\ell+1} = (\dl-1) z_\ell - L = z_\ell - (L-z_\ell (\dl-2)).$

Assume $\Ldens{a}^{(0)}\in \mathcal{S}$ has support on $(-\infty,z_0]$
where $z_0 < L/(\dl-2)$ and define $\delta := L  - (\dl-2) z_0 > 0.$
By the above claim it follows from an inductive argument
that $\Ldens{a}^{(\ell)}\in \mathcal{S}$ has support on $(-\infty,z_\ell]$
where $z_\ell$ is a decreasing sequence satisfying
 $z_{\ell} \le z_{0} -  \ell \delta.$
For $\ell$ large enough the right hand side is negative, implying a non-zero error probability, and we obtain a contradiction with the definition of $\mathcal{S}$.
\end{IEEEproof}

\subsection{Failure of Stability with Degree Two}

From Lemma \ref{lem:necessarysupport} we immediately have 
\begin{lemma}
In an irregular ensemble with $\lambda_2>0$ no invariant set $\mathcal{S}$
exists for any value of $\Kvar<\infty$ unless the channel is the BEC.
\end{lemma}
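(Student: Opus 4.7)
The plan is to invoke Lemma~\ref{lem:necessarysupport} at its boundary case $d_l=2$. Two observations set the stage: (i) because $\lambda_2>0$ (and standard ensembles take $\lambda_1=0$), the minimum variable degree is exactly $d_l=2$; and (ii) because the channel is not the BEC, its L-density $\Ldens{c}$ places mass at some $-L$ with $L>0$, which is precisely the hypothesis of Lemma~\ref{lem:necessarysupport}.

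At $d_l=2$ the quantity $L/(d_l-2)$ inside Lemma~\ref{lem:necessarysupport} reads as $+\infty$, so the lemma's conclusion that the support of every $\Ldens{s}\in\mathcal{S}$ lies in $[L/(d_l-2),\infty]$ forces $\Ldens{s}$ to be concentrated at $+\infty$. But saturation at $\Kvar<\infty$ confines every density appearing in the saturated DE to $[-\Kvar,\Kvar]$, so no density in the SatBP iteration can place any mass at $+\infty$. Thus $\mathcal{S}$ must be empty.

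To make the $d_l=2$ specialization unambiguous, I would re-examine the recursion $z_{\ell+1}=(d_l-1)z_\ell-L$ extracted inside the proof of Lemma~\ref{lem:necessarysupport}. At $d_l=2$ this collapses to $z_{\ell+1}=z_\ell-L$, a linear decrease by $L>0$ per iteration with no finite fixed point. Since the initial $z_0$ is trapped in $[-\Kvar,\Kvar]$, iterating at most $\lceil \Kvar/L\rceil$ times drives the support parameter strictly below zero, placing mass in the error region and contradicting $\perr(\Ldens{s})=0$ on $\mathcal{S}$. A small bookkeeping item is that we are working with the irregular mixture $\Ldens{c}\vconv\lambda(\rho(\Ldens{x}))$, not a regular ensemble; however, the support of this mixture contains the support of its degree-2 sub-component $\lambda_2\,\Ldens{c}\vconv\rho(\Ldens{x})$, so it is enough to run the recursion on that single component, and higher variable degrees can only enlarge, not shrink, the relevant support.

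I do not foresee any substantive obstacle: the result is essentially a corollary of Lemma~\ref{lem:necessarysupport} once we commit to the $d_l=2$ reading of the recursion. The only thing worth being careful about is the ``support at $-L$'' hypothesis, which excludes the BEC exactly as required and admits every other BMS channel through the symmetry of $\Ldens{c}$.
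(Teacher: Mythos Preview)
Your proposal is correct and follows essentially the same approach as the paper: both argue that a non-BEC channel has support at some $-L$ with $L>0$, and then invoke Lemma~\ref{lem:necessarysupport} at $d_l=2$ to conclude that no zero-error invariant set with support in $[-\Kvar,\Kvar]$ can exist. Your additional unpacking of the recursion $z_{\ell+1}=z_\ell-L$ and the remark about the irregular mixture containing the degree-two component are sound elaborations of what the paper leaves implicit in its one-line appeal to Lemma~\ref{lem:necessarysupport}.
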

\begin{IEEEproof}
If $d_l = 2$ and the channel is not the BEC and hence has support on $(-\infty,0),$
then Lemma \ref{lem:necessarysupport}
shows that there can be no positive invariant zero-error set of distributions with support
on $[-\Kvar,\Kvar]$ for $\Kvar<\infty.$
\end{IEEEproof}

In the case of the BEC it can be seen that saturated DE matches unsaturated DE except that the mass
at $+\infty$ in unsaturated DE is not placed at $+\Kvar.$  Hence, stability is unaffected by saturation.
If the channel has unbounded support on $(-\infty,0]$, then there is no possibility of stability under saturation no matter
what the degree.
A condition on the  finite channel support is given in the section on stability with degree at least three.

\subsection{Near Stability}\label{sec:nearstab}

Even though stability with saturation cannot be achieved in irregular ensembles with
degree two variable nodes, it is not surprising that for large $\Kvar$ the residual error rate
can be made very small.  For sufficiently large $\Kvar$ the residual error rate will have
no practical consequence.  In this section we quantify the residual error rate.

The stability analysis of standard irregular ensembles under BP decoding
rests on the relations
\begin{equation}\label{eqn:DEvnodebatta}
\batta(\Ldens{c}\vconv \lambda{(\Ldens{a})}) = \batta{(\Ldens{a})} \lambda{(\Ldens{a})}
\end{equation}
and
\begin{equation}\label{eqn:DEcnodebatta}
\batta{(\rho{(\Ldens{a})})} \le 1 - \rho(1-\batta{(\Ldens{a})})\,.
\end{equation}
Equality \eqref{eqn:DEvnodebatta} continues to hold without symmetry of 
$\Ldens{a}$ or $\Ldens{c}.$
The inequality \eqref{eqn:DEcnodebatta}, however, does not hold without symmetry.
In Appendix~\ref{app:checknodebatta} we prove a more general form of the following.
\begin{lemma} \label{lem:additivecheckbound}
Let the incoming L-densities at a degree $d+1$ check node be
$\Ldens{a}_1,...,\Ldens{a}_d$ and let $\Ldens{b}$ be the outgoing density.
Then
\[
\batta{(\Ldens{b} )} \le
\sum_{i=1}^d \batta{(\Ldens{a}_i)} \,.
\]
\end{lemma}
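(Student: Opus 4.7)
The plan is to reduce the lemma to a pointwise inequality in the $D$-domain and then exploit the multiplicative check-node rule $\tanh(c/2)=\prod_i \tanh(a_i/2)$. The starting observation is that $\batta(\Ldens{a}) = \expectation[e^{-L/2}]$ is a linear functional of the density, requiring no symmetry of $\Ldens{a}$. Write each incoming density as a mixture of atoms $\Ldens{a}_i = \int \Delta_{a_i}\,dF_i(a_i)$, and use independence of the incoming edges (the DE tree assumption) to express the check-node convolution as
\[
\Ldens{b} \;=\; \int\!\cdots\!\int \Delta_{c(a_1,\ldots,a_d)} \prod_{i=1}^d dF_i(a_i), \qquad c := 2\tanh^{-1}\!\Big(\prod_{i=1}^d \tanh(a_i/2)\Big).
\]
By linearity, $\batta(\Ldens{b}) = \int\!\cdots\!\int e^{-c/2} \prod_i dF_i(a_i)$, while $\batta(\Ldens{a}_i) = \int e^{-a_i/2} dF_i(a_i)$. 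Thus the lemma follows once I establish the deterministic pointwise bound
\[
e^{-c(a_1,\ldots,a_d)/2} \;\le\; \sum_{i=1}^d e^{-a_i/2} \qquad \text{for all } (a_1,\ldots,a_d) \in \mathbb{R}^d.
\]

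Associativity of the check-node rule reduces this, by induction on $d$, to the base case $d=2$: given $\tanh(c/2) = \tanh(a/2)\tanh(b/2)$, show $e^{-c/2} \le e^{-a/2} + e^{-b/2}$. I would compute $e^c$ in closed form using $1\pm \tanh(x/2) = 2e^{\pm x/2}/(e^{x/2}+e^{-x/2})$ to expand $e^c = (1+\tanh(c/2))/(1-\tanh(c/2))$; the result is the clean identity
\[
e^{-c} \;=\; \frac{e^a + e^b}{e^{a+b}+1}.
\]
Each summand on the right is bounded trivially: $e^a/(e^{a+b}+1) \le e^{-b}$ rearranges to the tautology $e^a \le e^a + e^{-b}$, and symmetrically $e^b/(e^{a+b}+1) \le e^{-a}$. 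Therefore $e^{-c} \le e^{-a} + e^{-b}$, and subadditivity of the square root $\sqrt{x+y} \le \sqrt{x} + \sqrt{y}$ for $x,y \ge 0$ gives the desired $e^{-c/2} \le e^{-a/2} + e^{-b/2}$.

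There is no serious obstacle here; the proof reduces to one algebraic identity plus a union-bound type estimate. The conceptual point worth flagging is that the bound is established \emph{atomwise}, i.e., before taking expectations over the incoming messages, which is precisely what sidesteps the lack of symmetry of the saturated densities. By contrast, the tighter symmetric-case estimate $\batta(\Ldens{b}) \le 1 - \prod_i(1-\batta(\Ldens{a}_i))$ relies on the identity $\batta(\Ldens{a}) = \expectation[\sqrt{1-D^2}]$ that is specific to symmetric $\Ldens{a}$; here one has to settle for its linear union-bound relaxation $\sum_i \batta(\Ldens{a}_i)$, which is exactly what the lemma asserts.
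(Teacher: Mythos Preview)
Your proof is correct. The pointwise identity $e^{-c}=(e^{a}+e^{b})/(e^{a+b}+1)$ is right, the subadditivity of $\sqrt{\cdot}$ closes the $d=2$ case, and the induction via associativity of the BP check map is clean. Integrating the pointwise bound against the product measure gives the lemma with no symmetry assumption, exactly as you say.

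The paper takes a different route. Rather than reducing to single atoms $\Delta_{a_i}$, it reduces each $\Ldens{a}_j$ to a two-point density $D(q_j,z_j)=q_j\Delta_{-z_j}+(1-q_j)\Delta_{z_j}$, so that the outgoing density is again of the form $D(s,r)$. It then controls $s$ via the elementary product-to-sum inequality $\tfrac{1}{2}(1-\prod_j(1-2q_j))\le\sum_j q_j$, and controls the magnitude $r$ through the two-sided bound $-\ln\sum_j e^{-z_j}\le r\le\min_j z_j$; combining these yields $\batta(D(s,r))\le\sum_j\batta(D(q_j,z_j))$. The payoff of the paper's decomposition is generality: the magnitude bounds on $r$ are the only properties of the check-node rule that are used, so the argument simultaneously covers BP, min-sum, and any update satisfying those bounds, and it plugs directly into the more general mixed-density inequality the appendix actually proves. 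Your argument is more elementary and self-contained, but as written it is tied to the explicit BP formula $\tanh(c/2)=\prod_i\tanh(a_i/2)$ and its associativity; extending it to min-sum or to the paper's sharper mixed bound would require a separate computation.
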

{\em Discussion:} The above result holds for a wide range of check node update operations including
BP and the min-sum decoder.  

Throughout this section we will use $\Ldens{a}$ ($\Ldens{b}$) to denote the density coming out of a variable node (check node).  We also use $\Ldens{a}^{(n)}$  and $\Ldens{b}^{(n)}$ to denote the densities coming out of the variable nodes and check nodes at the $n$th iteration of the saturated DE recursion. We prove the following result,
\begin{lemma}\label{lem:nearstability}
Consider an irregular ensemble with minimum variable node degree $d_{\text{min}}\ge 2.$
Assume $\lambda_2 \rho'(1) \batta{(\Ldens{c})} < 1.$
Then, there exists a constant $x^*,$  a constant $N,$
and a constant $C(d_{\text{min}})$
such that, for all $\Kvar$ large enough, if for some $n_0$ we have
$\batta{(\Ldens{a}^{(n_0)})} \le x*$ then
$\batta{(\Ldens{a}^{(n)})} \le C(d_{\text{min}}) e^{-{\Kvar}/2}$
for all $n \ge n_0+N.$
Moreover, if $d_{\text{min}} > 2$ we can have $C(d_{\text{min}})=3.$
\end{lemma}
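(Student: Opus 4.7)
The plan is to collapse everything onto a scalar recursion for $\batta(\Ldens{a}^{(n)})$ and then treat it as a one-dimensional dynamical system near $y=0$. Three ingredients go into that scalar recursion. First, the identity $\batta(\Ldens{c}\vconv\Ldens{a}_1\vconv\cdots\vconv\Ldens{a}_k)=\batta(\Ldens{c})\prod_j\batta(\Ldens{a}_j)$ from \eqref{eqn:DEvnodebatta} does not require symmetry, so averaging over the variable-node edge-degree distribution gives a pre-saturation bound of $\batta(\Ldens{c})\lambda(\batta(\Ldens{b}^{(n)}))$. Second, the effect of magnitude saturation on the Bhattacharyya is controlled without any symmetry assumption: splitting $\batta(\lfloor X\rfloor_\Kvar)-\batta(X)$ on $\{X\ge\Kvar\}$ and $\{X\le-\Kvar\}$, the upper-tail contribution is at most $e^{-\Kvar/2}\prob(X\ge\Kvar)\le e^{-\Kvar/2}$ (since $e^{-x/2}\le e^{-\Kvar/2}$ for $x\ge\Kvar$), while the lower-tail contribution is non-positive. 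Third, Lemma~\ref{lem:additivecheckbound} applied edgewise and averaged over the check-side edge-degree distribution yields $\batta(\Ldens{b}^{(n)})\le\rho'(1)\batta(\Ldens{a}^{(n)})$. Combining these three gives
\[
\batta(\Ldens{a}^{(n+1)})\;\le\;\batta(\Ldens{c})\,\lambda\bigl(\rho'(1)\,\batta(\Ldens{a}^{(n)})\bigr)+e^{-\Kvar/2}.
\]

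Next I analyze this scalar map $y_{n+1}\le F(y_n)+e^{-\Kvar/2}$ on $[0,x^*]$. For $d_{\min}=2$, the bound $\lambda(x)\le\lambda_2 x+(1-\lambda_2)x^2$ on $[0,1]$ gives $F(y)\le\alpha y+\beta y^2$ with $\alpha:=\batta(\Ldens{c})\lambda_2\rho'(1)<1$ by hypothesis and $\beta:=\batta(\Ldens{c})(1-\lambda_2)\rho'(1)^2$. I choose $x^*$ small enough that $\tilde\alpha:=\alpha+\beta x^*<1$; then on $[0,x^*]$ the iteration is an affine contraction of rate $\tilde\alpha$ with additive error $e^{-\Kvar/2}$, so $y_{n_0+k}\le\tilde\alpha^k x^*+e^{-\Kvar/2}/(1-\tilde\alpha)$. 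For $\Kvar$ large enough the second summand is well below $x^*$, so $[0,x^*]$ is forward-invariant, and picking $N$ with $\tilde\alpha^N x^*\le e^{-\Kvar/2}$ yields $\batta(\Ldens{a}^{(n)})\le C\,e^{-\Kvar/2}$ with $C=1+1/(1-\tilde\alpha)$ for all $n\ge n_0+N$. For $d_{\min}\ge3$ the map is super-linear, $F(y)\le\kappa y^2$ for some $\kappa$; one then verifies that $[0,3e^{-\Kvar/2}]$ is forward-invariant once $\Kvar$ is large (because $9\kappa e^{-\Kvar}+e^{-\Kvar/2}\le 3e^{-\Kvar/2}$), while the quadratic squeezing $y_{n+1}\le\kappa y_n^2$ on $[0,x^*]$ with $\kappa x^*<\tfrac12$ drives the trajectory into $[0,3e^{-\Kvar/2}]$ in finitely many steps, giving $C(d_{\min})=3$.

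The step I expect to be most delicate is the combination of non-symmetry and degree averaging at the variable-node update: the iterates $\Ldens{a}^{(n)}$ and $\Ldens{b}^{(n)}$ produced by SatBP are not symmetric, so one must verify that both the product-form Bhattacharyya identity and the saturation penalty $\batta(\lfloor X\rfloor_\Kvar)\le\batta(X)+e^{-\Kvar/2}$ remain valid in this non-symmetric setting, and that applying Lemma~\ref{lem:additivecheckbound} edgewise and averaging over the irregular check-side distribution actually produces the clean factor $\rho'(1)$ rather than a shape-dependent expression. Once these verifications are in place, the rest is a routine one-dimensional fixed-point analysis that can be tuned to produce the stated constants $x^*$, $N$, and $C(d_{\min})$.
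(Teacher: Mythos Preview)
Your proposal is correct and follows essentially the same route as the paper's proof: the paper likewise establishes the additive saturation penalty $\batta(\lfloor\Ldens{a}\rfloor_\Kvar)\le\batta(\Ldens{a})+e^{-\Kvar/2}$ without symmetry, invokes multiplicativity of $\batta$ at variable nodes together with Lemma~\ref{lem:additivecheckbound} at check nodes to get the scalar recursion $\batta(\Ldens{a}^{(n+1)})\le\batta(\Ldens{c})\lambda(\rho'(1)\batta(\Ldens{a}^{(n)}))+e^{-\Kvar/2}$, and then analyzes this as an affine contraction (for $d_{\min}=2$) or a quadratic squeeze (for $d_{\min}\ge 3$) exactly as you describe. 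The concerns you flag as delicate are precisely the ones the paper addresses explicitly, and your treatment of them is sound.
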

\begin{proof}
To incorporate saturation into the analysis based on
the Battacharyya parameter we have the inequality for any $\Kgen>0$,
\[
\batta{(\SatLdens{a}_\Kgen)}\le \batta{(\Ldens{a})} + e^{-{\Kgen}/2}.
\]
Indeed, we have
\begin{align}\label{eq:satbattainq}
\batta{(\SatLdens{a}_\Kgen)} & = e^{{\Kgen}/2}\int_{-\infty}^{-\Kgen}\!\!\! \Ldens{a}(x) d x + \int_{-\infty}^{+\infty} \!\!\!\!\indicator{\vert x \vert < \Kgen}\Ldens{a}(x) e^{-x/2} d x \nonumber \\
& \quad \quad \quad \quad+  e^{-{\Kgen}/2}\int_{\Kgen}^{\infty}\!\!\! \Ldens{a}(x) d x \nonumber \\
& \leq \int_{-\infty}^{-\Kgen}\!\!\!\! \Ldens{a}(x) e^{-x/2} d x + \int_{-\infty}^{+\infty} \!\!\!\!\indicator{\vert x \vert < \Kgen}\Ldens{a}(x) e^{-x/2} d x \nonumber \\
& \quad \quad\quad + \int_{\Kgen}^{\infty} \Ldens{a}(x) e^{-x/2} d x +  e^{-{\Kgen}/2} 
\nonumber \\
& = \batta(\Ldens{a}) + e^{-{\Kgen}/2},
\end{align}
where the last inequality follows since $e^{-{\Kgen}/2}\int_{\Kgen}^{\infty} \Ldens{a}(x) d x \leq e^{-{\Kgen}/2}\int_{-\infty}^{\infty} \Ldens{a}(x) d x  = e^{-{\Kgen}/2}$.
As a result of the saturation of messages, we see that the minimum value of the Battacharyya parameter is equal to $e^{-{\Kgen}/2}$ and we can therefore not hope
to reach a smaller value.

{\em Minimum variable node degree equal to 2:}
Let us assume $d_{\text{min}} = 2,$ i.e., $\lambda_2 > 0.$
Let $\Ldens{a}^{(n_0)}$ be any $L$-density which need not be symmetric.
Consider 
\[
g(x):= \lambda_2  \batta(\Ldens{c}) \rho'(1) + (1-\lambda_2)\batta(\Ldens{c}) (\rho'(1))^2 x  \,.
\]
Since $\lambda_2  \batta(\Ldens{c}) \rho'(1) < 1$, there exists an $x^*>0$ such that $g(x^*) < 1$. Choose $x^*$ such that $g(x^*) < 1$ and $\rho'(1) x^*<1$.  Now assume $\batta(\Ldens{a}^{(n_0)}) \leq x^*$. 
Choose $\Kvar$ large enough such that $ \frac1{1-g(x^*)} e^{-{\Kvar}/2} < x^*$.

Let us perform the saturated DE recursion once. We have,
\begin{align}\label{eq:battarecursiondeg2}
\batta(&\Ldens{a}^{(n_0+1)})  =  \batta( \lfloor \Ldens{c} \vconv \lambda(\rho(\Ldens{a}^{(n_0)}))\rfloor_\Kvar) \nonumber \\
& \stackrel{\eqref{eq:satbattainq}}{\leq}  \batta(\Ldens{c} \vconv \lambda(\rho(\Ldens{a}^{(n_0)}))) + e^{-{\Kvar}/2} \nonumber \\ 
& = \batta(\Ldens{c})\lambda\Big(\sum_i \rho_i \batta((\Ldens{a}^{(n_0)})^{\cconv (i-1)})\Big) + e^{-{\Kvar}/2} \nonumber \\
& \stackrel{\text{Lemma}~\ref{lem:additivecheckbound}}{\leq}
 \batta(\Ldens{c})\lambda\Big(\batta(\Ldens{a}^{(n_0)})\sum_i (i-1)\rho_i \Big) + e^{-{\Kvar}/2} \nonumber \\
& \stackrel{\text{since}\, \rho'(1) \batta{(\Ldens{a}^{(n_0)})} < 1}{\leq} \lambda_2  \batta(\Ldens{c}) \rho'(1) \batta{(\Ldens{a}^{(n_0)})} \nonumber \\
&+ (1-\lambda_2)\batta(\Ldens{c}) (\rho'(1) \batta{(\Ldens{a}^{(n_0)}}))^2
+ e^{-{\Kvar}/2} \nonumber \\
&=g\big(\batta{(\Ldens{a}^{(n_0)}})\big) \batta{(\Ldens{a}^{(n_0)}})+ e^{-{\Kvar}/2}  \nonumber \\
&\le g(x^*) \batta{(\Ldens{a}^{(n_0)}})+ e^{-{\Kvar}/2} \\
&\le g(x^*) x^*+ e^{-{\Kvar}/2}  \nonumber \\
&\le  x^* \nonumber,
\end{align}
where the last inequality follows from the choice of $\Kvar$.

By induction, the above inequality gives $\batta{(\Ldens{a}^{(n)}}) \leq x^*$ for all $n \geq n_0$.
Consider any $n = n_0+k$. Also by induction on
 \eqref{eq:battarecursiondeg2}, we get
\begin{align*}
\batta(\Ldens{a}^{(n_0+k)}) & \leq x^*  (g(x^*))^k
+ e^{-{\Kvar}/2} \sum_{j=0}^{k-1}  (g(x^*))^j\\
& = x^* (g(x^*))^k+ e^{-{\Kvar}/2}\frac{1- (g(x^*))^k}{1-g(x^*)}\,.
\end{align*}
It follows that any $\epsilon>0$ and all $k$ large enough we have
\begin{align*}
\batta(\Ldens{a}^{(n_0+k)}) & \leq   e^{-{\Kvar}/2}\frac{1- \epsilon}{1-g(x^*)}\,.
\end{align*}


{\em Minimum variable node degree equal to 3:}
Let us now assume that the minimum variable node degree is 3. 
Let us denote,
\begin{align}\label{eq:nearstab3}
f(x) = \lambda_3  \batta(\Ldens{c}) \rho'(1)^2 x
+ (1-\lambda_3)\batta(\Ldens{c}) \rho'(1)^3 x^2.
\end{align}
Choose $x^*>0$ such that $f(x^*) \leq 1/2$ and $\rho'(1) x^* < 1$. Let $n_0$ be such that $\batta(\Ldens{a}^{(n_0)}) \leq x^*$. Choose $\Kvar$ large enough so that $2e^{-{\Kvar}/2} < x^*.$ 
Following the previous analysis, we have for all $n \geq n_0$
\begin{align*}
\batta{(\Ldens{a}^{(n+1)}}) \le &
\lambda_3  \batta(\Ldens{c}) (\rho'(1) \batta{(\Ldens{a}^{(n)})})^2 \\
&+ (1-\lambda_3)\batta(\Ldens{c}) (\rho'(1) \batta{(\Ldens{a}^{(n)}}))^3
+ e^{-{\Kvar}/2}
\end{align*}

A little algebra then shows that there exists $N > n_0$ so that for all $n\ge N$  we have
\begin{align}
\batta{(\Ldens{a}^{(n)})} & \le 3e^{-{\Kvar}/2} \label{vnodeallbnd}\\
\batta{(\Ldens{b}^{(n)})} & \le  3\rho'(1) e^{-{{\Kvar}/2}}\label{checkallbnd}
\end{align}
where $\Ldens{b}^{(n)}$ denotes the density coming out of the check nodes. Also, \eqref{checkallbnd} follows from \eqref{vnodeallbnd} and Lemma \ref{lem:additivecheckbound}.
\end{proof}
The ``near stability'' analysis done above can clearly not show convergence to zero error although it can be used
to show convergence to relatively small error rate. 
As we showed above, unlike the unsaturated case, zero error rate convergence cannot be achieved
with the saturated decoder when degree two variable nodes are included.
For degree three and higher, stability can be shown but a refined analysis is needed.


\subsection{Stability Analysis with Minimum Variable Node Degree Equal to Three}\label{sec:stabmindegree3}

In this section we consider irregular ensembles where the minimum variable node degree
is at least three.  We generalize the standard stability analysis by separating out the
saturated probability mass and tracking it through the variable node and check node updates.
For simplicity we shall restrict to right regular ensembles.
We show that convergence to zero error rate occurs and that convergence is
exponential in iteration.  In the unsaturated case this can be achieved with
degree two variable nodes and with degree three and above doubly exponential
convergence occurs.  In subsequent sections we show that double exponential convergence
can be attained in the saturated case for degree four and above although a modification is needed for degree four.
For degree three doubly exponential convergence can be recovered but only with the dramatic
and likely impractical step of erasing all received values near the end of the decoding.

We assume regular check nodes with degree $d_r$ and we let
$\Kcheck$ denote the magnitude of an outgoing message
when all incoming messages have magnitude $\Kvar.$  Although we focus on BP-like decoding
our analysis applies to other algorithms such as min-sum, in which case we have $\Kcheck=\Kvar.$
In general, if $\Kgen_1,...,\Kgen_{d_r-1}$ are incoming message magnitudes at a check node then we assume that
the corresponding outgoing magnitude $\Kgen_{\text{out}}$ satisfies
\begin{align}\label{eqn:magbound}
- \ln \sum_{i=1}^{d_r-1} e^{-\Kgen_i} \le \Kgen_{\text{out}} \le \min_i \{ \Kgen_i \}
\end{align}
Both conditions are satisfied by BP and min-sum. E.g., for BP we can write explicitly $\tanh (\Kgen_i/2) = (1 - e^{-\Kgen_i/2})/(1 + e^{-\Kgen_i/2})$ and then some algebra\footnote{Indeed, it is not hard to see that $\frac{1 - e^{-\Kgen_{\text{out}}}}{1 + e^{-\Kgen_{\text{out}}}} = \frac{1 - \sum_i e^{-\Kgen_i} + A}{1 + \sum_i e^{-\Kgen_i} + B}$, where $A, B \geq 0$. Furthermore, one can show that $A(1 + \sum_ie^{-\Kgen_i}) \geq B (1 - \sum_ie^{-\Kgen_i})$, which implies that $\frac{1 - e^{-\Kgen_{\text{out}}}}{1 + e^{-\Kgen_{\text{out}}}} \geq \frac{1 - \sum_i e^{-\Kgen_i}}{1 + \sum_i e^{-\Kgen_i}}$ giving us the inequality. } gives us \eqref{eqn:magbound}. 
We note in passing that the left inequality implies
\(
- \ln \sum_{i=1}^{d_r-1} e^{-\lambda \Kgen_i} \le \lambda \Kgen_{\text{out}} 
\)
for all $\lambda \in [0,1].$  We will make use of the case $\lambda = \frac{1}{2}.$

Messages entering a check node update $\Ldens{a}$ have the form 
\[
\Ldens{a} = \gamma D(p,\Kvar) +\bar{\gamma}\Ldens{m}
\]
where $\Ldens{m}$ is supported on $(-\Kvar,\Kvar)$ and has total mass $1$
(if it has zero probability we have $\bar{\gamma}=0.$)

Messages entering a variable node update $\Ldens{b}$ have the form 
\[
\Ldens{b} = \gamma D(p,\Kcheck) +\bar{\gamma}\Ldens{m}
\]
where $\Kcheck\le \Kvar$ is the outgoing magnitude at a check when all incoming magnitudes equal $\Kvar$
and $\Ldens{m}$ is supported on $(-\Kcheck,\Kcheck).$ 
From \eqref{eqn:magbound} we have $e^{-\Kcheck} \le (d_r-1) e^{-\Kvar}.$ 
We assume $\Kvar >2 \ln(d_r-1)$ large enough so that $2\Kcheck > \Kvar.$
In the subsequent analysis we also assume that the support of the channel $\Ldens{c}$ is restricted to $(-\Kchannel, \Kchannel)$ where we assume that $\Kchannel \leq 2\Kcheck - \Kvar .$

The analysis tracks the  quantities $\gamma p$ and $\bar{\gamma}\batta{(\Ldens{m})}.$
For stability we aim to show that both quantities converge to $0.$
Note that this implies that $\gamma \rightarrow 1.$
In the standard stability analysis of irregular ensembles and full BP, one tracks the Battacharyya parameter of the
density through the DE iterations when the density is near $\Delta_\infty.$
At the check node the Battacharyya parameter undergoes a constant factor gain with
a factor of $\rho'(1).$  On the variable node side the parameter is raised to the power of
the minimum variable node degree less one, and scaled the channel Battacharyya.
Thus, one arrives at the stability condition $\lambda_2 \rho'(1) \batta{(\Ldens{c})} < 1.$
If the minimum variable node degree is three then the update bound takes the form
$\batta{(\Ldens{a}^{(\ell+1)})} \le C \batta{(\Ldens{a}^{(\ell)})}^2$, for some positive constant $C$, and one obtains
doubly exponential decay in $\batta{(\Ldens{a}^{(\ell)})}.$
For the saturated case we accomplish something similar, although the conditions are different.
As a first step we show that we still have constant factor gain at check nodes.

\subsubsection{Check Node Analysis}\label{sec:checknodestabilityanalysis}
We assume a right regular ensemble with check degree $d+1.$
Let us represent the density entering the check node as
$\gamma D(p,\Kvar) + \bar{\gamma}\Ldens{m}$
where $\Ldens{m}$ is a density supported on $(-\Kvar,\Kvar).$
Then the density emerging out of the check node is given by 
$ \gamma' D(p',\Kcheck) + \bar{\gamma'}\Ldens{m}' \triangleq (\gamma D(p,\Kvar) + \bar{\gamma}\Ldens{m})^{\cconv d}
$, where $\Kcheck$ is the magnitude of the check output when all inputs are $\Kvar,$
which satisfies $\Kvar-\ln d \le \Kcheck \le \Kvar,$ and support of $\Ldens{m}'$ is also $(-\Kcheck, \Kcheck)$. Let us now perform the computation explicitly. In this section we use $\Dd$ to denote $D(p,\Kvar)$.
We have,
\begin{align*}
&(\gamma D(p,\Kvar) + \bar{\gamma}\Ldens{m})^{\cconv d}
=
\sum_{k=0}^{d} \binom{d}{k} \gamma^k\bar{\gamma}^{d-k}\Dd^{\cconv k} \cconv \Ldens{m}^{\cconv d-k}
\\&=\bar{\gamma}^{d}\Ldens{m}^{\cconv d}+
\sum_{k=1}^{d-1} \binom{d}{k} \gamma^k\bar{\gamma}^{d-k}\Dd^{\cconv k} \cconv \Ldens{m}^{\cconv d-k}
+ \gamma^d \Dd^{\cconv d} 
\end{align*}
where we have separated out two of the terms from the sum.
Although we have indicated that density evolution for check node update
is associative, which it is for min-sum and sum-product algorithms,
we do not actually require the associative property and a density
$\Dd^{\cconv k} \cconv \Ldens{m}^{\cconv d-k}$ can simply be understood as the outgoing one corresponding to $k$ incoming messages from
density $\Dd$ and $d-k$ messages from density $\Ldens{m}.$

By Lemma \ref{lem:checkbattabound} we have for $1 \leq k \leq d-1,$
\begin{align*}
\batta{(\Dd^{\cconv k} \cconv \Ldens{m}^{\cconv d-k})} &\le
(1+k (e^{\frac{\Kvar}{2}}\batta{(\Dd)}-1)) (d-k) \batta{(\Ldens{m})}\,
\\&\le
k e^{\frac{\Kvar}{2}}\batta{(\Dd)} (d-k) \batta{(\Ldens{m})}\,.
\end{align*}
A little algebra shows that
\begin{align*}
\sum_{k=1}^{d-1} \binom{d}{k} \gamma^k\bar{\gamma}^{d-k}k(d-k)
=
\gamma\bar{\gamma}d(d-1)
\end{align*}
and we now obtain
\begin{align*}
&\batta{\Bigl(
\sum_{k=1}^{d-1} \binom{d}{k} \gamma^k\bar{\gamma}^{d-k}\Dd^{\cconv k} \cconv \Ldens{m}^{\cconv d-k}
\Bigr)}
\\ & \le
\gamma\bar{\gamma}d(d-1) e^{\frac{\Kvar}{2}}\batta{(\Dd)} \batta{(\Ldens{m})}\,.
\end{align*}
Lemma \ref{lem:checkbattabound} also gives
\[
\batta{(\Ldens{m}^{\cconv d})} \le
d \batta{(\Ldens{m})}\,,
\]
so we now have
\begin{align*}
&\batta{\Bigl(
\sum_{k=0}^{d-1} \binom{d}{k} \gamma^k\bar{\gamma}^{d-k}\Dd^{\cconv k} \cconv \Ldens{m}^{\cconv d-k}
\Bigr)}
\\ & \le
d\Bigl(
(d-1) \gamma e^{\frac{\Kvar}{2}}\batta{(\Dd)} +1
\Bigr)
\bar{\gamma}\batta{(\Ldens{m})}\,.
\end{align*}
We have $\gamma' D(p',\Kcheck) = \gamma^d \Dd^{\cconv d}$
so $p' = \frac{1-(1-2p)^d}{2} \le d p$ where we have used
Lemma \ref{lem:pprodineq} to  obtain the last inequality. 

We summarize the results as follows.
\begin{lemma}\label{lem:checknodeanalysis}
Let the incoming density to a degree $d+1$ check node be
$\gamma D(p,\Kvar) + \bar{\gamma}\Ldens{m}.$
Then the outgoing density
$\gamma' D(p',\Kcheck) + \bar{\gamma'}\Ldens{m'}$
satisfies the following
\begin{align*}
\begin{bmatrix} 
\bar{\gamma}' \batta(\Ldens{m'}) \\  \gamma' p'
\end{bmatrix}
 \le
d
\begin{bmatrix} 
\xi  & 0\\ 
0 & 1 
\end{bmatrix}
\begin{bmatrix} 
\bar{\gamma} \batta(\Ldens{m}) \\  \gamma p
\end{bmatrix}
\end{align*}
where
\( \xi =\Bigl(
(d-1) \gamma e^{\frac{\Kvar}{2}}\batta{(D(p,\Kvar))} +1
\Bigr)  \,.\) 
\end{lemma}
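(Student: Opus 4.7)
The plan is to unpack the check-node output via the binomial expansion of $(\gamma D(p,\Kvar)+\bar{\gamma}\Ldens{m})^{\cconv d}$, separate the extreme term $k=d$ from the rest, and match coefficients against the claimed decomposition $\gamma' D(p',\Kcheck)+\bar{\gamma}'\Ldens{m}'$. Only the all-saturated term $\gamma^d D(p,\Kvar)^{\cconv d}$ can produce mass at the boundary magnitude $\Kcheck$ (since every other term has at least one input whose magnitude is strictly below $\Kvar$, and by \eqref{eqn:magbound} the output magnitude is then strictly below $\Kcheck$). This identification gives $\gamma' = \gamma^d$ and the remaining terms compose $\bar{\gamma}'\Ldens{m}'$.

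For the bottom row of the matrix bound, I use the explicit form $D(p,\Kvar)^{\cconv d} = D(p',\Kcheck)$ with $p' = (1-(1-2p)^d)/2$. Lemma~\ref{lem:pprodineq} (as already cited above) yields $p' \le d p$, hence $\gamma' p' = \gamma^d p' \le \gamma^d \cdot d p \le d\,\gamma p$, which is exactly the second coordinate of the inequality.

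For the top row I collect every term of the expansion except $k=d$, namely
$\bar{\gamma}^d \Ldens{m}^{\cconv d} + \sum_{k=1}^{d-1}\binom{d}{k}\gamma^k\bar{\gamma}^{d-k} D(p,\Kvar)^{\cconv k}\cconv \Ldens{m}^{\cconv d-k}$,
and apply Lemma~\ref{lem:checkbattabound}. For each mixed term with $1\le k\le d-1$ this yields the bound $k\, e^{\Kvar/2}\batta(D(p,\Kvar))\,(d-k)\batta(\Ldens{m})$, and the pure-$\Ldens{m}$ term is bounded by $d\,\batta(\Ldens{m})$. Using the combinatorial identity $\sum_{k=1}^{d-1}\binom{d}{k}\gamma^k\bar{\gamma}^{d-k}k(d-k) = \gamma\bar{\gamma}\,d(d-1)$, the mixed contribution collapses to $\gamma\bar{\gamma}\,d(d-1)\,e^{\Kvar/2}\batta(D(p,\Kvar))\,\batta(\Ldens{m})$, and adding the $\bar{\gamma}^d\Ldens{m}^{\cconv d}$ contribution (bounded by $d\,\bar{\gamma}\batta(\Ldens{m})$ since $\bar{\gamma}^d\le\bar{\gamma}$) yields $d\bigl((d-1)\gamma e^{\Kvar/2}\batta(D(p,\Kvar))+1\bigr)\bar{\gamma}\batta(\Ldens{m}) = d\,\xi\,\bar{\gamma}\batta(\Ldens{m})$, as required.

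The main obstacle is really just the bookkeeping: one has to justify that the $\gamma^d D(p,\Kvar)^{\cconv d}$ piece accounts for \emph{all} mass at magnitude exactly $\Kcheck$, so that $\gamma'$ and $p'$ are read off cleanly and the remaining Battacharyya bound can be attributed entirely to $\bar{\gamma}'\Ldens{m}'$. Given the support assumption that $\Ldens{m}$ lives strictly inside $(-\Kvar,\Kvar)$ together with the monotonicity in \eqref{eqn:magbound}, this is direct; the rest is the routine algebraic combination above, which closely mirrors the computation already laid out in the discussion preceding the lemma.
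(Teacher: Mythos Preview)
Your proposal is correct and follows essentially the same approach as the paper's own argument: binomial expansion of $(\gamma D(p,\Kvar)+\bar{\gamma}\Ldens{m})^{\cconv d}$, identification of the $k=d$ term with $\gamma' D(p',\Kcheck)$, Lemma~\ref{lem:pprodineq} for $p'\le dp$, Lemma~\ref{lem:checkbattabound} together with the identity $\sum_{k=1}^{d-1}\binom{d}{k}\gamma^k\bar{\gamma}^{d-k}k(d-k)=\gamma\bar{\gamma}d(d-1)$ for the mixed terms, and $\bar{\gamma}^d\le\bar{\gamma}$ to absorb the pure-$\Ldens{m}$ term. Your explicit justification that only the all-saturated term hits magnitude $\Kcheck$ (via the strict support of $\Ldens{m}$ and \eqref{eqn:magbound}) is a nice touch that the paper leaves implicit.
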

In the stability region we will have the bound $\xi \le 3$ so we see
that we have been able to obtain a linear growth bound for the
check node density evolution update.

\subsubsection{Variable Node Analysis}\label{sec:stabmindegree3varnodeanalysis}
\newcommand{\nti}{{n_{-}}}
\newcommand{\ntj}{{n_{+}}}
\newcommand{\ntm}{{n_{\Ldens{m}}}}

Consider a variable node of degree $d+1$ and incoming density
\[
\Ldens{b} = \gamma D(p,\Kcheck) +\bar{\gamma}\Ldens{m}.
\]
The outgoing density from the variable node has the form 
\[
\Ldens{a} =\gamma' D(p',\Kvar) +\bar{\gamma}'\Ldens{m'}.
\]
The density $\Ldens{a}$ is the saturation of
\begin{align}
\begin{split}
\sum_{k=0}^{d-2}& \binom{d}{k}\gamma^{k}\bar{\gamma}^{d-k}\Ldens{c}\vconv \Dd^{\vconv k}\vconv \Ldens{m}^{\vconv (d-k)}
\\&+d\bar{\gamma} \gamma^{d-1}\Ldens{c}\vconv \Dd^{\vconv {d-1}} \vconv \Ldens{m} + \gamma^{d}\Ldens{c}\vconv \Dd^{\vconv d}
\end{split}\label{eqn:threetermsa}
\end{align}
where in this section we use $\Dd$ to denote $D(p,\Kcheck).$
In particular $\gamma' p'$ is the total mass of this density on $(-\infty,-\Kvar]$
and $\gamma' \Ldens{m}'$ is the restriction of this density to $(-\Kvar,\Kvar).$

We see in the above decomposition that incoming messages either have magnitude $\Kcheck,$ i.e. are drawn from $\Dd,$
or they are drawn from $\Ldens{m}$
and therefore take values in $(-\Kcheck,\Kcheck).$  We can define a type for an outgoing message
consisting of a triple of non-negative integers $(\nti,\ntm,\ntj)$ where $\nti+\ntm+\ntj=d.$
Here
$\nti$ represents the number of $-\Kcheck$ incoming messages, $\ntj$ the number of $+\Kcheck$ incoming messages,
and $\ntm$ the number of incoming message drawn from $\Ldens{m}$ that comprise the outgoing message.
Our analysis will pay special attention to the terms with $\ntm=0$ and $\ntm=1$
which is why we distinguished these terms.

A handy elementary result is the following.
\begin{lemma}\label{lem:binomtrick}
If $a,b \ge 0$ and $k \le d$ then 
\[
\sum_{i=0}^{d-k} \binom{d}{i} a^{d-i} b^i \le \binom{d}{k} a^k (a+b)^{d-k}
\]
\end{lemma}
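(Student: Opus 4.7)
The plan is to reduce the claimed inequality to a termwise coefficient comparison in the expansions of both sides as polynomials in $a$ and $b$. Concretely, I will first expand the right hand side using the binomial theorem:
\[
\binom{d}{k} a^k (a+b)^{d-k} = \sum_{j=0}^{d-k} \binom{d}{k}\binom{d-k}{j} a^{d-j} b^j,
\]
so that both sides are sums of nonnegative monomials of the form $a^{d-j} b^j$ for $j=0,1,\dots,d-k$ (the left hand side has $\binom{d}{j}$ as the coefficient, the right hand side has $\binom{d}{k}\binom{d-k}{j}$). Since $a,b\ge 0$, it then suffices to verify the coefficient inequality
\[
\binom{d}{j} \le \binom{d}{k}\binom{d-k}{j}, \qquad 0 \le j \le d-k.
\]

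The second step is to verify this combinatorial inequality by a direct factorial manipulation. Dividing both sides by $\binom{d-k}{j}$ (which is nonzero in the range $j\le d-k$), it is equivalent to
\[
\frac{\binom{d}{j}}{\binom{d-k}{j}} = \frac{d!\,(d-k-j)!}{(d-j)!\,(d-k)!} \le \binom{d}{k} = \frac{d!}{k!(d-k)!},
\]
i.e.\ $k!\,(d-k-j)! \le (d-j)!$. This last inequality is equivalent to $\binom{d-j}{k}\ge 1$, which holds because $d-j\ge k$ throughout the range $0\le j\le d-k$.

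Neither step presents a real obstacle: the expansion is routine, and the coefficient comparison reduces to the trivial observation that $\binom{d-j}{k}\ge 1$ when $d-j\ge k$. If a cleaner presentation is preferred, one can alternatively give a one-line combinatorial interpretation: on the right, $\binom{d}{k}\binom{d-k}{j}$ counts the number of ways of choosing an ordered pair of disjoint subsets of sizes $k$ and $j$ from $\{1,\dots,d\}$, which is at least the number $\binom{d}{j}$ of ways of choosing just the $j$-subset, since for any such $j$-subset there is at least one way (in fact $\binom{d-j}{k}$ ways) to extend it by a disjoint $k$-subset. Either rendering then yields the lemma by summing over $j$.
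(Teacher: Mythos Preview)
Your proof is correct and essentially the same as the paper's: both reduce to the termwise coefficient inequality $\binom{d}{j} \le \binom{d}{k}\binom{d-k}{j}$ for $0\le j\le d-k$, which the paper establishes via the identity $\binom{d}{j}\binom{d-j}{k} = \binom{d}{k}\binom{d-k}{j}$ together with $\binom{d-j}{k}\ge 1$, exactly as you do. The only cosmetic difference is that the paper bounds the left-hand coefficients first and then invokes the binomial theorem, whereas you expand the right-hand side first and then compare.
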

\begin{IEEEproof}
For $i \leq d - k$ we have,
\begin{align*}
\binom{d}{i} \le \binom{d}{i}\binom{d-i}{k} =  \binom{d}{k}\binom{d-k}{i}\,.
\end{align*}
and the lemma follows from the binomial theorem.  We remark that 
there is an alternate form since
\(
 \binom{d}{k} = \binom{d}{d-k}\,.
\)
\end{IEEEproof}

Let us consider the three parts of \eqref{eqn:threetermsa}.
The first part comprises messages types $(\nti,\ntm,\ntj)$ where
$\ntm \ge 2.$
The second part comprises messages types $(\nti,\ntm,\ntj)$ with
$\ntm =1$
and the third part comprises messages types $(\nti,\ntm,\ntj)$ with
$\ntm =0.$
We will consider the contribution of each part to $\gamma' p'$
and to $\bar{\gamma}' \Ldens{m}'.$

Let us first consider $\gamma' p'.$  We use the bound $\int_{-\infty}^{-\Kgen} \Ldens{a}(x) dx \le e^{-\frac{\Kgen}{2}}\batta(\Ldens{a})$, which is valid for any
density and any $\Kgen \ge 0$, Lemma \ref{lem:binomtrick} and the multiplicative property of Battacharyya parameter at the variable node side to obtain
\newcommand{\omp}{\bar{p}}
\begin{align}
\int_{-\infty}^{-\Kvar}& \sum_{k=0}^{d-2} \binom{d}{k}\gamma^{k}\bar{\gamma}^{d-k}\Ldens{c}\vconv \Dd^{\vconv k}\vconv \Ldens{m}^{\vconv (d-k)}  (x) dx 
\nonumber
\\&\le
e^{-\frac{\Kvar}{2}} \frac{d(d-1)}{2} (\bar{\gamma}\batta(\Ldens{m}))^2 \batta(\Ldens{c})\batta(\Ldens{b})^{d-2}\,.\label{eqn:gp2}
\end{align}
Now we consider contributions from $\ntm=1.$
A message of type $(\nti,1,\ntj)$ has value at most
$(\ntj-\nti)\Kcheck + (\Kcheck+\Kchannel)$
and at least
$(\ntj-\nti)\Kcheck - (\Kcheck+\Kchannel).$ Recall that $(-\Kchannel, \Kchannel)$ is the channel support. 
Hence if $\ntj-\nti > 0$ then the message has value greater than $-\Kvar$
and  if $\ntj-\nti < -1$ then the message has value less than $-\Kvar.$
If $\ntj-\nti = 0$ then the message has value less than $-\Kvar$ only if the
contribution from $\Ldens{c}\vconv\Ldens{m}$ is less than $-\Kvar.$
If $\ntj-\nti = -1$ then the message can have value less than $-\Kvar$
only if the
contribution from $\Ldens{c}\vconv\Ldens{m}$ is less than $0.$
Hence, we obtain
\begin{align}
\begin{split}
\int_{-\infty}^{-\Kvar}& \Ldens{c} \vconv \Ldens{m} \vconv \Dd^{d-1} (x) dx 
\le
\\&
\begin{cases}
\sum_{j=0}^{\frac{d-4}{2}} \binom{d-1}{j} p^{d-1-j} \omp^j
\\ \quad+  \binom{d-1}{\frac{d-2}{2}}p^{\frac{d}{2}} \omp^{\frac{d-2}{2}}\perr(\Ldens{c}\vconv\Ldens{m})
& d \text{ even}  \\
\sum_{j=0}^{\frac{d-3}{2}} \binom{d-1}{j} p^{d-1-j} \omp^j 
\\ \quad+  \binom{d-1}{\frac{d-1}{2}}p^{\frac{d-1}{2}} \omp^{\frac{d-1}{2}}
e^{-\frac{\Kvar}{2}}\batta(\Ldens{c}\vconv\Ldens{m})
& d \text{ odd}  
\end{cases}
\end{split}\label{eqn:gp1}
\end{align}
Note that for the case $d$ even, we use $\perr(\Ldens{c}\vconv\Ldens{m})$ to bound the contribution from $(\Ldens{c}\vconv\Ldens{m})(x)$ for $x\leq 0$.
Now we consider contributions from $\ntm=0.$
A message of type $(\nti,0,\ntj)$ has value at most
$(\ntj-\nti)\Kcheck + (\Kchannel)$
and at least
$(\ntj-\nti)\Kcheck - (\Kchannel).$
Hence if $\ntj-\nti \ge 0$ then the message has value greater than $-\Kvar$
and  if $\ntj-\nti < -1$ then the message has value less than $-\Kvar.$
If $\ntj-\nti = -1$ then the message can have value less than $-\Kvar$
only if the
contribution from $\Ldens{c}$ is less than $0.$
Hence, we obtain
\begin{align}
\begin{split}
\int_{-\infty}^{-\Kvar}&  \Ldens{c} \vconv \Dd^d (x) dx 
\le
\\&
\begin{cases}
\sum_{j=0}^{\frac{d-2}{2}} \binom{d}{j} p^{d-j} \omp^j
& d \text{ even}  \\
\sum_{j=0}^{\frac{d-3}{2}} \binom{d}{j} p^{d-j} \omp^j 
+  \binom{d}{\frac{d-1}{2}}p^{\frac{d+1}{2}} \omp^{\frac{d-1}{2}} \perr(\Ldens{c})
&  d \text{ odd}  
\end{cases}
\end{split}\label{eqn:gp0}
\end{align}

Using the bound $\perr(\Ldens{c}\vconv\Ldens{m}) \le \batta(\Ldens{c}\vconv\Ldens{m})$ and
Lemma \ref{lem:binomtrick} we obtain from \eqref{eqn:gp1}
\begin{align*}
\int_{-\infty}^{-\Kvar} &\Ldens{c} \vconv \Ldens{m} \vconv \Dd^{d-1} (x) dx 
\le \\  &
\begin{cases}
  \binom{d-1}{\frac{d-2}{2}}p^{\frac{d}{2}} (p+ \batta(\Ldens{c}\vconv\Ldens{m}))
& d \text{ even}  \\
  \binom{d-1}{\frac{d-1}{2}}p^{\frac{d-1}{2}} (
p+
e^{-\frac{\Kvar}{2}}\batta(\Ldens{c}\vconv\Ldens{m})
)
& d \text{ odd}  
\end{cases}
\end{align*}
and using the bound $\perr(\Ldens{c}) \le 1$ and
Lemma \ref{lem:binomtrick} we obtain from \eqref{eqn:gp0}
\begin{align*}
&\int_{-\infty}^{-\Kvar}  \Ldens{c} \vconv \Dd^d (x) dx 
\le
\binom{d}{\lfloor\frac{d-1}{2}\rfloor} p^{\lceil \frac{d+1}{2}\rceil}\,.
\end{align*}
Combining the above into \eqref{eqn:threetermsa} we have
\begin{align}
\begin{split}
\gamma' p' \le&e^{-\frac{\Kvar}{2}} \frac{d(d-1)}{2} (\bar{\gamma}\batta(\Ldens{m}))^2 \batta(\Ldens{c})\batta(\Ldens{b})^{d-2}
\\&+
 d  \binom{d-1}{\lfloor\frac{d-1}{2}\rfloor}(\gamma p)^{\lfloor\frac{d}{2}\rfloor} \bigl((\gamma p)+ \batta(\Ldens{c})(\bar{\gamma}\batta(\Ldens{m}))\bigr)
\\&+
\binom{d}{\lfloor\frac{d-1}{2}\rfloor} (\gamma p)^{\lceil \frac{d+1}{2}\rceil}\,
\end{split}\nonumber\\
\begin{split}
\\ \le
&e^{-\frac{\Kvar}{2}} \frac{d(d-1)}{2} (\bar{\gamma}\batta(\Ldens{m}))^2 \batta(\Ldens{c})\batta(\Ldens{b})^{d-2}
\\&+
(d+1) (4 \gamma p)^{\lfloor\frac{d}{2}\rfloor+1}\,+
 d(4 \gamma p)^{\lfloor\frac{d}{2}\rfloor}  \batta(\Ldens{c})(\bar{\gamma}\batta(\Ldens{m}))
\end{split}\label{eqn:gppbound}
\end{align}
where  we have used $\binom{d}{\lfloor\frac{d-1}{2}\rfloor} \le 2^{d-1}.$ 
We note that when $d$  is odd we can add another factor of $e^{-\frac{\Kvar}{2}}$ to the last term.

Now we consider the contribution to  $\bar{\gamma}' \Ldens{m}'.$ 
Let us introduce the notation $\SatLdens{a}_\Kgen^\circ(x) = \Ldens{a}(x)\indicator{|x|<\Kgen}.$
First we note that the contribution to $ \batta(\Ldens{m}')$ from types with $\ntm \geq 2$ is upper bounded by
\begin{align*}
&\batta\bigl( \sum_{k=0}^{d-2} \binom{d}{k}\gamma^{k}\bar{\gamma}^{d-k}\Ldens{c}\vconv \Dd^{\vconv k}\vconv \Ldens{m}^{\vconv (d-k)} \bigr)
\le
\\&
 \frac{d(d-1)}{2} (\bar{\gamma}\batta(\Ldens{m}))^2 \batta(\Ldens{c})\batta(\Ldens{b})^{d-2},
\end{align*}
where we applied Lemma \ref{lem:binomtrick}. 

Let us introduce the notation $q = e^{\frac{\Kcheck}{2}}p$ and $\tilde{q} =e^{-\frac{\Kcheck}{2}}\omp.$ Note that for any density $\Ldens{a}$ we have
$\batta (\Ldens{a}\vconv \Delta_\Kgen) = e^{-\Kgen}\batta(\Ldens{a}).$

Now we consider the contribution from types with $\ntm=1.$
A type $(\nti,1,\ntj)$ will have a non-zero contribution only if the interval centered on
$(\ntj-\nti)\Kcheck$ of width $2(\Kchannel+\Kcheck)$ intersects $(-\Kvar,\Kvar).$
Note that $\Ldens{m}' = \SatLdens{\Ldens{c} \vconv \Ldens{m} \vconv \Dd^{d-1}}_{\Kvar}^\circ$.
Since we assume $2\Kcheck  \ge \Kchannel+\Kvar$ and $\Kcheck \le \Kvar$ we obtain
\begin{align*}
&\batta(\SatLdens{\Ldens{c} \vconv \Ldens{m} \vconv \Dd^{d-1}}_{\Kvar}^\circ)
\le
\\&
\batta(\Ldens{c})\batta(\Ldens{m})
\begin{cases}
\sum_{j=\frac{d-2}{2}}^{\frac{d}{2}} \binom{d-1}{j} q^{d-1-j} \tilde{q}^j
& d \text{ even}  \\
\sum_{j=\frac{d-3}{2}}^{\frac{d+1}{2}} \binom{d-1}{j} q^{d-1-j} \tilde{q}^j 
& d \text{ odd}  
\end{cases}
\end{align*}
Using the inequality 
\(
2\binom{d-1}{\frac{d-3}{2}} \ge \binom{d-1}{\frac{d-1}{2}}
\)
for odd $d$ we can write this as

\begin{align*}
&\batta(\SatLdens{\Ldens{c} \vconv \Ldens{m} \vconv \Dd^{d-1}}_{\Kvar}^\circ)
\\ \le &
\batta(\Ldens{c})\batta(\Ldens{m})
\begin{cases}
\binom{d-1}{\frac{d}{2}} (q\tilde{q})^{\frac{d-2}{2}} (q+\tilde{q})
& d \text{ even}  \\
 \binom{d-1}{\frac{d-3}{2}} (q\tilde{q})^{\frac{d-3}{2}}  (q+\tilde{q})^2
& d \text{ odd}  
\end{cases}
\\ = &
\batta(\Ldens{c})\batta(\Ldens{m})
\begin{cases}
\binom{d-1}{\frac{d}{2}} (p\omp)^{\frac{d-2}{2}}\batta(D(p,\Kcheck))
& d \text{ even}  \\
 \binom{d-1}{\frac{d-3}{2}} (p\omp)^{\frac{d-3}{2}}  \batta(D(p,\Kcheck))^2
& d \text{ odd}  
\end{cases}
\\ \le &
\batta(\Ldens{c})\batta(\Ldens{m})
\begin{cases}
 (4p)^{\frac{d-2}{2}}\batta(D(p,\Kcheck))
& d \text{ even}  \\
2 (4p)^{\frac{d-3}{2}}  \batta(D(p,\Kcheck))^2
& d \text{ odd}  
\end{cases}
\end{align*}

Finally we consider the contribution from types with $\ntm=0.$
A type $(\nti,0,\ntj)$ will have a non-zero contribution only if the interval centered on
$(\ntj-\nti)\Kcheck$ of width $2\Kchannel$ intersects $(-\Kvar,\Kvar).$
Hence we obtain
\begin{align*}
& \batta(\SatLdens{\Ldens{c}  \vconv \Dd^{d}}_{\Kvar}^\circ)
\\ \le&
\batta(\Ldens{c})\begin{cases}
 \binom{d}{\frac{d}{2}} q^{\frac{d}{2}} \tilde{q}^{\frac{d}{2}}
& d \text{ even}  \\
\sum_{j=\frac{d-1}{2}}^{\frac{d+1}{2}}\binom{d}{j} q^{d-j} \tilde{q}^{j}
& d \text{ odd}  
\end{cases}
\\ =
&\batta(\Ldens{c})\begin{cases}
 \binom{d}{\frac{d}{2}} (p\omp)^{\frac{d}{2}} 
& d \text{ even}  \\
\binom{d}{\frac{d-1}{2}} (p\omp)^{\frac{d-1}{2}} 
\batta(D(p,\Kcheck))
& d \text{ odd}  
\end{cases}
\\ \le
&\batta(\Ldens{c})\begin{cases}
 (4 p)^{\frac{d}{2}} 
& d \text{ even}  \\
 (4 p)^{\frac{d-1}{2}} 
\batta(D(p,\Kcheck))
& d \text{ odd}  
\end{cases}
\end{align*}
To get the final bound on $\gamma'\batta(\Ldens{m}')$ we need to multiply the above bounds by $d\bar{\gamma}\gamma^{d-1}$ when $\ntm = 1$ and by $\gamma^d$ when $\ntm = 0$.
In the next section we will use $\batta(D(p,\Kcheck)) \leq \batta(\Ldens{b})$ to further bound the above
expressions.

\subsection{Stability with Minimum Degree $3$.}

Let us assume that the minimum variable node degree, given by $d+1$, is at least three
and a right regular degree $d_r+1.$ 

In view of \eqref{vnodeallbnd} and \eqref{checkallbnd} we may assume
$\batta{(\Ldens{a}^{(n)})}  \le 3e^{-\frac{\Kvar}{2}}$  which implies
$\batta{(\Ldens{b}^{(n)})}  \le  3 d_r e^{-\frac{\Kvar}{2}}$, $\gamma^{(n)} p^{(n)}  e^{\frac{\Kvar}{2}} \le 3e^{-\frac{\Kvar}{2}}$ and $\batta(\Ldens{m}^{(n)})\leq  3e^{-\frac{\Kvar}{2}}$
for all $n \ge N$ for some $N\in \naturals.$ Here we use the notation,
$\Ldens{a}^{(n)} = \gamma^{(n)} D(p^{(n)},\Kvar) +\bar{\gamma}^{(n)}\Ldens{m}^{(n)}$.
We assume $\Kvar$ large enough so that for all $d$ we have
\[
 \frac{d(d-1)}{2} \batta(\Ldens{c})\batta(\Ldens{b}^{(n)})^{d-2} \le 1.
\]
We put together everything done previously to bound the contributions to the density coming out of the variable nodes at the $(n+1)$th iteration. To do this, we first use the check node analysis in Lemma~\ref{lem:checknodeanalysis} with incoming density given by $\Ldens{a}^{(n)}$. Then, using the variable node analysis  of the previous section we obtain
\begin{align}
\begin{split}\label{eqn:gpitupdate}
\gamma^{(n\!+\!1)} p^{(n\!+\!1)}  \le &
e^{-\frac{\Kvar}{2}}  (d_r \xi \bar{\gamma}^{(n)}\batta(\Ldens{m}^{(n)}))^2 
\,\\+
(d+&1) (4 d_r  \gamma^{(n)} p^{(n)})^{\lfloor\frac{d}{2}\rfloor+1}\,\\+
d (4 &d_r  \gamma^{(n)} p^{(n)})^{\lfloor\frac{d}{2}\rfloor}  \batta(\Ldens{c})d_r\xi (\bar{\gamma}^{(n)}\batta(\Ldens{m}^{(n)})),
\end{split}
\\
\begin{split}\label{eqn:mitupdate}
 \bar{\gamma}^{(n+1)}\batta(\Ldens{m}^{(n+1)})  \le &
 (d_r \xi \bar{\gamma}^{(n)} \batta(\Ldens{m}^{(n)}))^2  
\\ + 
2d\batta(\Ldens{c})  d_r \xi  (\bar{\gamma}^{(n)}& \batta(\Ldens{m}^{(n)}) (d_r 4\gamma^{(n)} p^{(n)})^{\lfloor\frac{d-2}{2}\rfloor}\batta(\Ldens{b}^{(n)})
\\ +
\batta(\Ldens{c})  (d_r 4\gamma^{(n)} & p^{(n)})^{\lfloor\frac{d}{2}\rfloor}
\end{split}
\end{align}
To obtain the second inequality we use $\batta(\Ldens{b}^{(n)}) \leq 1$, where we assume $\Kvar$ large enough so that $3 d_r e^{-\frac{\Kvar}2} \leq 1$.

Now for any $\epsilon > 0$ we choose $\Kvar$ large enough so that 
$(d_r 4\gamma^{(n)} p^{(n)}) < 1$
and for all $d \ge 2$
we have
\begin{align*}
 \epsilon \ge &
(d_r \xi)^2 \bar{\gamma}^{(n)}\batta(\Ldens{m}^{(n)})  
+
2d\batta(\Ldens{c})d_r \xi  \batta(\Ldens{b}^{(n)}),
\\
 \epsilon \ge &
e^{-\frac{\Kvar}{2}}\batta(\Ldens{c})  4 d_r, 
\\
 \epsilon \ge &
e^{-\frac{\Kvar}{2}}4 d_r (d+1) \Bigl(1+
  \batta(\Ldens{c})d_r\xi (\bar{\gamma}^{(n)}\batta(\Ldens{m}^{(n)}))\Bigr),
\end{align*}
which then yields
\begin{align}\label{eqn:epsineq}
\begin{bmatrix}    \bar{\gamma} \batta{(\Ldens{m})}
\\ e^{\frac{\Kvar}{2}} \gamma    p   \end{bmatrix}^{(n+1)}
\le
\epsilon
\begin{bmatrix} 1 & 1 \\ 1& 1 \end{bmatrix}
\begin{bmatrix} \bar{\gamma} \batta{(\Ldens{m})}  \\
 e^{\frac{\Kvar}{2}}  \gamma    p  
\end{bmatrix}^{(n)},
\end{align}

%
%
where $[\cdot]^{(n)}$ denotes the values at the $n$th iteration.
We summarize our findings in the following.
\begin{theorem}
Consider an irregular ensemble with check regular degree $\dr$ and minimum
variable node degree at least three. If a channel $\Ldens{c}$ is below the
BP threshold then it is below the threshold for SatBP for $\Kvar$ sufficiently large.
\end{theorem}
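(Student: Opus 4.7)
The plan is to chain together the three regimes of analysis developed in the paper so that density evolution for SatBP starts close to full BP, then enters the near-stability region, and finally converges to zero under the refined linear contraction \eqref{eqn:epsineq}.

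First I would use the BP-tracking results from Sections \ref{sec:SatBP} and \ref{sec:nonsymSatBP}. Since $\Ldens{c}$ lies below the BP threshold, there exists $\ell_0(\Ldens{c},\epsilon)$ such that $\batta(T^{(\ell_0)}(\Ldens{c},\Delta_0)) \le \epsilon/2$. By Lemma \ref{lem:distsymclippedandBP} together with the kernel bound \eqref{eqn:SKsymTdiff}, choosing $\Kvar$ large enough (depending on $\ell_0$) forces $\batta(S_\sym{\Kvar}^{(\ell_0)}(\Ldens{c},\Delta_0)) \le \epsilon$, and by Lemma \ref{lem:clippeddecodervssymclipping} this transfers to the non-symmetric SatBP density, at a cost of another $1/(1-\epsilon)$ factor. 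So for any target $x^{*}>0$, there is a finite $\Kvar$ and a finite iteration count $\ell_0$ after which $\batta(S_\Kvar^{(\ell_0)}(\Ldens{c},\Delta_0)) \le x^{*}$.

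Next I would invoke the near-stability Lemma \ref{lem:nearstability} with $d_{\text{min}} \ge 3$, choosing $x^{*}$ so that $f(x^{*}) \le 1/2$ as in \eqref{eq:nearstab3}. Then after an additional $N$ iterations of SatBP we are guaranteed that $\batta(\Ldens{a}^{(n)}) \le 3 e^{-\Kvar/2}$ and $\batta(\Ldens{b}^{(n)}) \le 3\dr e^{-\Kvar/2}$. At this point the decomposition $\Ldens{a}^{(n)} = \gamma^{(n)} D(p^{(n)},\Kvar) + \bar\gamma^{(n)} \Ldens{m}^{(n)}$ satisfies the uniform smallness bounds assumed at the start of Section \ref{sec:stabmindegree3}; in particular, $\gamma^{(n)} p^{(n)} e^{\Kvar/2}$ and $\bar\gamma^{(n)} \batta(\Ldens{m}^{(n)})$ are both small, and $\xi \le 3$ in Lemma \ref{lem:checknodeanalysis}.

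Now I would apply the refined stability analysis of Section \ref{sec:stabmindegree3} to the pair $(\bar\gamma\batta(\Ldens{m}), e^{\Kvar/2}\gamma p)$. Combining the check-node bound of Lemma \ref{lem:checknodeanalysis} with the variable-node bounds \eqref{eqn:gpitupdate}--\eqref{eqn:mitupdate} and choosing $\Kvar$ even larger if necessary, the iteration satisfies the contractive inequality \eqref{eqn:epsineq}. The matrix $\epsilon \begin{bmatrix} 1 & 1 \\ 1 & 1\end{bmatrix}$ has spectral radius $2\epsilon$, so for $\epsilon < 1/2$ the pair $(\bar\gamma^{(n)}\batta(\Ldens{m}^{(n)}), e^{\Kvar/2}\gamma^{(n)} p^{(n)})$ decays geometrically to zero. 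Since $\batta(\Ldens{a}^{(n)}) \le \bar\gamma^{(n)}\batta(\Ldens{m}^{(n)}) + \gamma^{(n)} p^{(n)} e^{\Kvar/2}$ (and similarly for $\perr$), this yields $\batta(\Ldens{a}^{(n)}) \to 0$, proving that $\Ldens{c}$ lies below the SatBP threshold.

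The main obstacle, and the step that needs the most care, is ensuring that a \emph{single} choice of $\Kvar$ simultaneously validates: (i) the BP-tracking bound after $\ell_0$ iterations, (ii) the near-stability hypotheses on $f(x^{*})$ and $\Kvar \ge 2\ln(\dr-1)$, and (iii) all of the smallness inequalities needed to reach the $\epsilon$ in \eqref{eqn:epsineq}. The key point is that $\ell_0$ depends only on $\Ldens{c}$ and the target Battacharyya level $x^{*}$ (which itself is fixed once $f$ is fixed), not on $\Kvar$; so one can first pick $x^{*}$, then $\ell_0$, and only then choose $\Kvar$ large enough to meet every remaining condition simultaneously.
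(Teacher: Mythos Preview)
Your proposal is correct and follows essentially the same three-step chain as the paper's proof: use Lemmas \ref{lem:distsymclippedandBP} and \ref{lem:clippeddecodervssymclipping} to bring $\batta(S_\Kvar^{(\ell_0)}(\Ldens{c},\Delta_0))$ below $x^{*}$, invoke Lemma \ref{lem:nearstability} to reach the $3e^{-\Kvar/2}$ regime, and then apply the contraction \eqref{eqn:epsineq} to conclude $\perr\to 0$. Your closing discussion about fixing $x^{*}$ and $\ell_0$ before selecting $\Kvar$ is a useful elaboration that the paper leaves implicit.
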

\begin{IEEEproof}
Assume the channel $\Ldens{c}$ is below the BP threshold.
Let $x^*$ be the constant of Lemma \ref{lem:nearstability}.
Under BP we have $\batta(T^{(\ell)}(\Ldens{c}, \Delta_0)) < x^*/2$ 
for some $\ell$ large enough.
By Lemma \ref{lem:clippeddecodervssymclipping}
and Lemma \ref{lem:distsymclippedandBP}
we have $\batta(S^{(\ell)}_\Kvar(\Ldens{c}, \Delta_0)) \le x^*$ for $\Kvar$ large enough.
By Lemma \ref{lem:nearstability}, and assuming $\Kvar$ large enough,
 we have $\batta(S^{(n)}_\Kvar(\Ldens{c}, \Delta_0))\le 3e^{-\frac{\Kvar}{2}}$ for all $n$ large enough.
The stability analysis above then implies that
$\lim_{n\rightarrow \infty}\perr{(S^{(n)}_\Kvar(\Ldens{c}, \Delta_0))} =0.$
\end{IEEEproof}

\section{Block Thresholds and Speed of Convergence\label{sec:Speed}}
Thresholds for iterative coding systems are usually {\em bit} thresholds.
In some cases one can show that the iterative block error rate has the
same threshold \cite{1523291,1522644}.
For standard irregular ensembles it is sufficient that variable node 
degrees are at least three.  The key observation for degree  three and above is that below the bit threshold the bit error rate converges to zero doubly exponentially in iteration.  One can maintain tree-like neighborhoods
with blocklength growing exponentially in iteration and therefore the
block error rate can be shown to converge to zero.
In \cite{1523291} it was shown that degree two
variable nodes connected in an accumulate structure could be admitted
while retaining the block threshold result provided an appropriate update schedule was adopted.  
The key idea there was that, by effectively updating a string of degree two updates in sequence
for each iteration, one could achieve exponential decay in error probability with as large 
and exponent as required.

In this section we consider the impact of saturation on the block threshold.
The stability analysis for ensembles with minimum variable node degree  three
shows exponential decay in iteration of bit error probability with arbitrarily large exponent.
Consequently, we can show for a suitable ensemble that the block threshold coincides with
the bit threshold.
Nevertheless, saturation has a pronounced effect on stability and we observe this especially in 
the conditions required for doubly exponential convergence of the bit error probability.
We show that doubly exponential convergence occurs for SatBP with minimum variable node
degree five.  With minimum variable node degree four doubly exponential convergence does not
occur but
can be recovered the addition of a single extra LLR magnitude and a two-tiered saturation.
For minimum variable degree three doubly exponential convergence of the bit error rate
can be recovered with a more radical modification of the decoding process 
(erase received values once the bit error rate is sufficiently small.)

Let us briefly review the standard block threshold arguments.
For further details we refer to \cite{1523291,1522644}.
Density evolution gives the bit error rate $P_b(\ell)$ as a function of iteration assuming 
tree-like neighborhoods up to iteration $\ell.$  For block length $n$ the block error rate,
assuming tree-like neighborhoods, is upper bounded by $n P_b(\ell).$
For the block error rate analysis we require that {\em all} computation trees are tree-like.
This is accomplished through an expurgation or modification of the standard ensemble.
The simplest approach, and the one we adopt, is to consider $n = n(\ell)$ large enough
so that the fraction of variable nodes whose neighborhoods are not tree-like tends to zero
as $\ell$ gets large.  Then, we modify the code by declaring the associated bits as known and set to $0.$
This lowers slightly the rate of the code and in effect modifies slightly the degree structure.
The net effect is an improvement in bitwise performance.
Asymptotically in large $\ell$ the modification is negligible so that full rate is recovered.

The basic calculation is as follows.
Consider a computation tree associated to $\ell$ iterations.
Let ${\cal M}_\ell$ denote the number of variable nodes in the computation tree.
Let $n \gg {\cal M}_\ell$ denote the block length.
It is not difficult to see that there exists a constant $\gamma$ independent of $\ell$ and $n$
such that the probability that the neighborhood is tree like is at least
\[
 (1- \gamma\frac{{\cal M}_\ell}{n})^{{\cal M}_\ell} \ge  (1- \gamma\frac{{\cal M}_\ell^2}{n})
\]
Now, we have a bound of the form ${\cal M}_\ell^2 \le e^{M \ell}$ (where $M$ depends on the
degree structure) and we choose
$n   =e^{N \ell}$ where $N > M.$ Thus $N$ depends only on the degree structure of the code.  It then follows that the fraction of variable nodes whose
neighborhoods are not tree-like is tending to $0$ in $\ell.$  To show that the block threshold
equals the bit threshold it remains only to show that
\[ \lim_{\ell \rightarrow \infty} e^{N  \ell} P_b(\ell)  = 0. \]
It is sufficient therefore to show that
\[ \liminf_{\ell \rightarrow \infty} (-\ln P_b(\ell))  > N\,. \]

Let us consider 
\(
E(\ell) : = \begin{bmatrix} 1 & 1
\end{bmatrix}
\begin{bmatrix} \bar{\gamma} \batta{(\Ldens{m})}  \\
 e^{\frac{\Kvar}{2}}  \gamma    p  
\end{bmatrix}^{(\ell)}\,.
\)
We clearly have 
$$P_b(\ell) \le E(\ell) =\bar{\gamma}^{(\ell)} \batta{(\Ldens{m}^{(\ell)})}  +  e^{\frac{\Kvar}{2}}  \gamma^{(\ell)}    p^{(\ell)}.$$

From the previous analysis we know that there exists an $\ell_0$ such that $E(\ell_0)$ is small. Recursing equation \eqref{eqn:epsineq}, we get $E(\ell+\ell_0) \leq (2\epsilon)^{\ell}E(\ell_0) = E(\ell_0)e^{-\ell\ln (1/(2\epsilon))}$. 
We can now make $\epsilon$ arbitrarily small by choosing $\Kvar$ large enough.
Hence for sufficiently large $\Kvar$ we obtain 
\[ \liminf_{\ell \rightarrow \infty} (-\ln E(\ell)) > N\, \]
thus establishing the desired result.

\subsection{Variable nodes with Minimum Degree at least 5}
In this section we show that SatBP does achieve doubly exponential convergence in $\ell$
of the error probability when the variable node degrees are at least five.

The rate of convergence depends largely on the variable node update.
It is clear from \eqref{eqn:gppbound} that, even with degree three,
$\gamma' p'$ has quadratic dependence on $\gamma p$ and $\bar{\gamma} \batta(\Ldens{m}).$
For doubly exponential convergence we can admit linear dependence of
$\bar{\gamma}\batta(\Ldens{m})$ on $\gamma p,$ but the dependence on 
$\bar{\gamma} \batta(\Ldens{m})$ must be of higher order. Let us make this more precise.

As before we assume $\Kvar$ and $N$ large enough so that for all $d$ and
$n \ge N$ we have
\(
 \frac{d(d-1)}{2} \batta(\Ldens{c})\batta(\Ldens{b}^{(n)})^{d-2} \le 1
\)
and $(4 d_r \gamma^{(n)} p^{(n)}) < 1$. Then from 
\eqref{eqn:gpitupdate} and \eqref{eqn:mitupdate}, assuming $d \ge 4,$ we get 
\begin{align}
\begin{split}\label{eqn:gpitupdateB}
 e^{\frac{\Kvar}{2}} \gamma^{(n+1)} p^{(n+1)}  & \le
  (d_r \xi \bar{\gamma}^{(n)}\batta(\Ldens{m}^{(n)}))^2 
\\  +
e^{-{\Kvar}} (4 & d_r   e^{\frac{\Kvar}{2}}\gamma^{(n)} p^{(n)})^{3}\,\\
+
e^{-\frac{\Kvar}{2}} (4 & d_r  e^{\frac{\Kvar}{2}}\gamma^{(n)} p^{(n)})^{2}  \batta(\Ldens{c})d_r\xi (\bar{\gamma}^{(n)}\batta(\Ldens{m}^{(n)})),
\end{split}
\\
\begin{split}\label{eqn:mitupdateB}
\bar{\gamma}^{(n+1)} \batta(\Ldens{m})^{(n+1)} & \le  
 (d_r \xi \bar{\gamma}^{(n)}\batta(\Ldens{m}^{(n)}))^2  
\\+
 2d\batta(\Ldens{c})d_r \xi  (\bar{\gamma}^{(n)}&  \batta(\Ldens{m}^{(n)})  e^{-\frac{\Kvar}{2}}(d_r 4 e^{\frac{\Kvar}{2}}\gamma^{(n)} p^{(n)}) \batta(\Ldens{b}^{(n)})
\\ +
 e^{-{\Kvar}}\batta(\Ldens{c}) (d_r   4 & e^{\frac{\Kvar}{2}}  \gamma^{(n)} p^{(n)})^{2},
\end{split}
\end{align}
from which we easily obtain that for $\Kvar$ large enough we have
\begin{align*}
\bar{\gamma}^{(n+1)} \batta(\Ldens{m})^{(n+1)}
& +
e^{\frac{\Kvar}{2}}\gamma^{(n+1)} p^{(n+1)} 
 \le \\
 & 2 (d_r \xi)^2 (\bar{\gamma}^{(n)} \batta(\Ldens{m})^{(n)}
+
e^{\frac{\Kvar}{2}}\gamma^{(n)} p^{(n)})^2,
\end{align*}
which yields doubly exponential convergence in the iterations.


\subsection{Decoder Alteration for Degree Four}

When $d=3$ (degree four) the SatBP decoder does not yield doubly exponential 
stability convergence.
The limiting effect arises in the variable node analysis
from messages of type 
$(\nti = 0,\ntm = 1,\ntj = 2)$ which contribute a linear dependence of 
$\batta(\Ldens{m'})$ on $\batta(\Ldens{m}).$
This occurs because $0 < 2 \Kcheck - (\Kcheck+\Kchannel) < \Kvar.$
If the support of $\Ldens{m}$ were reduced to $[-\Kalt,\Kalt]$ where
$2 \Kcheck - (\Kalt+\Kchannel) > \Kvar$ then this term would be eliminated and 
doubly exponential convergence can be recovered.  

Thus, for minimum degree four we consider a two step saturation at variable nodes
where all messages with magnitude at least $\Kvar$ are saturated to $\Kvar$
and messages with magnitude between $\Kalt$ and $\Kvar$ are saturated to $\Kalt.$
Hence, for this section we assume
the inequality
\[
2\Kcheck-\Kvar  \ge \Kchannel +\Kalt\,.
\]
We assume $\lambda \in (\frac{1}{2},1]$ and note that the above inequality then implies
$\Kchannel \le (1-\lambda) \Kvar.$

Note that an equivalent interpretation under scaling of the saturation levels is
that we append an additional magnitude level to the SatBP decoder. 
Under this interpretation we identify $\Kalt$ with $\Kvar$ and $\Kvar$ with
$\lambda^{-1} \Kvar$ where magnitudes above this level are saturated to
$\lambda^{-1} \Kvar.$  Under this interpretation the modification appears as an 
improvement on SatBP and, using this perspective, it is relatively easy to reproduce
the results on the approximation of BP by the saturating decoder.  Let us make this more precise. 
For notational purposes we will adhere to the original interpretation.

Let $\altSatLdens{a}_{\lambda,\Kgen}$ denote the double saturation of $\Ldens{a}$ and
let $\altSatLdens{a}_\sym{\lambda,\Kgen}$ denote the symmetrized version.
Let $S_{\lambda,\sym{\Kgen}}$ denote the corresponding one step density evolution update.
We easily obtain the following generalization of 
Lemma \ref{lem:DistSymClip}
\begin{align*}
d(\Ldens{a}, \altSatLdens{a}_{\lambda,\sym{\Kgen}}) 
\leq d(\Ldens{a}, \SatLdens{a}_{\lambda\sym{\Kgen}}) 
\leq 1 - \tanh(\lambda{\Kgen}/2),
\end{align*}
where $\Ldens{a}$ is any symmetric $L$-density.
It is not hard to see that we can also obtian the following generalization of Lemma \ref{lem:distsymclippedandBP},
\begin{align*}
d(T^{(\ell)}&(\Ldens{c}, \Delta_0), S_{\lambda,\sym{\Kgen}}^{(\ell)}(\Ldens{c}, \Delta_0)) \leq 2e^{-\lambda \Kgen + \ell\cdot\ln (2(\dl-1)(\dr-1))}.
\end{align*}
The relationship between the symmetrized decoder and the non-symmetrized version as analyzed in 
in Lemma \ref{lem:clippeddecodervssymclipping} remains essentially unchanged and we have that
for any $0 < \epsilon < 1$ and $\ell \in \naturals$, there exists a $\Kvar$ large enough
such that
$$
\batta(S^{(\ell)}_{\lambda,\Kgen}(\Ldens{c}, \Delta_0)) \leq \frac1{1-\epsilon}\batta(S^{(\ell)}_{\lambda,\sym{\Kgen}}(\Ldens{c}, \Delta_0)).
$$

We can now focus our attention on the stability analysis.
Let $\Ldens{a}$ be a density supported on $[-\Kvar,\Kvar].$
Then we have the two bounds,
\begin{align}
\batta(\altSatLdens{a}_{\lambda,\Kvar}) & \le  e^{\frac{\Kvar-\Kalt}{2}} \batta( \SatLdens{a}_{\Kvar}),
\label{eq:modvarnodeineq}\\
\batta(\altSatLdens{a}_{\lambda,\Kvar})  &\le   \batta (\Ldens{a})+e^{-\frac{\Kalt}{2}}\,.
\end{align}
The first (multiplicative) inequality is new and will be used to establish doubly exponential convergence. Indeed, 
since $e^{\frac{\Kvar - \Kalt}2} \geq 1$, we have 
\begin{align*}
& \batta(\altSatLdens{a}_{\lambda,\Kvar})  \leq  e^{\frac{\Kvar - \Kalt}2}e^{\frac{\Kvar}2}\int_{-\infty}^{-\Kvar}\Ldens{a}(x)dx \\ &  + \int_{-\Kvar}^{\Kalt}\!\!\!\!e^{-\frac{x}2}\Ldens{a}(x)dx  
  + e^{\frac{\Kvar - \Kalt}2}\int_{\Kalt}^{\Kvar} e^{-\frac{x}2}\Ldens{a}(x)dx \\
&  + e^{\frac{\Kvar - \Kalt}2} e^{-\frac{\Kvar}2}\int_{\Kvar}^{\infty}\Ldens{a}(x)dx \leq e^{\frac{\Kvar-\Kalt}{2}} \batta( \SatLdens{a}_{\Kvar}).
\end{align*}
The second (additive) inequality allows us to reproduce the near stability analysis of Section \ref{sec:nearstab}
to obtain
as in the derivation of \ref{vnodeallbnd} and \ref{checkallbnd} for the doubly saturated decoder
the bounds
\begin{align}
\batta{(\Ldens{a}^{(n)})} & \le 3e^{-{\Kalt}/2} \label{vnodeallbndA}\\
\batta{(\Ldens{b}^{(n)})} & \le  3\rho'(1) e^{-{{\Kalt}/2}}\label{checkallbndA}\,.
\end{align}
which hold for $n \ge N$ (for some $N\in\naturals$) and $\Kvar$ large enough assuming the channel is below the
BP threshold. 

We assume that no additional saturation is performed at the check node
so, in particular, Lemma \ref{lem:checknodeanalysis} still applies.
In the variable node analysis we note that \eqref{eqn:gppbound} still applies.
The change in the analysis concerns the bound on $\bar{\gamma}'\Ldens{m'}$
in the variable node analysis.
New considerations apply to the inner saturation of the density $\Ldens{m'}.$ 
Further note that the incoming densities in to the variable nodes have support on $\pm\Kcheck\cup (-\Kalt,\Kalt)$.
First we note the contribution from types with $\ntm \ge  2.$ Let the notation $\altSatLdens{a}_{\lambda, \Kvar}^\circ$ denote the density on the support $[-\Kalt, \Kalt]$ which is equivalent, in this case, to the support on $(-\Kvar, \Kvar)$. Using analysis in the previous section and the inequality \eqref{eq:modvarnodeineq} we get,
\begin{align*}
&\batta\left( \altSatLdens{\sum_{k=0}^{d-2} \binom{d}{k}\gamma^{k}\bar{\gamma}^{d-k}\Ldens{c}\vconv \Dd^{\vconv k}\vconv \Ldens{m}^{\vconv (d-k)}}_{\lambda, \Kvar}^\circ \right)
\le
\\&
 e^{\frac{\Kvar-\Kalt}{2}}  \frac{d(d-1)}{2} (\bar{\gamma}\batta(\Ldens{m}))^2 \batta(\Ldens{c})\batta(\Ldens{b})^{d-2}.
\end{align*}

Now we consider the contribution from types with $\ntm=1.$
A type $(\nti,1,\ntj)$ can have a non-zero contribution to $\Ldens{m'}$ only if the interval centered on
$(\ntj-\nti)\Kcheck$ of width $2(\Kchannel+\Kalt)$ intersects $(-\Kvar,\Kvar).$
Since we assume $2\Kcheck  \ge \Kchannel+\Kvar +\Kalt$ and $\Kcheck \le \Kvar$ we obtain
\begin{align*}
&\batta(\SatLdens{\Ldens{c} \vconv \Ldens{m} \vconv \Dd^{d-1}}_{\Kvar}^\circ)
\le
\\&
\batta(\Ldens{c})\batta(\Ldens{m})
\begin{cases}
\sum_{j=\frac{d-2}{2}}^{\frac{d}{2}} \binom{d-1}{j} q^{d-1-j} \tilde{q}^j
& d \text{ even}  \\
 \binom{d-1}{\frac{d-1}{2}} q^{\frac{d-1}{2}} \tilde{q}^\frac{d-1}{2}
& d \text{ odd},  
\end{cases}
\end{align*}
where recall that  $q = e^{\frac{\Kcheck}{2}}p$ and $\tilde{q} =e^{-\frac{\Kcheck}{2}}\omp.$
Again, combining the above with \eqref{eq:modvarnodeineq}, we obtain
\begin{align*}
&\batta\left(\altSatLdens{\Ldens{c} \vconv \Ldens{m} \vconv \Dd^{d-1}}_{\lambda,\Kvar}^\circ\right)\le
\\  &
 e^{\frac{\Kvar-\Kalt}{2}}\batta(\Ldens{c})\batta(\Ldens{m})
\begin{cases}
\binom{d-1}{\frac{d}{2}} (p\omp)^{\frac{d-2}{2}}\batta(D(p,\Kcheck))
& d \text{ even}  \\
 \binom{d-1}{\frac{d-1}{2}} (p\omp)^{\frac{d-1}{2}}  
& d \text{ odd}. 
\end{cases}
\end{align*}

Finally we consider the contribution from types with $\ntm=0.$
A type $(\nti,0,\ntj)$ will have a non-zero contribution to $\Ldens{m'}$ only if the interval centered on
$(\ntj-\nti)\Kcheck$ of width $2\Kchannel$ intersects $(-\Kvar,\Kvar).$
Hence we obtain
\begin{align*}
& \batta(\SatLdens{\Ldens{c}  \vconv \Dd^{d}}_{\Kvar}^\circ)
 \le
\batta(\Ldens{c})\begin{cases}
 \binom{d}{\frac{d}{2}} q^{\frac{d}{2}} \tilde{q}^{\frac{d}{2}}
& d \text{ even}  \\
\sum_{j=\frac{d-1}{2}}^{\frac{d+1}{2}}\binom{d}{j} q^{d-j} \tilde{q}^{j}
& d \text{ odd}  
\end{cases}
\end{align*}
which gives
\begin{align*}
& \batta(\altSatLdens{\Ldens{c}  \vconv \Dd^{d}}_{\lambda, \Kvar}^\circ)
\\ \le&
\batta(\Ldens{c})\begin{cases}
 \binom{d}{\frac{d}{2}} (p\omp)^{\frac{d}{2}} 
& d \text{ even}  \\
 e^{\frac{\Kvar-\Kalt}{2}}\binom{d}{\frac{d-1}{2}} (p\omp)^{\frac{d-1}{2}} 
\batta(D(p,\Kcheck))
& d \text{ odd}  
\end{cases}
\end{align*}

Since $\lambda > \frac{1}{2}$
we can assume for $d \ge 3$ and for $\Kvar$ large enough that,
\begin{align*}
 e^{\frac{\Kvar-\Kalt}{2}} & \frac{d(d-1)}{2}  \batta(\Ldens{c})\batta(\Ldens{b})^{d-2}
 \\ \stackrel{\eqref{checkallbndA}}{\le} & e^{-\frac{2\lambda-1}{2}\Kvar}\frac{d(d-1)}{2}  3\rho'(1)\batta(\Ldens{c})\batta(\Ldens{b})^{d-3}
\\ \le & 1.
\end{align*}
Also,
\begin{align*}
&\batta(\altSatLdens{\Ldens{c} \vconv \Ldens{m} \vconv \Dd^{d-1}}_{\lambda,\Kvar}^\circ)\le
\\  &
 e^{\frac{\Kvar-\Kalt}{2}}\batta(\Ldens{c})\batta(\Ldens{m})
\begin{cases}
\binom{d-1}{\frac{d}{2}} (p\omp)^{\frac{d-2}{2}}\batta(D(p,\Kcheck))
& d \text{ even}  \\
 \binom{d-1}{\frac{d-1}{2}} (p\omp)^{\frac{d-1}{2}}  
& d \text{ odd}  
\end{cases}
\\& \leq e^{\frac{\Kvar-\Kalt}{2}}\batta(\Ldens{c})\batta(\Ldens{m})
\begin{cases}
 (4p)^{\frac{d-2}{2}}
& d \text{ even}  \\
(4p)^{\frac{d-1}{2}}  
& d \text{ odd}. 
\end{cases}
\end{align*}
Finally,
\begin{align*}
& \batta(\altSatLdens{\Ldens{c}  \vconv \Dd^{d}}_{\lambda,\Kvar}^\circ)
 \le
\batta(\Ldens{c})\begin{cases}
  (4p\omp)^{\frac{d}{2}} 
& d \text{ even}  \\
 e^{\!-\!\frac{2\lambda\!-\!1}{2}\Kvar} \!3\rho'(1)(4p)^{\frac{d\!-\!1}{2}} 
& d \text{ odd}  
\end{cases}
\\ & \quad\quad\quad\quad\quad\quad\quad\quad\le
(4p)^{\lfloor\frac{d-1}{2}\rfloor}.
\end{align*}
To get the final bound on $\gamma'\batta(\Ldens{m}')$ we need to multiply the above bounds by $d\bar{\gamma}\gamma^{d-1}$ when $\ntm = 1$ and by $\gamma^d$ when $\ntm = 0$. For $\Kvar$ large enough we can make $4\gamma p \leq 1$. Thus we get,
\begin{align*}
\bar{\gamma}'\batta(\Ldens{m}') \le
\batta(\Ldens{c})&
\Bigl(
(\bar{\gamma}\batta(\Ldens{m}))^2
\!+\!
 de^{\frac{\Kvar\!-\!\Kalt}{2}}\batta(\Ldens{c})(\bar{\gamma}\batta(\Ldens{m})) (4 \gamma p)
\\ & +
(4  \gamma p)
\Bigr)
\end{align*}
Assuming $d \ge 3$ we also have from the previous analysis,
\begin{align}
\begin{split}
\gamma p' \le
&e^{-\frac{\Kvar}{2}} \frac{d(d-1)}{2} (\bar{\gamma}\batta(\Ldens{m}))^2 \batta(\Ldens{c})\batta(\Ldens{b})^{d-2}
\\&+
 (d+1)(4 \gamma p)^{\lfloor\frac{d}{2}\rfloor+1}\,+
d(4 \gamma p)^{\lfloor\frac{d}{2}\rfloor}  \batta(\Ldens{c})(\bar{\gamma}\batta(\Ldens{m})).
\end{split}\nonumber
\end{align}

Thus we now obtain quadratic dependence and hence doubly exponential convergence even when minimum variable node degree is four.

\subsection{Decoder Alteration for Degree Three}

In this section we will show that when the minimum variable node degree is 3, we can still have doubly exponential convergence of the bit error rate which implies an exponential (in blocklength) convergence of the block error rate with a decoder alteration. In this case, however, we require an iteration dependent alteration of the decoder.
We alter the decoder only after the error rate is sufficiently small. Hence, for the analysis 
we  assume operation in the near stability region. 
More precisely, we have $\batta{(\Ldens{a})} \le 3e^{-{{\Kvar}/2}}$, where $\Ldens{a}$ is the outgoing density at the variable nodes. Since $\Ldens{a} = \gamma D(p,\Kvar) +\bar{\gamma}\Ldens{m}$, we further have $\bar{\gamma}\batta{(\Ldens{m})} \le 3e^{-{{\Kvar}/2}}$ and  $\gamma p  \le 3e^{-{{\Kvar}/2}}.$ 

We note that the previous technique of saturation at two levels does not yield the quadratic dependence we seek for the term $\batta(\Ldens{m'})$. Indeed, any incoming density having the type $(\nti = 0,\ntm = 1,\ntj = 1)$ will always  contribute to the outgoing density of type $\Ldens{m'}$, implying linear dependence of $\batta(\Ldens{m'})$ on $\batta(\Ldens{m}).$
To show doubly exponentially fast convergence of the bit error rate, we modify the decoder as follows. After the messages have become reasonably good, i.e., we are in the near stability region, we erase the channel information. The intuition is that at this point the extrinsic information is good enough for successful decoding. Then for every incoming message we make a hard-decision to either $+1$ or $-1$ based on the sign of its LLR value. The decoding algorithm then proceeds in a manner similar to the erasure decoder \cite{RiU08}. Let us explain this in more detail.   

The decoder has now three messages $\{-1,0,+1\}$. At the variable node side, there is an erasure message on the outgoing edge if and only if all the incoming messages are erasures or there is exactly one $+1$ and $-1$ message. The outgoing edge carries a $-1$ message if and only if all incoming messages are $-1$ or one message is an erasure and the other is $-1$. At the check node side, the outgoing message is an erasure if at least one incoming message is an erasure, else the outgoing message is the product of the incoming messages.  We can now write the density evolution equation analysis for this decoder as follows. Let $x_\ell$ and  $y_\ell$ represent the probability of the messages $0$ and $-1$, respectively, coming out of the variable node. Also, let $w_\ell$ and  $z_\ell$ represent the probability of the messages $0$ and $-1$, coming out of the check node respectively. Since we are in the near stability region, it is not hard to see that $x_0 \leq \bar{\gamma}\batta{(\Ldens{m})} \leq ce^{-{\Kvar}/2}$ and $y_0 \leq \batta{(\Ldens{a})} \leq ce^{-{\Kvar}/2}$. Indeed, $y_0 = \int_{x<0} \Ldens{a}(x)dx \leq \int_{x\leq 0} \Ldens{a}(x)e^{-x/2}dx \leq \batta(\Ldens{a}).$ From the decoder rules we immediately get,
\begin{align*}
x_\ell \stackrel{(a)}{\leq} & w^2_{\ell} + z_{\ell}, \\
y_\ell = & z^2_{\ell} + w_{\ell}z_{\ell}, \\
w_{\ell} = & 1 - (1 - x_{\ell-1})^{d_r-1} \leq (d_r - 1) x_{\ell-1}, \\
z_{\ell} \stackrel{(b)}{\leq} & 1 - (1 - y_{\ell-1})^{d_r-1} \leq (d_r - 1) y_{\ell-1},
\end{align*}
where $d_{r}$ is the check node degree. To obtain $(a)$ we simply upper bound the probability of message with value $+1$ by 1. At the check node side, the outgoing message is $-1$ if there are odd number of incoming messages that are $-1$. This implies that at least one incoming message must be $-1$ and hence we obtain inequality $(b)$.

Combining the four inequalities above, it is not hard to see that $x_{\ell} + y_{\ell} \leq C(x_{\ell-2} + y_{\ell-2})^2$ for some positive constant $C$. This implies $x_{\ell} + y_{\ell} \leq (Ax_0)^{2^{n/2}}$, where $A$ is some positive constant and $n$ is the number of iterations of the erasure decoder. Hence we obtain the doubly exponential convergence.

\section{Threshold for the SatBP Decoder and Channels with Infinite Support}



Consider a channel family, BMS($\ent$), ordered by $\ent$ and let $\ent^{\BPsmall}(\lambda, \rho)$ denote the BP threshold when transmitting over this channel family using a $(\lambda, \rho)$ ensemble. Also, a priori the channel has support on $(-\infty,\infty)$.

Let us describe the analysis of the SatBP decoder in this case. Consider transmission over a channel with $L$-density $\Ldens{c}$. From the previous analysis we have that the channel support must be finite for stability of the perfect decoding fixed point when we use the SatBP decoder. As a result, we saturate the channel $\Ldens{c}$ to a value $\Kchannel\leq 2\Kcheck- \Kvar$ before we feed it to the SatBP decoder. The value $\Kcheck$ is defined in section~\ref{sec:stabmindegree3}.  Thus we consider transmission over a channel $\lfloor\Ldens{c}\rfloor_{\Kchannel}$. 

For the purpose of analysis we also consider the corresponding symmetric channel, achieved via flipping as explained previously. Denote it by $\lfloor\Ldens{c}\rfloor_\sym{\Kchannel}.$ We have the following lemma.
\begin{lemma}[Stability Condition for Sym. Sat. Channels]
Consider transmission over a general BMS channel $\Ldens{c}$ using $(\lambda,\rho)$ ensemble. Let $\Ldens{c} \in$ BMS($\ent$) be such that it satisfies the following stability condition,
\begin{align*}
(\lambda'(0)\rho'(1))(\batta(\Ldens{c})  + 2e^{-\Kchannel/2}) < 1.
\end{align*}
Then, the full BP decoder is successful when transmitting over the symmetric channel $\lfloor\Ldens{c}\rfloor_\sym{\Kchannel}$. Furthermore, the loss in capacity is also bounded by
$ \frac{2}{\ln 2}e^{-\Kchannel/2}$.
\end{lemma}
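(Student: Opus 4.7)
The plan is to prove the Battacharyya estimate, which turns the hypothesis into the classical BP stability condition, and to separately bound the capacity gap.

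First I would use the decomposition from \eqref{eq:satDensrepresentation} applied to the channel:
$$\lfloor\Ldens{c}\rfloor_\sym{\Kchannel} = \gamma D(p,\Kchannel) + \Ldens{c}(x)\mathds{1}_{\{|x|<\Kchannel\}},$$
with $p = e^{-\Kchannel}/(1+e^{-\Kchannel})$ and $\gamma = \prob_{\Ldens{c}}\{|x|\ge\Kchannel\}$. A direct computation gives
$$\batta(D(p,\Kchannel)) = pe^{\Kchannel/2}+(1-p)e^{-\Kchannel/2} = \frac{2e^{-\Kchannel/2}}{1+e^{-\Kchannel}} \le 2e^{-\Kchannel/2},$$
and the truncated bulk contributes at most $\batta(\Ldens{c})$ because $\int_{|x|<\Kchannel}\Ldens{c}(x)e^{-x/2}\dee x \le \batta(\Ldens{c})$. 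Hence $\batta(\lfloor\Ldens{c}\rfloor_\sym{\Kchannel}) \le \batta(\Ldens{c}) + 2e^{-\Kchannel/2}$, and the hypothesis is exactly the standard BP stability inequality $\lambda'(0)\rho'(1)\batta(\lfloor\Ldens{c}\rfloor_\sym{\Kchannel}) < 1$ from \cite{RiU08}, applied to the symmetric channel $\lfloor\Ldens{c}\rfloor_\sym{\Kchannel}$ (symmetry coming from Lemma~\ref{lem:symclipprop}). The classical BP stability argument then takes over: once DE on $\lfloor\Ldens{c}\rfloor_\sym{\Kchannel}$ brings the Battacharyya parameter into the stability region, subsequent iterations drive it to zero; the Wasserstein-based tracking of Lemma~\ref{lem:distsymclippedandBP}, together with the implicit assumption that $\Ldens{c}$ lies below the BP threshold of its family, ensures that DE does enter this region in finitely many iterations.

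For the capacity-loss estimate I would use the elementary pointwise bound $\log_2(1+e^{-y}) \le e^{-y/2}/\ln 2$, which is equivalent to $\ln(1+t^2) \le t$ for $t\ge 0$ (a one-line calculus exercise via derivatives). This yields the uniform estimate $H(\Ldens{a}) \le \batta(\Ldens{a})/\ln 2$ for every density $\Ldens{a}$. Writing
$$H(\lfloor\Ldens{c}\rfloor_\sym{\Kchannel}) - H(\Ldens{c}) = \gamma H(D(p,\Kchannel)) - \int_{|y|\ge\Kchannel}\Ldens{c}(y)\log_2(1+e^{-y})\dee y,$$
dropping the non-negative second term, and invoking $\gamma H(D(p,\Kchannel)) \le \batta(D(p,\Kchannel))/\ln 2 \le \tfrac{2}{\ln 2}e^{-\Kchannel/2}$, finishes the capacity claim.

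The only genuine obstacle is the \emph{success-of-BP} statement itself: the given hypothesis is a stability condition, and in general a stability condition is necessary but not sufficient for DE to reach the zero-error fixed point. The cleanest resolution is to read the lemma as using the implicit assumption that $\Ldens{c}$ is below the BP threshold of its family, and then appealing to the Wasserstein/degradation machinery of Sections~\ref{sec:SatBP}--\ref{sec:nonsymSatBP} to show that DE on $\lfloor\Ldens{c}\rfloor_\sym{\Kchannel}$ closely tracks DE on $\Ldens{c}$ during the finitely many iterations needed to reach the near-stability region; the refined stability inequality supplied by the hypothesis then drives the rest of the iterations to zero. Everything beyond that is a routine chain of elementary inequalities.
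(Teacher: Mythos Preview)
Your proposal is correct and follows the same overall architecture as the paper: verify that the hypothesis is the standard stability condition for the symmetric saturated channel, use Wasserstein tracking to show DE on $\lfloor\Ldens{c}\rfloor_\sym{\Kchannel}$ reaches the stability region (under the implicit below-threshold assumption for $\Ldens{c}$, which the paper also uses), and then let stability finish the job.

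The technical execution differs in two places. For both the Battacharyya bound $\batta(\lfloor\Ldens{c}\rfloor_\sym{\Kchannel})\le\batta(\Ldens{c})+2e^{-\Kchannel/2}$ and the entropy bound, you compute directly from the decomposition and the pointwise inequality $\ln(1+t^2)\le t$; the paper instead routes both through the Wasserstein bound $d(\Ldens{c},\lfloor\Ldens{c}\rfloor_\sym{\Kchannel})\le 1-\tanh(\Kchannel/2)$ together with (ix) of Lemma~13 in \cite{KRU12b}. Your route is more elementary and self-contained; the paper's is more uniform with the rest of the manuscript. One small imprecision on your side: Lemma~\ref{lem:distsymclippedandBP} as stated compares full BP to \emph{message}-saturated BP, not to BP over a \emph{channel}-saturated input, so it does not apply verbatim. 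The paper redoes that Wasserstein recursion for the channel-perturbation case (triangle inequality plus (vi),(viii) of Lemma~13 in \cite{KRU12b}, yielding $\batta(T^{(\ell)}(\lfloor\Ldens{c}\rfloor_\sym{\Kchannel},\Delta_0))\le\batta(T^{(\ell)}(\Ldens{c},\Delta_0))+2\sqrt{2}\,e^{(-\Kchannel+\ell\ln(2(\dl-1)(\dr-1)))/2}$). The argument is the same template, so your pointer to ``the Wasserstein/degradation machinery of Sections~\ref{sec:SatBP}--\ref{sec:nonsymSatBP}'' is morally right, but you should flag that a short re-derivation is needed rather than a direct citation of Lemma~\ref{lem:distsymclippedandBP}.
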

\begin{proof}
We bound the Wasserstein distance between the DE with channel $\Ldens{c}$ and DE with channel $\lfloor\Ldens{c}\rfloor_\sym{\Kchannel}$ as follows,
\begin{align*}
& d(T^{(\ell)}(\Ldens{c},\Delta_0),T^{(\ell)}(\lfloor\Ldens{c}\rfloor_\sym{\Kchannel},\Delta_0)) = \\
& d(T(\Ldens{c},T^{(\ell-1)}(\Ldens{c},\Delta_0)),T(\Ldens{c},T^{(\ell-1)}(\lfloor\Ldens{c}\rfloor_\sym{\Kchannel},\Delta_0))) + \\
& d(T(\Ldens{c},T^{(\ell-1)}(\lfloor\Ldens{c}\rfloor_\sym{\Kchannel},\Delta_0)),T(\lfloor\Ldens{c}\rfloor_\sym{\Kchannel},T^{(\ell-1)}(\lfloor\Ldens{c}\rfloor_\sym{\Kchannel},\Delta_0))) \\
& \stackrel{\text{(vi,viii), Lem. 13 in \cite{KRU12b}}}{\leq} \alpha_\ell d(T^{(\ell-1)}(\Ldens{c},\Delta_0),T^{(\ell-1)}(\lfloor\Ldens{c}\rfloor_\sym{\Kchannel},\Delta_0)) \\
& \quad\quad\quad\quad\quad +  2d(\Ldens{c},\lfloor\Ldens{c}\rfloor_\sym{\Kchannel}) \\
& = \alpha_\ell d(T^{(\ell\!-\!1)}(\Ldens{c},\!\Delta_0),T^{(\ell\!-\!1)}(\lfloor\Ldens{c}\rfloor_\sym{\Kchannel},\!\Delta_0)) \!+\!  2(1\!-\! \tanh(\frac{\Kchannel}2)),
\end{align*}
where
\begin{align*}
\alpha_{\ell} & = 2 (\dl-1)  
 \sum_{j\!=\!1}^{\dr\!-\!1} (1\!-\!\batta^2(\Ldens{a}))^{\frac{\dr\!-\!1\!-\!j}2}(1\!-\!\batta^2(\Ldens{b}))^{\frac{j\!-\!1}2},
\end{align*}
where $\Ldens{a} = T^{(\ell - 1)}(\Ldens{c}, \Delta_0)$ and $\Ldens{b} = T^{(\ell - 1)}(\lfloor\Ldens{c}\rfloor_\sym{\Kchannel}, \Delta_0)$ and $\dl$ and $\dr$ correspond to the average variable node and check node degrees. Following the same steps as in the proof of lemma~\ref{lem:distsymclippedandBP} we get
\begin{align*}
\batta(T^{(\ell)}&(\lfloor\Ldens{c}\rfloor_\sym{\Kchannel}, \Delta_0))  \\ 
& \leq \batta(T^{(\ell)}(\Ldens{c}, \Delta_0)) + 2\sqrt{2}e^{\frac{-\Kchannel + \ell\cdot\ln (2(\dl-1)(\dr-1))}2}.
\end{align*}
Thus, for any $\xi > 0$, we can choose $\Kchannel$ large enough, such that $\batta(T^{(\ell)}(\lfloor\Ldens{c}\rfloor_\sym{\Kchannel}, \Delta_0)) \leq \xi$ for all $\ell \geq \ell_0$. Here $\ell_0$ is such that $\batta(T^{(\ell_0)}(\Ldens{c}, \Delta_0)) \leq \xi/2$.

Let us denote $x_{\ell} = \batta(T^{(\ell)}(\lfloor\Ldens{c}\rfloor_\sym{\Kchannel}, \Delta_0))$. Using extremes of information combining \cite{RiU08} we get $x_{\ell} \leq \batta(\lfloor\Ldens{c}\rfloor_\sym{\Kchannel})\lambda(1 - \rho(1 - x_{\ell-1}))$. Expanding around zero, we get $x_{\ell} \leq \batta(\lfloor\Ldens{c}\rfloor_\sym{\Kchannel})\lambda'(0)\rho'(1)x_{\ell-1} + O(x^2_{\ell-1})$. Using the hypothesis of the lemma, lemma~\ref{lem:DistSymClip} and \text{(ix), Lem. 13 in \cite{KRU12b}} we have, $\batta(\lfloor\Ldens{c}\rfloor_\sym{\Kchannel})\lambda'(0)\rho'(1) < 1$. Hence, there exists $\eta > 0$ such that $\batta(\lfloor\Ldens{c}\rfloor_\sym{\Kchannel})\lambda'(0)\rho'(1) + \eta < 1$. From above we know that there exists $\ell$  (and consequently $\Kchannel$ large enough) such that the second order term $O(x^2_{\ell-1})$ is upper bounded by $\eta x_{\ell-1}$. Thus we get $x_{\ell} \leq (\batta(\lfloor\Ldens{c}\rfloor_\sym{\Kchannel})\lambda'(0)\rho'(1)+ \eta)x_{\ell-1} < x_{\ell-1}$. Thus $x_\ell \to 0$ as $\ell \to \infty$ and we get the lemma.

The loss in capacity is bounded by using the Wasserstein distance. Thus $d(\Ldens{c},\lfloor\Ldens{c}\rfloor_\sym{\Kchannel}) \leq 1 - \tanh(\Kchannel/2)$ implies $\entropy(\lfloor\Ldens{c}\rfloor_\sym{\Kchannel}) \leq \entropy(\Ldens{c}) + \frac{2}{\ln 2}e^{-\Kchannel/2}$. Above we have used $1 - \tanh(\Kchannel/2) \leq 2e^{-\Kchannel}$ and \text{(ix), Lem. 13 in \cite{KRU12b}}. Thus,
$1 - \entropy(\lfloor\Ldens{c}\rfloor_\sym{\Kchannel}) \geq 1 - \entropy(\Ldens{c}) - \frac{2}{\ln 2}e^{-\Kchannel/2}$.
\end{proof}

From the above lemma and the analysis in section~\ref{sec:nonsymSatBP} we get\footnote{Recall that we associated a uniform random variable to each variable node which were used for the flipping operations for outgoing messages from the variable node side. For the present case, we can associate a random variable to each channel input which is used for the flipping operation for symmetrizing the saturated channel. These two operations are independent of each other. In  section~\ref{sec:nonsymSatBP} the event $A_\Kvar$ now corresponds to the event that there are no flips at both the variable node and channel input. This probability will be lower bounded by $1 - 2e^{-\Kvar} |V(\sf{T})|$.} $\batta(T^{(\ell)}(\lfloor\Ldens{c}\rfloor_{\Kchannel}, \Delta_0))  \leq \frac{1}{1-\epsilon}\batta(T^{(\ell)}(\lfloor\Ldens{c}\rfloor_\sym{\Kchannel}, \Delta_0))$, for any $0<\epsilon < 1$. Since $\Ldens{c} \prec \SatLdens{c}_\Kchannel \prec \SatLdens{c}_\sym{\Kchannel},$ we have $\entropy(\lfloor\Ldens{c}\rfloor_{\Kchannel}) \leq \entropy(\lfloor\Ldens{c}\rfloor_\sym{\Kchannel})$ which implies that $1 - \entropy(\lfloor\Ldens{c}\rfloor_{\Kchannel}) \geq 1 - \entropy(\Ldens{c}) - \frac{2}{\ln 2}e^{-\Kchannel/2}$.

Note that the stability analysis of section~\ref{sec:stabilityanalysis} does not rely on the symmetry of the channel. The symmetry allows us to  that Battacharyya parameter of the channel is less than one, which is then used to show bounds. In the present case, since 
$\batta(\lfloor\Ldens{c}\rfloor_{\Kchannel}) \leq \batta(\Ldens{c}) + e^{-\Kchannel/2}$  we can proceed with the stability analysis as before and conclude that the SatBP decoder is successful when we first truncate the channel to a large but finite support. Furthermore, this truncation causes minimal loss in the maximum number of information bits that can be transmitted. Finally, we can also say that for any channel $\Ldens{c} \prec \Ldens{c}^{\BPsmall}$ such that $\batta(\Ldens{c}) < \batta(\Ldens{c}^{\BPsmall}) - 2e^{-\Kchannel/2}$, the SatBP decoder is successful over the truncated channel. Thus, the loss in the BP threshold is also upper bounded by $Ce^{-\Kchannel/2}$ for some constant $C$. Note that the threshold for the SatBP decoder is now defined with respect to the fixed point with Battacharyya parameter equal to $e^{-{\Kvar}/2}$.



\section{Conclusions and Outlook}
In this paper we perform perturbation analysis of the standard LDPC code ensemble and BP decoder combination. 
Specifically, we show that saturating the messages arising in the BP decoding process affects the final success of the decoder.
For general irregular LDPC code ensembles with minimum variable node degree three, we show that the saturation of the messages still allows for successful decoding as long as the saturation level $\Kvar$ is large enough. More precisely, whenever the channel is below the BP threshold, then there exists a saturation value $\Kvar$, which is large enough but finite, such that the SatBP decoder is also below its threshold.
The stability of the SatBP decoder requires the support of the channel to be finite. In the case of channels with infinite support, we show that by saturating the channel first to a large enough value, we sacrifice little in terms of capacity. Then, on the saturated channel, the SatBP decoder is successful. Thus there is minimal sacrifice in the BP threshold of the LDPC code ensemble when we consider the SatBP decoder. 

When the minimum variable node degree is two the saturated decoding system fails to have stability of perfect decoding. We show that the perfect decoding fixed point (the delta function at $\Kvar$) cannot be a stable fixed point of DE for the SatBP decoder unless the channel is the erasure channel.
The key issue is that a density update at a degree two node variable nodes is convolution with the channel density.
Repeated $k$ times, this involves to convolution of the channel density with itself $k$ times.
In general this is equivalent to a channel density with support width $k$ times wider than the original channel.
If the incoming density is saturated then for $k$ large enough a positive error probability is unavoidable.
If the code structure (e.g. protograph designs) ensures that the number of successive degree two node updates in 
the density evolution is bounded, then the expansion $k$ is bounded and one can again recover stability with large enough
saturation.  Essentially, what is required is that each degree two variable node subgraph connected component (asymptotically a tree)
have bounded size.

To give a more detailed indication of how this can work we consider the min-sum decoder and show that
perfect decoding can be invariant even in the presence of degree two variable nodes.
Let the maximum component size be denoted by $A$. 
For an edge $e$ connected to a degree two variable node let $2 L_e + 1$ denote the maximum path length to the edge of
the connected component.  Note that $L_e + 1 \le A.$
To show invariance of a perfect decoding we assume $2 (\Kvar - A \Kchannel ) - \Kchannel \ge \Kvar.$
Assume in some iteration that the following hold,
\begin{itemize}
\item
The incoming message to a degree two variable node with edges $e_1,e_2$ on edge $e_i$ is at least $\Kvar - L_{e_i} \Kchannel.$
\item
Incoming messages on a degree three or higher variable node are at least $\Kvar - A \Kchannel.$
\end{itemize}
It is easy to check that this implies perfect decoding.
Proceeding to the next iteration we obtain,
\begin{itemize}
\item
The outgoing message on a degree two variable node on edge $e_2$ is at least $\Kvar - (L_{e_1}+1) \Kchannel$ (and vice-versa for $e_1.$)
\item
Outgoing messages on a degree three or higher variable node are at least $\Kvar.$
\end{itemize}
Now consider the subsequent incoming messages to the variable nodes.
The minimum outgoing message from the previous iteration is at least $\Kvar-A\Kchannel$ so incoming messages to 
a degree three or higher variable node are at least $\Kvar - A \Kchannel.$
Consider edge $e_1$ attached to a degree two variable node.
The longest path, not traversing $e_1,$ from its neighboring check node to a leaf check of the degree two connect component
has edge length at most $2 L_{e_1}.$  Hence the minimum incoming message to the neighbor check node not from $e_1$ is
 $\Kvar - L_{e_1} \Kchannel.$  The minimum incoming message on edge $e_1$ to the degree two variable node is therefore at least
$\Kvar - L_{e_1} \Kchannel.$  Thus, under the stated assumptions the above perfect decoding conditions are invariant.

\subsection*{Future Directions:} To complete the story of the analysis of the BP decoder under practical considerations, it would be nice to have the analysis of the quantized BP decoder. Thus, the messages are only allowed to take certain values on the real line. Every message is quantized to a bin and only the bin value is passed around. 
For the ease of analysis one can assume a uniformly quantized message space. It is not hard to see that such a quantized BP decoder is symmetric. Thus the standard DE analysis is applicable to the quantized BP decoder. A clear next step would be to see if the analysis performed for the SatBP decoder goes through for the quantized BP decoder. 
If yes, then it would be nice to see a unified perturbation analysis of saturated  and quantized messages. 

A nice side-effect of the analysis done above is that when there are degree three variable nodes present in the LDPC code, it is perhaps better to erase the channel information at those bits completely (after enough iterations are performed) to allow faster convergence to the correct codeword. This sheds some light on the practical design of BP decoders under saturation of messages. Could we glean similar lessons for practical decoder design when we consider the saturated and quantized BP decoder?

Another research direction would be to quantify the saturation and quantization levels in terms of gap to capacity. Specifically, what should be the scaling of the saturation and quantization value when we backoff, say, $\delta$ from the BP capacity, $\ent^{\BPsmall}$. It seems intuitive that as we backoff more  from $\ent^{\BPsmall}$ we should be able to attain the same error rate with smaller values of the saturation level and larger levels of quantization. In other words, as the gap to capacity increases, we should require lesser number of bits in the binary representation of the messages to get the desired error rate.

\begin{appendices}

\section{Battacharrya Parameter Inequality -- Lemma~\ref{lem:additivecheckbound}}\label{app:checknodebatta}

We require the following inequality
\begin{lemma}\label{lem:pprodineq}
Let $p_1,...,p_k$ each lie in $[0,1].$ Then
\[
\frac{1-\prod_{i=1}^k (1-2p_i)}{2} \le \sum_{i=1}^k p_i
\]
\end{lemma}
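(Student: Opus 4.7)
My plan is to give a short probabilistic proof based on a union bound, with a direct induction argument as a backup.

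First I would recognize the left-hand side as a probability. Let $X_1,\dots,X_k$ be independent Bernoulli random variables with $\Pr\{X_i=1\}=p_i$ and $\Pr\{X_i=0\}=1-p_i$. Then $\mathbb{E}[1-2X_i] = 1-2p_i$, so by independence
\[
\prod_{i=1}^k (1-2p_i) \;=\; \mathbb{E}\!\left[\prod_{i=1}^k (1-2X_i)\right] \;=\; \mathbb{E}\!\left[(-1)^{\sum_i X_i}\right].
\]
Rearranging,
\[
\frac{1-\prod_{i=1}^k(1-2p_i)}{2} \;=\; \Pr\!\Big\{\textstyle\sum_i X_i \text{ is odd}\Big\}.
\]

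Next, the event that $\sum_i X_i$ is odd is contained in the event that $\sum_i X_i \ge 1$, i.e., that at least one $X_i$ equals $1$. Hence by the union bound
\[
\Pr\!\Big\{\textstyle\sum_i X_i \text{ is odd}\Big\} \;\le\; \Pr\!\Big\{\bigcup_{i=1}^k \{X_i=1\}\Big\} \;\le\; \sum_{i=1}^k \Pr\{X_i=1\} \;=\; \sum_{i=1}^k p_i,
\]
which is exactly the claimed inequality.

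If a purely algebraic proof is preferred, the same bound follows by induction on $k$. The $k=1$ case gives equality. For the inductive step, write $A=\prod_{i=1}^{k-1}(1-2p_i)\in[-1,1]$, so
\[
\frac{1-\prod_{i=1}^k(1-2p_i)}{2} \;=\; \frac{1-A}{2} + A p_k \;\le\; \sum_{i=1}^{k-1} p_i + p_k,
\]
where the inductive hypothesis bounds $(1-A)/2$ and $A\le 1$ together with $p_k\ge 0$ bounds the second term. There is no real obstacle here; the only thing to be a little careful about is noting that $A$ can be negative, in which case $Ap_k\le 0\le p_k$ still holds.
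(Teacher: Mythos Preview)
Your proof is correct, and both the probabilistic argument and the induction are valid. The paper, however, takes a different route: it observes that equality holds when all $p_i=0$, then differentiates each side with respect to $p_j$. The partial derivative of the left-hand side is $\prod_{i\neq j}(1-2p_i)$, which has magnitude at most $1$, while the partial derivative of the right-hand side is exactly $1$; integrating from the origin gives the inequality.

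All three arguments are short and elementary. Your probabilistic version has the appealing feature of identifying the left-hand side as the probability that a sum of independent Bernoulli variables is odd, which is precisely the error probability at a BSC-style check node update and so fits the coding-theoretic context of the paper nicely; the union bound then makes the inequality conceptually transparent. The paper's derivative argument is perhaps the most mechanical of the three and generalizes easily to any situation where one function dominates another at a base point and has everywhere-larger partial derivatives. Your induction is the most self-contained and requires no interpretation at all.
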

\begin{IEEEproof}
We have equality when $p_i=0$ for each $i.$
Differentiating the left hand side with respect to $p_j$ we obtain
$\prod_{\{ i\in [1:k]\backslash j\}}  (1-2p_i)$ which has magnitude at most $1$ and 
differentiating the left hand side with respect to $p_j$ we obtain $1.$
The inequality therefore follows by integration.
\end{IEEEproof}

The following generalizes Lemma~\ref{lem:additivecheckbound}.
\begin{lemma}\label{lem:checkbattabound}
Let $D_1,D_2,...D_k$ be L-densities of the form $D_i = D(p_i,\Kgen)$
and let $\Ldens{a}_1,\ldots,\Ldens{a}_{d-k}$ be L-densities.
We do not assume that any of these densities are symmetric.
Let $\Ldens{b}$ denote the density emerging from a check node update
when the incoming densities are $D_1,...,D_k,\Ldens{a}_1,\ldots,\Ldens{a}_{d-k},$ then
\[
\batta{(\Ldens{b})} 
\le
\bigl(1+\sum_{i=1}^k (e^{{\Kgen}/2} \batta{(D_i)}-1)\bigr)
\bigl(\sum_{i=j}^{d-k}  \batta{(\Ldens{a}_j)}\bigr)\,.
\]
\end{lemma}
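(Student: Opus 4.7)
My plan is to prove the lemma in three stages, reducing to a single two-point input and exploiting the bounded range of the check-node output.

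First, I would show that check-node convolution preserves the two-point structure: $\cconv_{i=1}^k D(p_i,\Kgen) = D(q_k, \Kgen')$, where $q_k = \tfrac{1}{2}(1-\prod_i(1-2p_i))$ and $\Kgen' = 2\tanh^{-1}(\tanh(\Kgen/2)^k) \le \Kgen$, because the tanh-product rule applied to $\pm\Kgen$ inputs yields output magnitude $\Kgen'$ with sign equal to the product of input signs. Combining Lemma~\ref{lem:pprodineq} (which yields $q_k\le\sum_i p_i$) with monotonicity of $e^z-1$ gives $q_k(e^{\Kgen'}-1)\le\sum_i p_i(e^{\Kgen}-1)=\sum_i(e^{\Kgen/2}\batta(D_i)-1)$, so it suffices to treat the case of a single two-point input $D(q_k,\Kgen')$ together with the $d-k$ general densities.

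Next, let $Y^\star = 2\tanh^{-1}(\tanh(\Kgen'/2)\prod_j\tanh(X_j/2))$ denote the check-node output when the two-point input is replaced by its positive value $+\Kgen'$ (with $X_j\sim\Ldens{a}_j$ independent); then the actual output $Y=\sigma Y^\star$ for $\sigma=\pm1$ independent of $Y^\star$ with $\prob(\sigma=-1)=q_k$. Since each $|\tanh(\cdot)|\le1$, we have $|Y^\star|\le\Kgen'$, hence $\alpha := e^{-Y^\star/2}\in[e^{-\Kgen'/2},e^{\Kgen'/2}]$, and the crucial pointwise inequality $\alpha^{-1}\le e^{\Kgen'}\alpha$ (equivalent to $\alpha^2\ge e^{-\Kgen'}$) gives
\[
\batta(\Ldens{b}) = (1-q_k)\,E[\alpha] + q_k\,E[\alpha^{-1}] \le (1+q_k(e^{\Kgen'}-1))\,E[\alpha].
\]
I then identify $E[\alpha] = \batta(\Delta_{\Kgen'}\cconv\Ldens{a}_1\cconv\cdots\cconv\Ldens{a}_{d-k})$ and bound it by $\sum_j\batta(\Ldens{a}_j)$, which combined with the first step will yield the stated inequality.

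The main obstacle is the last bound. The naive application of Lemma~\ref{lem:additivecheckbound} with $\Delta_{\Kgen'}$ as an additional input introduces an extraneous additive $e^{-\Kgen'/2}$ term that must be eliminated. Removing it requires a sharpening exploiting that a check node with a deterministic input $\Delta_{\Kgen'}$ acts as a contraction (by factor $\tanh(\Kgen'/2)$ in the tanh domain), so the Battacharyya of the output is controlled by the Battacharyyas of the truly random inputs alone and inherits no additive constant from the delta. All remaining steps reduce to the routine computations already sketched.
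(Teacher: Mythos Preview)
Your reduction in the first two stages is clean and essentially equivalent to the paper's: after collapsing $D_1,\ldots,D_k$ into $D(q_k,\Kgen')$ and applying your pointwise inequality $\alpha^{-1}\le e^{\Kgen'}\alpha$, you arrive at
\[
\batta(\Ldens{b})\le\bigl(1+q_k(e^{\Kgen'}-1)\bigr)\,E[\alpha],
\]
and once one reduces the $\Ldens{a}_j$ to two-point densities $D(q_j,z_j)$ one has $E[\alpha]=Qe^{r/2}+(1-Q)e^{-r/2}$ with $Q=\tfrac12(1-\prod_j(1-2q_j))$ and $r$ the full output magnitude. This is exactly the second factor in the paper's intermediate inequality, so the two routes converge.

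The gap is real and lies precisely where you flag it. Your proposed bound $E[\alpha]\le\sum_j\batta(\Ldens{a}_j)$ is false. Take $d-k=1$, $\Ldens{a}_1=\Delta_z$, and $\Kgen'>0$ finite. Then $Y^\star$ is deterministic with $\tanh(Y^\star/2)=\tanh(\Kgen'/2)\tanh(z/2)$, so $Y^\star<z$ and $E[\alpha]=e^{-Y^\star/2}>e^{-z/2}=\batta(\Ldens{a}_1)$. Your heuristic that $\Delta_{\Kgen'}$ acts as a contraction in the $\tanh$ domain is correct, but it points the wrong way for the Battacharyya functional: shrinking the magnitude of a message whose sign is correct \emph{increases} $\batta$, not decreases it. So the ``sharpening'' you hope for cannot exist.

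In fact the same example (take $k=1$, $p_1=0$, so the first factor equals $1$) shows that the lemma as stated is false for the BP check rule. The paper's own proof has the analogous unjustified step: in bounding the second factor it asserts $e^{-r/2}\le\sum_j e^{-z_j/2}$, whereas the magnitude hypothesis \eqref{eqn:magbound} only yields $e^{-r/2}\le k\,e^{-\Kgen/2}+\sum_j e^{-z_j/2}$. For min-sum one has $r=\min_j z_j$ (when all $z_j<\Kgen$) and the step goes through, but for BP it does not. So your obstacle is not a defect of your route; it is a defect shared with the paper, and the statement needs either an additional hypothesis or an additive $k\,e^{-\Kgen/2}$ correction in the second factor.
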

(This holds even if $k=0$ in which case we have only the second factor.)
This generalizes a result from \cite{BRU07}.
\begin{IEEEproof}
By averaging, we see that it is sufficient to prove the lemma
for the case $\Ldens{a}_i = D(q_i,z_i).$ 
With this assumption the outgoing message is of the form
$\Ldens{b} = D(s,r)$ where
\[
s = \frac{1-(\prod_{i=1}^k (1-2p_i))(\prod_{j=1}^{d-k} (1-2q_j))}{2}\,, 
\]
and we have $r \le \min \{ \Kgen, q_1,...,q_{d-k} \}$ and
$e^{-r/2} \le k e^{-{\Kgen}/2}+\sum_{j=1}^{d-k}e^{-q_i/2}.$
We have $\batta{(\Ldens{b})} =s e^{r/2}+(1-s)e^{-r/2}.$

Define
\begin{align*}
P= \frac{1-\prod_{i=1}^k (1-2p_i)}{2},\,\,\,\,
Q=\frac{1-\prod_{j=1}^{d-k} (1-2q_j)}{2} 
\end{align*}
Then we have
\[
1-s = PQ+(1-P)(1-Q)\,.
\]
We claim the inequality
\[
\batta{(\Ldens{b})} \le
(Pe^{\Kgen}+(1-P))(Qe^{r/2}+(1-Q)e^{-r/2})\,.
\]
The claim follows from collecting terms and noting
$e^{\Kgen} e^{r/2} \ge e^{-r/2},$ which is obvious,
and
$e^{\Kgen} e^{-r/2} \ge e^{r/2},$ which follows from $\Kvar \ge r.$

We now apply Lemma \ref{lem:pprodineq} to the left factor to obtain
\begin{align*}
Pe^{\Kgen}&+(1-P)  = 1+P(e^{\Kgen}-1) \\
&\le
1+(\sum_{i=1}^k p_i)(e^{\Kgen}-1) \\
&=
1+\sum_{i=1}^k \bigl( e^{{\Kgen}/2}(p_i e^{{\Kgen}/2}+(1-p_i) e^{-{\Kgen}/2}) -1\bigl) \\
&=
1+\sum_{i=1}^k (e^{{\Kgen}/2} \batta{(D_i)}-1)\,.
\end{align*}

Using $q_j \le r$ and $\sum_{j=1}^{d-k} e^{-q_j/2} \ge e^{-r/2}$
and applying Lemma \ref{lem:pprodineq} to the right factor we obtain
\begin{align*}
Qe^{r/2}+(1-Q)e^{-r/2} & = e^{-r/2}+Q(2\sinh (r/2)) \\
&\le
 e^{-r/2}+(\sum_{j=1}^{d-k} q_j)(2\sinh (r/2)) \\
&\le
\sum_{j=1}^{d-k} e^{-q_j/2}
+\sum_{j=1}^{d-k} q_j (2\sinh (q_j/2)) \\
&=
\sum_{i=1}^{d-k}  \batta{(\Ldens{a}_j)}\,.
\end{align*}
\end{IEEEproof}

\end{appendices}

\bibliographystyle{IEEEtran}
\bibliography{lth,lthpub,extras}

\newcommand{\SortNoop}[1]{}
\begin{thebibliography}{10}
\providecommand{\url}[1]{#1}
\csname url@rmstyle\endcsname
\providecommand{\newblock}{\relax}
\providecommand{\bibinfo}[2]{#2}
\providecommand\BIBentrySTDinterwordspacing{\spaceskip=0pt\relax}
\providecommand\BIBentryALTinterwordstretchfactor{4}
\providecommand\BIBentryALTinterwordspacing{\spaceskip=\fontdimen2\font plus
\BIBentryALTinterwordstretchfactor\fontdimen3\font minus
  \fontdimen4\font\relax}
\providecommand\BIBforeignlanguage[2]{{%
\expandafter\ifx\csname l@#1\endcsname\relax
\typeout{** WARNING: IEEEtran.bst: No hyphenation pattern has been}%
\typeout{** loaded for the language `#1'. Using the pattern for}%
\typeout{** the default language instead.}%
\else
\language=\csname l@#1\endcsname
\fi
#2}}

\bibitem{RiU08}
T.~Richardson and R.~Urbanke, \emph{Modern Coding Theory}.\hskip 1em plus 0.5em
  minus 0.4em\relax Cambridge University Press, 2008.

\bibitem{6290251}
X.~Zhang and P.~Siegel, ``Will the real error floor please stand up?'' in
  \emph{Signal Processing and Communications (SPCOM), 2012 International
  Conference on}, 2012, pp. 1--5.

\bibitem{6120169}
B.~Butler and P.~Siegel, ``Error floor approximation for ldpc codes in the awgn
  channel,'' in \emph{Communication, Control, and Computing (Allerton), 2011
  49th Annual Allerton Conference on}, 2011, pp. 204--211.

\bibitem{5485006}
C.~Schlegel and S.~Zhang, ``On the dynamics of the error floor behavior in
  (regular) ldpc codes,'' \emph{Information Theory, IEEE Transactions on},
  vol.~56, no.~7, pp. 3248--3264, 2010.

\bibitem{6567866}
S.~Zhang and C.~Schlegel, ``Controlling the error floor in ldpc decoding,''
  \emph{Communications, IEEE Transactions on}, vol.~61, no.~9, pp. 3566--3575,
  2013.

\bibitem{6284049}
X.~Zhang and P.~Siegel, ``Quantized min-sum decoders with low error floor for
  ldpc codes,'' in \emph{Information Theory Proceedings (ISIT), 2012 IEEE
  International Symposium on}, 2012, pp. 2871--2875.

\bibitem{6685976}
------, ``Quantized iterative message passing decoders with low error floor for
  ldpc codes,'' pp. 1--14, 2013.

\bibitem{VNC14}
B.~Vasic, D.~V. Nguyen, and S.~K. Chilappagari, ``Failures and error-floors of
  iterative decoders,'' \emph{Channel Coding: Theory, Algorithms, and
  Applications, Academic Press Library in Mobile and Wireless, Communications,
  Elsevier, New York}, 2014.

\bibitem{6134417}
J.~Wang, T.~Courtade, H.~Shankar, and R.~Wesel, ``Soft information for ldpc
  decoding in flash: Mutual-information optimized quantization,'' in
  \emph{Global Telecommunications Conference (GLOBECOM 2011), 2011 IEEE}, 2011,
  pp. 1--6.

\bibitem{5624882}
N.~Kanistras, I.~Tsatsaragkos, I.~Paraskevakos, A.~Mahdi, and V.~Paliouras,
  ``Impact of llr saturation and quantization on ldpc min-sum decoders,'' in
  \emph{Signal Processing Systems (SIPS), 2010 IEEE Workshop on}, 2010, pp.
  410--415.

\bibitem{6363268}
N.~Kanistras, I.~Tsatsaragkos, and V.~Paliouras, ``Propagation of llr
  saturation and quantization error in ldpc min-sum iterative decoding,'' in
  \emph{Signal Processing Systems (SiPS), 2012 IEEE Workshop on}, 2012, pp.
  276--281.

\bibitem{5205826}
Y.~Wu, L.~Davis, and R.~Calderbank, ``On the capacity of the discrete-time
  channel with uniform output quantization,'' in \emph{Information Theory,
  2009. ISIT 2009. IEEE International Symposium on}, 2009, pp. 2194--2198.

\bibitem{KRU12}
S.~Kudekar, T.~Richardson, and R.~L. Urbanke, ``Wave-like solutions of general
  one-dimensional spatially coupled systems,'' \emph{CoRR}, vol. abs/1208.5273,
  2012.

\bibitem{RiU01}
T.~Richardson and R.~Urbanke, ``The capacity of low-density parity check codes
  under message-passing decoding,'' \emph{IEEE Trans. Inform. Theory}, vol.~47,
  no.~2, pp. 599--618, Feb. 2001.

\bibitem{HaL69}
G.~Hanoch and H.~Levy, ``The efficiency analysis of choices involving risk,''
  \emph{The Review of Economic Studies}, vol.~36, pp. 335--346, 1969.

\bibitem{KRU11Wasserstein}
S.~Kudekar, T.~Richardson, and R.~Urbanke, ``Existence and {U}niqueness of
  {GEXIT} curves via the {W}asserstein {M}etric,'' in \emph{Proc. of the IEEE
  Inform. Theory Workshop}, Paraty, Brazil, 2011.

\bibitem{Villani09}
C.~Villani, \emph{Optimal transport, Old and New}.\hskip 1em plus 0.5em minus
  0.4em\relax Springer, 2009, vol. 338.

\bibitem{KRU12b}
S.~Kudekar, T.~Richardson, and R.~L. Urbanke, ``Spatially coupled ensembles
  universally achieve capacity under belief propagation,'' \emph{CoRR}, vol.
  abs/1201.2999, 2012.

\bibitem{1523291}
H.~Jin and T.~Richardson, ``Block error iterative decoding capacity for ldpc
  codes,'' in \emph{Information Theory, 2005. ISIT 2005. Proceedings.
  International Symposium on}, Sept 2005, pp. 52--56.

\bibitem{1522644}
M.~Lentmaier, D.~Truhachev, K.~Zigangirov, and D.~Costello, ``An analysis of
  the block error probability performance of iterative decoding,''
  \emph{Information Theory, IEEE Transactions on}, vol.~51, no.~11, pp.
  3834--3855, Nov 2005.

\bibitem{BRU07}
K.~Bhattad, V.~Rathi, and R.~Urbanke, ``Degree optimization and stability
  condition for the min-sum decoderl,'' in \emph{Proc. of the IEEE Inform.
  Theory Workshop}, 2007, conference, pp. 190--195.

\end{thebibliography}

\end{document}